\newcommand{\remove}[1]{}
\newtheorem{thm}{Theorem}[section]
\newtheorem{claim}[thm]{Claim}
\newtheorem{lem}[thm]{Lemma}
\newtheorem{define}[thm]{Definition}
\newtheorem{conjecture}{Conjecture}
\newtheorem{THM}{Theorem}
\newtheorem{COR}[THM]{Corollary}
\newtheorem{fact}[thm]{Fact}
\newcommand{\comment}[1]{}
\renewcommand{\remove}[1]{}
\newcommand{\eps}{{\varepsilon}}
\newcommand{\cp}{\textnormal{cp}}
\newcommand{\comments}[1]{}
\newcommand{\rank}{\textnormal{rank}}
\newcommand{\colrank}{\textnormal{colrank}}
\newcommand{\spana}{\textnormal{span}}
\newcommand{\Spec}{\textnormal{Spec}}
\newcommand{\colspan}{\textnormal{colspan}}
\def\F{{\mathbb{F}}}
\newcommand{\Z}{\mathbb{Z}}
\newcommand{\R}{\mathbb{R}}
\newcommand{\N}{\mathbb{N}}
\newcommand{\E}{\mathbb{E}}
\newcommand{\C}{\mathcal{C}}
\newcommand{\U}{\mathcal{U}}
\newcommand{\MV}{\mathbf{MV}}
\newcommand{\al}{\alpha}
\renewcommand{\Pr}{\mathbf{Pr}}
\newcommand{\ip}[2]{\langle #1,#2 \rangle}
\def\draft{1}   
    \def\ShowAuthNotes{1}
    \def\ShowAuthNotes{0}
\newcommand{\authnote}[2]{{ \footnotesize \bf{\color{red}[#1's Note: {\color{blue}#2}]}}}
\newcommand{\authnote}[2]{}
\newcommand{\bnote}[1]{{\authnote{Abhishek} {#1}}}
\newcommand{\znote}[1]{{\authnote{Zeev} {#1}}}
\newcommand{\snote}[1]{{\authnote{Shachar} {#1}}}
\begin{document}
\title{New Bounds for Matching Vector Families}

\author{
Abhishek Bhowmick\thanks{Department of Computer Science. University of Texas at Austin. Email: \texttt{bhowmick@cs.utexas.edu}. Research done while the author was a visiting student at Princeton University. Research supported by NSF grant CCF-0916160.}
\and
Zeev Dvir\thanks{Department of Computer Science and Department of Mathematics, Princeton University, Princeton NJ.
Email: \texttt{zeev.dvir@gmail.com}. Research partially
supported by NSF grant CCF-0832797 and by the Packard fellowship.}
\and
Shachar Lovett
\thanks{School of Mathematics, Institute for Advanced Study, Princeton, NJ. Email:
\texttt{slovett@math.ias.edu}. Research supported by NSF grant DMS-0835373.}\\
}

\date{} \maketitle
\thispagestyle{empty}
\begin{abstract}


A Matching Vector (MV) family modulo $m$ is a pair of ordered lists $U=(u_1,\ldots,u_t)$ and $V=(v_1,\ldots,v_t)$ 
where $u_i,v_j \in \Z_m^n$ with the following inner product pattern: for any $i$, $\langle u_i,v_i\rangle=0$, and 
for any $i \ne j$, $\langle u_i,v_j\rangle \ne 0$. A MV family is called $q$-restricted if inner products $\langle u_i,v_j\rangle$ take 
at most $q$ different values. 

Our interest in MV families stems from their recent application in the construction of sub-exponential locally 
decodable codes (LDCs). There, $q$-restricted MV families are used to construct LDCs with $q$ queries, and there is special
interest in the regime where $q$ is constant. When $m$ is a prime it is known that such constructions yield codes
with exponential block length. However, for composite $m$ the behaviour is dramatically different. 
A recent work by Efremenko \cite{Efremenko} (based on an approach initiated by Yekhanin \cite{Y_nice}) gives the first sub-exponential
LDC with constant queries. It is based on a construction of a MV family of super-polynomial size by Grolmusz \cite{Grolmusz}
modulo composite $m$.

In this work, we prove two lower bounds on the block length of LDCs which are based on black box construction using MV families. When $q$ is constant 
(or sufficiently small), we prove that such LDCs must have a quadratic block length. When the modulus $m$ is constant 
(as it is in the construction of Efremenko \cite{Efremenko}) we prove a super-polynomial lower bound on the block-length of the LDCs,
assuming a well-known conjecture in additive combinatorics, the polynomial Freiman-Ruzsa conjecture over $\Z_m$.

\end{abstract}

\newpage

\section{Introduction}

A Matching Vector Family (MV Family) is a combinatorial object that arises in several contexts including Ramsey graphs, weak representation of OR polynomials  and recently in constant query locally decodable codes (LDCs). It is defined by two ordered lists $U=\left(u_1, \cdots u_t\right)$ and $V=\left(v_1, \cdots v_t\right)$ where $u_i, v_j \in \Z_m^n$ and $m$ and $n$ are integers greater than $1$. The property that the two lists have to satisfy is the following: for all $i\in [t]$, $\langle u_i, v_i \rangle = 0\ (mod\ m)$  whereas for all $i \neq j\in [t]$, $\langle u_i, v_j \rangle \neq 0\ (mod\ m)$. By $\langle \cdot , \cdot \rangle$ we denote the standard inner product. Let us call this the standard definition of a MV family. If in addition, all the inner products $\langle u_i, v_j \rangle\ (mod\ m)$ lie in a set of size $q$, then it is called a $q-restricted$ MV family. Note that $q=m$ corresponds to the standard MV family. The size of the MV family is $t$, the number of vectors in the list.
In this paper, we shall prove upper bounds on $q-restricted$ MV families in the first part and on standard MV families in the later part.

Let $\MV(m,n)$ denote the largest $t$ such that there exists a MV family of size $t$ in $\Z_m^n$. Analogously, let $\MV(m,n,q)$ denote the largest $t$ such that there exists a $q-restricted$ MV family of size $t$ in $\Z_m^n$. The question of bounding $\MV(m,n)$ (or $\MV(m,n,q)$) is closely related to the well-known combinatorial problem of set systems with restricted modular intersections \cite{BF,Sgall,Grolmusz,Grolmusz_2} (in this setting the vectors $u_i,v_i$ are required to have entries that are either $0$ or $1$). The systematic study of this more general problem, in the context of MV codes, was initiated in \cite{DGY}. The setting of prime $m$ is well understood. For large prime $m=p$, it is known that $\MV(p,n)=O\left(p^{n/2}\right)$ \cite{DGY}. Infact, this is almost tight. When $m$ is a small prime, again we have a tight upper bound of  $O\left(n^{p-1}\right)$ \cite{BF}. Surprisingly, the setting of small composite $m$ leads to very useful constructions of Ramsey graphs and constant query LDCs. This is due to a construction of MV family over $Z_6$ by Grolmusz \cite{Grolmusz} of superpolynomial size in contrast to a polynomial upper bound when $m$ is a small prime. Thus, it is interesting to study the behavior of MV families for small composite $m$, and more generally arbitrary general composites. We will see later the connection between upper bounds on $\MV(m,n,q)$ and lower bounds on the encoding lengths of MV Codes (a family of LDCs).  For general $m$, the best upper bound known \cite{DGY} is $\MV(m,n) \leq m^{n-1 + o_m(1)},$ with $o_m(1)$ denoting a function that goes to zero when $m$ grows. It was conjectured in \cite{DGY} that an upper bound of $ \sim m^{n/2}$ should hold for any $m$ (not just prime). This would be tight for large $m$ as there are constructions of MV families almost meeting this bound \cite{YGH12}. However, the proof method used in \cite{DGY} to prove the $O\left(p^{n/2}\right)$ bound does not extend to non primes. In this work, we prove the conjecture for $q-restricted$ MV families in $\Z_m^n$, for any $m$ as long as $q= \frac{o(n)\log m}{\log \left(o(n)\log m\right)}$ (See Theorem \ref{thm-thm1}).
When $m=p$ is a fixed small prime, it follows from \cite{BF} that $\MV(p,n) = O\left(n^{p-1}\right)$. On the other hand, when $m$ is a fixed composite, say $m=6$, there exists a MV family of superpolynomial size $\Omega\left(\exp\left(\log^2n /\log \log n\right)\right)$) \cite{Grolmusz}. We prove a stronger upper bound on $\MV(m,n)$, compared to Theorem \ref{thm-thm1} in such a case assuming a well known conjecture in additive combinatorics (see Theorem \ref{thm-thm2}). Table \ref{table} lists the known and new upper bounds on MV families.

\begin{table}[h]
\centering

\begin{tabular}{|c|c|}
\hline
m&$\MV(m,n)$ or $\MV(m,n,q)$\\ \hline
general prime & $\MV(m,n)\leq O\left(m^{n/2}\right)$ \cite{DGY}\\ \hline
general composite & $\MV(m,n,q) \leq q^{O(q\log q)}m^{n/2}$ (Theorem \ref{thm-thm1})\\ \hline
small, fixed prime & $\MV(m,n) \leq O\left(n^{m-1}\right)$ \cite{BF}\\ \hline
small, fixed composite & $\MV(m,n) \leq 2^{O_m\left(n/\log n\right)}$ (Theorem \ref{thm-thm2}  under Conjecture \ref{pfr}) \\ \hline
\end{tabular}
\label{table}
\caption{List of upper bounds on $\MV(m,n)$, $\MV(m,n,q)$}
\end{table}


\begin{THM}\label{thm-thm1}
For all $m\ge 2, n \ge 1$ we have $$ \MV(m,n,q) \leq q^{O(q\log q)}m^{n/2} $$
\end{THM}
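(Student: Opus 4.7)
The plan is a Fourier rank argument over $\mathbb{C}$. Let $S\subseteq\Z_m$ be the set of values taken by the inner products $\langle u_i,v_j\rangle$, so $|S|=q$ and $0\in S$; set $\omega=e^{2\pi i/m}$. The characters $\chi_a(z)=\omega^{az}$ for $a\in\Z_m$ span $\mathbb{C}^{\Z_m}$, so a suitable subset $A\subseteq\Z_m$ of size $q$ makes $\{\chi_a|_S\}_{a\in A}$ a basis of $\mathbb{C}^S$. Interpolating the indicator $g\colon S\to\mathbb{C}$ with $g(0)=1$ and $g(s)=0$ for $s\in S\setminus\{0\}$ through this basis yields coefficients $c_a\in\mathbb{C}$ with $g(z)=\sum_{a\in A}c_a\omega^{az}$ on $S$; a Vandermonde-type estimate gives $|c_a|\le q^{O(q)}$.

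Plugging in $z=\langle u_i,v_j\rangle$ gives the matrix identity
\[
I_t \;=\; \sum_{a\in A} c_a\, N^{(a)},
\qquad N^{(a)}_{ij} := \omega^{a\langle u_i,v_j\rangle},
\]
so $t=\rank_\mathbb{C}(I_t)\le\sum_{a\in A}\rank_\mathbb{C}(N^{(a)})$. To bound each summand, split $n=n_1+n_2$ with $n_1=\lceil n/2\rceil$, $n_2=\lfloor n/2\rfloor$, and write $u_i=(u_i^{(1)},u_i^{(2)})$, $v_j=(v_j^{(1)},v_j^{(2)})$. Then $N^{(a)}$ factors as a Hadamard product $N^{(a,1)}\circ N^{(a,2)}$ with $N^{(a,r)}_{ij}=\omega^{a\langle u_i^{(r)},v_j^{(r)}\rangle}$, and each factor has rank at most $m^{n/2}$ since its rows are characters of $\Z_m^{n/2}$.

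The main obstacle is that the generic Hadamard rank inequality only gives $\rank(N^{(a)})\le m^{n/2}\cdot m^{n/2}=m^n$, losing the crucial factor of $m^{n/2}$ compared to the target. To recover it, I would avoid bounding each $\rank(N^{(a)})$ separately and instead exhibit a single joint factorization $I_t=F\,G$ with $F\in\mathbb{C}^{t\times D}$, $G\in\mathbb{C}^{D\times t}$, and $D\le q^{O(q\log q)}m^{n/2}$. A natural template is $G_{w,j}=\omega^{\langle w,v_j\rangle}$ for $w$ ranging over a carefully chosen $W\subseteq\Z_m^n$ with $|W|\le q^{O(q\log q)}m^{n/2}$, and $F_{i,w}=\sum_{a\in A}c_a\mathbf{1}[w=au_i]$, which is sparse with rows supported on the orbits $\{au_i:a\in A\}\cap W$ so that the identity $\sum_w F_{i,w}G_{w,j}=g(\langle u_i,v_j\rangle)$ still holds. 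Producing such a $W$ — e.g.\ via a random coset of a subgroup of $\Z_m^n$ of index $\approx m^{n/2}$ that captures every relevant orbit, or by iterating the interpolation over $O(\log q)$ refinement rounds paying a $q^{O(q)}$ factor per round — is the technical heart and is precisely where the $q^{O(q\log q)}$ overhead in the bound comes from; it is also the step I expect to be the main difficulty.
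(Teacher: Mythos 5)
Your opening reduction is sound: since the inner products lie in a set $S$ of size $q$ containing $0$, you can interpolate the delta function $g$ at $0$ on $S$ by $q$ additive characters of $\Z_m$ and obtain the matrix identity $I_t=\sum_{a\in A}c_a N^{(a)}$ with $|c_a|\le q^{O(q)}$. But from there the argument does not reach $m^{n/2}$, and you correctly identify the obstruction yourself. The Hadamard split only yields $\rank(N^{(a)})\le\rank(N^{(a,1)})\rank(N^{(a,2)})\le m^{\lceil n/2\rceil}\cdot m^{\lfloor n/2\rfloor}=m^n$, so the subadditivity bound is $t\le q^{O(q)}m^n$, which is the natural limit of pure rank/interpolation arguments (for prime $m=p$ this family of arguments produces the $\binom{n+p-2}{p-1}$-type bound, not $p^{n/2}$). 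The ``joint factorization'' you sketch to rescue the exponent is circular: in order for $\sum_w F_{i,w}G_{w,j}=g(\langle u_i,v_j\rangle)$ to hold you need $W$ to contain every $a u_i$ with $a\in A,\,i\in[t]$. Since $(U,V)$ is twin-free and (after a cheap pruning) $a\mapsto a u_i$ is essentially injective, this forces $|W|=\Omega(t/q)$; bounding $|W|$ by $q^{O(q\log q)}m^{n/2}$ is therefore equivalent to the statement you are trying to prove. The ``random coset of index $m^{n/2}$'' idea does not repair this: a fixed coset of a subgroup of index $m^{n/2}$ cannot cover the $\approx qt$ prescribed points $\{au_i\}$ unless $t$ was already small.

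The $m^{n/2}$ saving in the paper does not come from a rank decomposition at all. It comes from an extractor-style collision-probability argument. One observes that the distribution of $\langle u,v\rangle\bmod m$ for $u\sim U,\,v\sim V$ is $\Omega(1/m)$-far from uniform (the value $0$ occurs with probability only $1/t$), extracts a large Fourier coefficient by Lemma~\ref{lem-stat}, amplifies it with two applications of Cauchy--Schwarz, and then applies Lemma~\ref{lem-cp} to conclude that the product of collision probabilities of the reduced lists $U',V'$ modulo some $s\mid m$ is at least $\epsilon^8/s^n$. The square root appears exactly here: one of $\cp(\mu^{U'}),\cp(\mu^{V'})$ is at least $\epsilon^4/s^{n/2}$, which produces a ``bucket'' of $\gtrsim t\,\epsilon^4/s^{n/2}$ vectors that agree modulo $s$. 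This bucket is again an MV family with a stronger constancy property (Lemma~\ref{lem-maincs}), and an induction on the factor $r_1r_2$ of $m$ that has been ``fixed'' so far (Lemma~\ref{lem-help}) finishes the proof, paying $q^{O(1)}$ per step and using the restricted-inner-product hypothesis both to control the number of steps (Claim~\ref{wlog}) and to guarantee an $\omega'$ of large order (Claim~\ref{highorder}). To make progress you would need to replace the rank inequality by a collision/Cauchy--Schwarz step of this kind, or find a genuinely different mechanism that extracts a square-root gain; the character-interpolation identity by itself cannot do it.
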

Hence, Theorem~\ref{thm-thm1} resolves the conjecture of \cite{DGY} for any $m$ and for $q = \frac{o(n)\log m}{\log \left(o(n)\log m\right)}$. When $m >> n$, our bound is quite close to the best known construction of MV families which gives  $\MV(m,n) \geq \left( \frac{m+1}{n-2} \right)^{n/2 - 1}$~\cite{YGH12}.

Our second result assumes the polynomial Freiman-Ruzsa conjecture (PFR) conjecture (discussed below) and gives a stronger upper bound on the size of MV families when $m$ is a constant and $n$ grows.

Before we state the conjecture, we need to define what a difference set is. For an abelian group $G$ let $A \subseteq G$. Then the difference set \[A-A=\{a_1-a_2:a_1,a_2 \in A\}\]

\begin{conjecture}[PFR Conjecture in $\Z_m^n$]\label{pfr} Suppose $A \subseteq \Z_m^n$ and $|A-A|\leq \lambda\cdot|A|$. Then one can find a subgroup $H$ of size at most $|A|$ such that $A$ can be covered by $\lambda' = \lambda^{c_m}$ many translates of $H$, where $c_m$ depends only on $m$.
\end{conjecture}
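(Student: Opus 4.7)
The plan is to attack Conjecture~\ref{pfr} by reducing to the prime-power case via the Chinese Remainder Theorem and then invoking the Bogolyubov--Ruzsa--Sanders structure theory from additive combinatorics. Writing $m = \prod_i p_i^{e_i}$, we have $\Z_m^n \cong \bigoplus_i \Z_{p_i^{e_i}}^n$. A set $A \subseteq \Z_m^n$ with $|A-A| \le \lambda |A|$ projects in each factor to a set $A_i$ of doubling at most $\lambda$. If PFR holds in every prime-power factor with exponent $c_{p_i,e_i}$, the coverings can be combined by taking Cartesian products of the covering subgroups, and $c_m$ emerges as (at worst) the maximum of the $c_{p_i,e_i}$ times the number of distinct prime divisors of $m$.

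For a fixed prime power $p^e$, I would try to adapt the Bogolyubov--Ruzsa--Sanders machinery. First, Pl\"unnecke--Ruzsa gives control on the iterated sumsets $|kA - \ell A|$ in terms of $\lambda$. Next, a Fourier-analytic argument on $\Z_{p^e}^n$---whose characters take values in the $p^e$-th roots of unity---shows that $2A - 2A$ contains a coset of a large subgroup $H$, using a Bohr-set construction analogous to Sanders' approach over $\F_p^n$. Finally, a Ruzsa covering lemma covers $A$ itself by few translates of $H$, and passing to a subgroup of size at most $|A|$ inside $H$ (at the cost of a few extra translates) yields the desired statement.

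The hard part, and the reason the statement is conjectural rather than a theorem, is the quantitative dependence on $\lambda$. Sanders' quasi-polynomial Bogolyubov theorem together with the Croot--Sisask almost-periodicity results yield at best $\lambda' \le \exp(\polylog \lambda)$, one logarithm short of the polynomial bound demanded by PFR. Getting down to a true polynomial exponent is precisely the content of the conjecture and a major open problem in additive combinatorics. The most promising route is an entropy-compression approach: work with random variables rather than sets and exploit submodularity of entropy to assemble the target subgroup greedily. Adapting entropy inequalities tailored to $\F_2^n$ (where the identity $x+x=0$ simplifies several steps) to $\Z_{p^e}^n$ is the principal technical obstacle, and a genuine proof would require new ideas of exactly that flavor---which is why, in the present paper, we will use the conjecture as a black box rather than attempt to prove it.
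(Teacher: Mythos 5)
This statement is labeled a conjecture, and the paper does not prove it: the PFR conjecture over $\Z_m^n$ is a well-known open problem in additive combinatorics, and Theorems~\ref{thm-thm2}, \ref{main} are conditional on it. So there is no proof in the paper for your proposal to be compared against, and you correctly recognize this yourself in your closing paragraph. Your write-up is an honest survey of the state of the art (Pl\"unnecke--Ruzsa, Bogolyubov-type Bohr-set arguments, Croot--Sisask almost-periodicity, Sanders' quasi-polynomial bound) rather than a proof, and that is the appropriate thing to say about a conjecture being used as a black box.

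One technical caveat in your outline: the Chinese-Remainder-Theorem reduction as stated does not go through automatically. If $A\subseteq\bigoplus_i\Z_{p_i^{e_i}}^n$ has $|A-A|\le\lambda|A|$, the projection $A_i$ to the $i$-th factor satisfies $|A_i-A_i|\le|A-A|\le\lambda|A|$, but $|A_i|$ may be much smaller than $|A|$, so the doubling constant $|A_i-A_i|/|A_i|$ of $A_i$ need not be bounded by $\lambda$. Reducing PFR to the prime-power (or cyclic) case requires a more careful fibering argument, not just projection; this is one more reason the reduction is nontrivial. Since the paper only assumes the conjecture, none of this affects the paper's results, but it is worth noting that the reduction step in your sketch is itself not immediate.
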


\snote{added discussion on PFR in CS}

We note that the PFR conjecture has already found several applications in computer science. Ben-Sasson and Zewi \cite{SZ} used it to construct two-source extractors from affine extractors; and Ben-Sasson, Lovett and Zewi \cite{SLZ} used it to bound the deterministic communication complexity of functions whose corresponding matrix has low rank. Our work provides another application for the PFR and demonstrates its wide-reaching applicability. We further note that a quasi-polynomial version of the PFR conjecture was recently proved by Sanders \cite{Sanders} (see also the exposition in \cite{Lovett-exposition}). Unfortunately, all the applications discussed above require the truly polynomial version of the conjecture, and so cannot apply to Sanders' result.

We now state the second theorem.
\begin{THM}\label{thm-thm2}
Assuming the PFR conjecture over $\Z_m^n$ (Conjecture~\ref{pfr}) we have $$\MV(m,n) \leq \exp\left( c(m)\frac{n}{\log n} \right), $$ with $c(m)$ an explicit function of $m$.
\end{THM}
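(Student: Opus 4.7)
The plan is a PFR-based dimension-reduction argument. An MV family $(U,V)$ of size $t$ in $\Z_m^n$ is equivalent to a $t\times t$ matrix $M=UV^T$ over $\Z_m$ of rank at most $n$, with zero diagonal and nonzero off-diagonal entries. The strategy is to exploit Conjecture~\ref{pfr} to show that, after a modest loss in size, the family can be embedded into a $\Z_m$-submodule of strictly smaller rank, and then to iterate until the rank becomes too small to sustain $t$.

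The first step is to extract an approximate-duality input from the MV structure. Since for $i\neq j$ the inner product $\langle u_i,v_j\rangle$ lies in the $(m-1)$-element set $\Z_m\setminus\{0\}$, pigeonhole produces a value $c\in\Z_m\setminus\{0\}$ attained on an $\Omega(1/m)$ fraction of the off-diagonal pairs. We then invoke a PFR-based approximate-duality statement in the style of \cite{SZ,SLZ}, adapted from $\F_p^n$ to $\Z_m^n$: after a further refinement there exist index sets $I,J\subseteq[t]$ of size $\Omega_m(t)$ and cosets $H_U+a$, $H_V+b$ of subgroups $H_U,H_V\leq \Z_m^n$ such that $u_i\in H_U+a$ for $i\in I$, $v_j\in H_V+b$ for $j\in J$, every cross pair has $\langle u,v\rangle=c$, and $\log_m|H_U|+\log_m|H_V|$ is bounded in terms of $n$ with a PFR-controlled slack.

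The second step is the descent. Quotienting by the annihilator of $H_U$ (in $\Z_m$-module terms) leaves the relevant inner products unchanged, so the restriction of the MV family to $I\cap J$ (still of size $\Omega_m(t)$) projects faithfully into a $\Z_m$-module of strictly smaller rank. Iterating Steps 1--2 shaves off a nontrivial amount of rank per round while losing only an $m$-dependent constant factor in $t$. The polynomial (not merely quasi-polynomial) quality of the PFR bound is crucial here, since it alone lets us achieve a per-iteration rank gain strong enough that balancing the dimension drop against $n$ delivers $t\leq \exp\bigl(c(m)\, n/\log n\bigr)$; Sanders' quasi-polynomial bound \cite{Sanders} would yield only a weaker subexponential estimate. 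When the residual rank hits $0$, no nontrivial MV family can exist, which closes the argument.

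The main obstacle is porting the approximate duality from $\F_p^n$, where the natural output is an affine subspace and all subgroups are subspaces of known dimension, to $\Z_m^n$ with composite $m$, where one must work with cosets of subgroups and with annihilators in the $\Z_m$-module sense. A secondary challenge is the quantitative bookkeeping: one must carefully track both the multiplicative loss in $t$ and the additive rank gain across iterations to match the $n/\log n$ shape in the exponent rather than a cruder bound of the form $\exp(O_m(\sqrt{n}))$ or $\exp(O_m(n))$.
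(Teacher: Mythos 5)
Your high-level plan is in the right family: encode the MV family as a low-rank matrix $P_{U,V}$ with zero diagonal and nonzero off-diagonal entries, exploit its bias, apply a PFR-flavored structural lemma to find a large monochromatic rectangle, shrink and iterate. That is indeed the skeleton of the paper's argument. But several of the concrete steps are either wrong or skip the main composite-modulus difficulty.

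First, and most importantly, you are not tracking \emph{modulo which factor of $m$} the structure lives. Over $\Z_m$ with $m$ composite, the Fourier/PFR machinery does not produce a monochromatic rectangle modulo $m$; it produces one modulo some divisor $s\mid m$, namely the order of the character on which the bias is detected. This is not a cosmetic detail --- it is \emph{the} reason the composite case is harder than the prime case. The paper's entire bookkeeping (the invariants in Claim~\ref{clm-allzero}, the map $s\mapsto \colrank(P^{(s)})$, the per-label iteration count) is designed to handle the fact that different iterations may act on different factors $s$. Your descent, as written, behaves as if the subgroup structure were always modulo $m$, and the closing step ``quotient by the annihilator of $H_U$ to reduce the rank'' does not match what actually happens: the paper extracts a large sub-family whose inner products are all $\equiv 0 \pmod{s}$ for some $s\geq 2$, then divides by $s$ and reduces to an MV family modulo $m/s$ (Lemma~\ref{lem-colr}), and finally inducts on the number of prime factors of $m$ --- the termination is not ``rank hits $0$'' but ``$m$ becomes prime, where the $\binom{n+p-2}{p-1}$ bound applies.''

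Second, your quantitative claim is incorrect: you assert the iteration ``loses only an $m$-dependent constant factor in $t$'' per round. If that were true, then with $O(\log n)$ rounds one would get a polynomial bound on $t$, which is stronger than what the theorem (or any theorem) can give for composite $m$ --- Grolmusz's construction gives super-polynomial MV families over $\Z_6$. The actual per-round loss is $\exp\bigl(-c(m)\,r/\log r\bigr)$ where $r$ is the \emph{current} rank modulo the chosen $s$, and the $n/\log n$ in the exponent emerges only because the column rank modulo a fixed $s$ decays geometrically across the rounds labeled by $s$ (Claim~\ref{clm-iterate}), so the total loss telescopes to $\exp\bigl(-O_m(n/\log n)\bigr)$ via the summation estimate in Claim~\ref{clm-sum}. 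You also omit two technical ingredients the paper needs: the collision-free normalization (so that reducing modulo $s$ does not merge distinct vectors, Lemma~\ref{lem-nocollision}), and the distinction between $\rank$ and $\colrank$ over $\Z_m$ (Claims~\ref{clm-rankcolrank}--\ref{clm-coltriangular}), which is what makes the ``split into two blocks, one of which has half the rank'' step legitimate in the absence of a field structure. Without these, the descent step you describe does not go through.
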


\snote{changed next paragraph}

From a technical point of view, one of the ingredients in this work builds on the recent work of Ben-Sasson, Lovett and Zewi \cite{SLZ} who used the PFR conjecture to show that matrices over $\Z_2$ with large bias (say, with many more ones than zeros) and small rank must contain a large monochromatic sub-matrix. An important ingredient in our proof is a generalization of their results from $\Z_2$ to $\Z_m$ for all $m$, not necessarily prime. We note however that this is just one ingredient in our overall proof.

\subsection{Lower Bounds on LDCs: Motivation for MV Family}
Locally Decodable Codes (LDCs) are a special kind of Error Correcting Codes (ECCs) that allow the receiver  to decode a {\em single} symbol of the message by querying a small number of positions in a corrupted encoding. More formally, an $(q,\delta,\epsilon)$-LDC encodes $K$-symbol messages $x$ to $N$-symbol codewords $C(x),$ such that for every $i\in [K],$ the symbol $x_i$ can be recovered with probability $1-\epsilon,$ by a randomized decoding procedure that makes only $q$ queries, even if the codeword $C(x)$ is corrupted in up to $\delta N$ locations. Since the early 90's, LDC's have found exciting applications in various areas ranging from data transmission to  complexity theory to cryptography/privacy. We refer the reader to ~\cite{T_survey,Y_now} for more background.

A central research question, which is far from being solved,  has to do with understanding the best possible `stretch' of an LDC with a constant number of queries. That is, how large $N$ has to be as a function of $K$ for constant $q$ and with constant $\delta,\eps$ (these two last parameters are not our focus here and we will generally assume them to be small fixed constants). For $q=1,2$ this question is completely answered. There are no LDC's for $r=1$ \cite{KT} and the best LDC's with $q=2$ have exponential encoding length \cite{GKST,KdW}. For $q > 2$ there are huge gaps in our understanding. Katz and Trevisan were the first to study this problem \cite{KT} and, today, the best general lower bounds on $N$ are slightly super-linear bounds of the form $\tilde\Omega\left(K^{1+1/\left(\lceil r/2\rceil -1\right)}\right)$~\cite{Woodruff}. Notice that, when the number of queries is 3 or 4, these bounds are quadratic (see also \cite{KdW,Wood10} for the $q=3,4$ case). The upper bounds were, until recently, those coming from polynomial codes and were of the order of $N \leq \exp\left(K^{\frac{1}{q-1}}\right)$. Improved upper bounds, breaking this barrier slightly, were given in \cite{BIKR}.


This state of affairs changed dramatically when, in a breakthrough paper, Yekhanin \cite{Y_nice} developed a new approach for constructing LDCs, called MV codes, that have much shorter codeword length than polynomial codes. Efremenko \cite{Efremenko} was the first to show that this approach could yield codes with subexponential encoding length (Yekhanin's paper showed this under a number theoretic assumption). More refinements and improvements to this new framework were obtained \cite{Ragh,KY,Itoh_Suzuki,MFLWZ,DGY,BET} to give LDC's with $q$ queries and with encoding length that grows, when $q$ is a constant, roughly like $$N \sim \exp\exp\left((\log K)^{O(1/\log q)}(\log\log K)^{1-1/\log q}\right).$$ 

While significantly smaller than the length of polynomial codes, the codeword length of these new codes is still super polynomial in $K$. The most general setting of parameters was addressed in \cite{DGY} where the authors had given a black box construction of $q$ query MV codes using $q-restricted$ MV families in $\Z_m^n$. 
Using the standard definition of MV families, this implied $m$ query MV codes using MV families in $Z_m^n$. In this basic, yet general reduction, it was shown that upper bounds on MV families would lead to lower bounds on the encoding length of MV codes. With this motivation in mind, the authors in \cite{DGY} made a conjecture on the upper bound on the size of MV families which would lead to lower bounds on the encoding length of MV codes under the basic framework. We note that Yekhanin in \cite{Y_nice} used restricted MV families in $\Z_p^n$ where $p$ is a very large Mersenne prime and used a specialized technique to reduce the number of queries from $p$ to $3$. Another instance of reduction in the number of queries from what the standard construction gives, was given by Efremenko \cite{Efremenko} where he again used restricted MV families. A certain gadget was discovered using computer search whereby the author worked in $\Z_{511}$ but got down the number of queries to $3$ from the basic bound of $511$.

The following is a corollary of Theorem \ref{thm-thm1}.
\begin{COR}\label{cor-thm1}
For an arbitrary positive integer $m$, consider an infinite family of $q$-query Matching Vector code $C_n:\F^k \rightarrow \F^N$ for $n \in \N$, where $k(n)$ and $N(n)$ are growing functions of $n$, constructed using the black box reduction from a $q$-restricted Matching Vector Family in $\Z_m^n$ (\cite{DGY}). For large enough $n$, if  $q=\frac{o(n)\log m}{\log \left(o(n)\log m\right)}$, then \[ N \geq k^{2-o(1)}\]
Specifically, if $q=O(1)$, then $N = \Omega\left(k^{2}\right)$.
\end{COR}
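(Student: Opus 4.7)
}
The plan is to combine the DGY black-box reduction with our upper bound on $\MV(m,n,q)$ from Theorem~\ref{thm-thm1}, and then do a book-keeping step to convert the result into the desired $N$ versus $k$ lower bound. Recall that the DGY reduction takes a $q$-restricted MV family of size $t$ in $\Z_m^n$ and produces a $q$-query MV code $C:\F^k\to\F^N$ with $k=t$ and $N = m^n$ (up to absolute constants depending on $\delta,\eps$ but not on $m,n,q$). Thus any infinite family of codes produced by this reduction satisfies $k \le \MV(m,n,q)$ and $N = \Theta(m^n)$.

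First, I would plug the black-box bound directly into Theorem~\ref{thm-thm1}:
\[
k \;\le\; \MV(m,n,q) \;\le\; q^{O(q\log q)} \, m^{n/2} \;=\; q^{O(q\log q)} \, \sqrt{N}.
\]
Squaring and rearranging yields the core inequality
\[
N \;\ge\; \frac{k^2}{q^{O(q\log q)}}.
\]
For the specialized case $q = O(1)$ stated at the end of the corollary, the overhead $q^{O(q\log q)}$ is a constant, and we immediately obtain $N = \Omega(k^2)$.

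Second, I would handle the general regime $q = \tfrac{o(n)\log m}{\log(o(n)\log m)}$ by showing the overhead factor is $k^{o(1)}$. Taking logarithms, this reduces to verifying $q\log^2 q = o(\log k)$. Writing $q = f(n)\log m / \log(f(n)\log m)$ with $f(n) = o(n)$ and using the standard inversion of $x\mapsto x/\log x$, one gets $q \log q = \Theta(f(n)\log m)$ and therefore $q\log^2 q = \Theta(f(n)\log m \cdot \log(f(n)\log m))$. Meanwhile, from the inequality $k \le q^{O(q\log q)} m^{n/2}$ (or more simply from $N = m^n$ growing and $k \le N$), one has $\log k = \Theta(n\log m)$ in this regime, so the ratio $q\log^2 q / \log k$ is $o(1)$. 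Hence $q^{O(q\log q)} = k^{o(1)}$ and $N \ge k^{2-o(1)}$.

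The reasoning is essentially a one-line consequence of Theorem~\ref{thm-thm1}; the only non-trivial step is the asymptotic verification that the $q^{O(q\log q)}$ overhead is $k^{o(1)}$ in the stated range of $q$. I do not anticipate any real obstacle beyond carefully matching the parameters of the DGY reduction with the hypotheses of Theorem~\ref{thm-thm1}, and confirming that the standard reduction indeed yields $k = t$ and $N = \Theta(m^n)$ so that the substitution $m^{n/2} = \sqrt{N}$ is valid.
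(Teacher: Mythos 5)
Your proposal takes the same route the paper intends: plug the DGY black-box facts ($k=t\le \MV(m,n,q)$ and $N=\Theta(m^n)$) into Theorem~\ref{thm-thm1} to obtain $N\ge k^2/q^{O(q\log q)}$, then check that the overhead $q^{O(q\log q)}$ is negligible; the $q=O(1)$ case is then immediate. The one slip is the claim ``$\log k = \Theta(n\log m)$ in this regime.'' The reduction only yields the \emph{upper} bound $\log k \le O(q\log^2 q) + \tfrac{n}{2}\log m$, and neither $k\le N$ nor any other hypothesis forces $k$ to attain the maximum MV-family size, so the lower bound on $\log k$ is not justified. The cleanest fix is to measure the overhead against $\log N = n\log m$ rather than $\log k$: verifying $q\log^2 q = o(n\log m)$ gives $q^{O(q\log q)} = N^{o(1)}$, hence $N^{1+o(1)}\ge k^2$, which together with $k\le N$ rearranges to $N\ge k^{2-o(1)}$. (Alternatively, split into cases: when $k\le \sqrt{N}$ the conclusion is trivial, and when $k>\sqrt{N}$ you do get $\log k > \tfrac{1}{2}n\log m$ and your computation goes through as written.) With that adjustment the argument is complete and matches the paper's intent.
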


Next we have the following corollary from Theorem \ref{thm-thm2}.
\begin{COR}\label{cor-thm2}
For some arbitrary positive integer $m$, assume the PFR conjecture over $\Z_m^n$ (Conjecture~\ref{pfr}). Consider an infinite family of $m$-query Matching Vector code $C_n:\F_q^k \rightarrow \F_q^N$ for $n \in \N$, where $k(n)$ and $N(n)$ are growing functions of $n$, constructed using the black box reduction from a standard Matching Vector Family in $\Z_m^n$ (\cite{DGY}). For large enough $n$, if  $m=O(1)$, then \[ N = \exp\left(\Omega_m\left(\log k \ \log \log k\right)\right)\]
\end{COR}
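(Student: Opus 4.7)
The plan is to derive Corollary~\ref{cor-thm2} as a direct consequence of Theorem~\ref{thm-thm2}, using the explicit parameters of the black-box reduction of \cite{DGY} from standard matching vector families to MV codes. Recall that this reduction turns a standard MV family $U=(u_1,\ldots,u_t)$, $V=(v_1,\ldots,v_t)$ in $\Z_m^n$ into an $m$-query matching vector code with message length $k=t$ and with codeword length $N$ that is polynomially related to $m^n$; concretely, $N \le m^{O(n)}$ where the implicit constant depends only on $m$. So the first step is simply to record this fact, i.e.\ that $\log N = \Theta_m(n)$.

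Next, I would apply Theorem~\ref{thm-thm2} (conditional on Conjecture~\ref{pfr}) to bound the message length $k$ in terms of $n$. The theorem gives
\[
k \;=\; t \;\le\; \MV(m,n) \;\le\; \exp\!\left( c(m)\,\frac{n}{\log n} \right),
\]
so $\log k \le c(m)\, n / \log n$. The third step is to invert this relation to extract a lower bound on $n$ in terms of $k$. Rewriting gives $n \ge \frac{1}{c(m)} \log k \cdot \log n$. Since $k \le \exp(O_m(n))$ we trivially have $n \ge \Omega_m(\log k)$, and hence $\log n \ge \log\log k - O_m(1)$. Plugging this back into the inequality for $n$ yields
\[
n \;\ge\; \Omega_m\!\bigl(\log k \cdot \log\log k\bigr).
\]

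Finally, combining this with the parameter bound $\log N = \Theta_m(n)$ from the first step gives $\log N \ge \Omega_m(\log k \cdot \log\log k)$, i.e.\ $N = \exp\bigl(\Omega_m(\log k \,\log\log k)\bigr)$, as claimed. The only step that requires genuine care is keeping track of how the constants depend on $m$ through $c(m)$ and through the reduction of \cite{DGY}; there is no real technical obstacle here beyond this bookkeeping, since all the substantive work is done inside Theorem~\ref{thm-thm2}. For large growing $n$, the assumption $m = O(1)$ ensures that these $m$-dependent constants may be absorbed into the $\Omega_m(\cdot)$ notation, completing the argument.
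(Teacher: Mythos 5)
Your proposal is correct and is the natural argument one obtains by combining Theorem~\ref{thm-thm2} with the DGY reduction; the paper does not spell out a separate proof of this corollary, so there is no alternative route to compare against. The chain $\log k \le c(m)\,n/\log n$, followed by the crude bound $n \ge \Omega_m(\log k)$ to get $\log n \ge \log\log k - O_m(1)$, and then substitution back to obtain $n \ge \Omega_m(\log k\,\log\log k)$, is exactly the intended calculation, and the ``large enough $n$'' hypothesis is precisely what absorbs the additive $O_m(1)$ term in $\log\log k - O_m(1)$.

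One small remark on exposition: for the final transfer from a lower bound on $n$ to a lower bound on $N$ you need $\log N \ge \Omega_m(n)$, i.e.\ the \emph{lower} bound on the codeword length, whereas the only inequality you explicitly justify is the upper bound $N \le m^{O(n)}$. This is harmless because in the DGY construction $N = m^n$ (the codeword is indexed by $\Z_m^n$), so $\log N = n\log m$ and both directions hold; but since it is the lower bound that actually drives the conclusion, it would be cleaner to state $N \ge m^{\Omega_m(n)}$ (or just $N = m^n$) explicitly rather than the upper bound.
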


Thus Corollary \ref{cor-thm2} states that, assuming Conjecture \ref{pfr}, MV codes with constant number of queries must have super polynomial encoding length in the basic framework. Note that we get the same bound in Efremenko's framework for $3$ queries. This is because the form of the superpolynomial bound is assuming a constant $m$ and applying our bound to Efremenko's work again leads to a superpolynomial bound as $m=511$ in his setting (another constant). (He uses $\Z_{511}$ to construct the MV family and further reduces the number of queries to $3$.) This essentially means that in order to construct polynomial length codes, one needs to construct MV families in $\Z_m^n$ for non-constant $m$ and use some specialized gadget to reduce the number of queries. One way is to ensure it is a $q-restricted$ (constant $q$) MV family. This automatically ensures $q$ query decoding. However, the quadratic lower bound continues to hold even in this scenario for constant $q$. To beat the quadratic lower bound for constant query MV codes, one needs to construct $q-restricted$ MV families for growing $m$ and $q=\frac{o(n)\log m}{\log \left(o(n)\log m\right)}$ and then develop some special gadget to get the number of queries down further from $q$ to some constant.

\subsection{Proof Overview}

 The proof of Theorem~\ref{thm-thm1} relies on intuitions coming from the theory of two-source extractors \cite{Chor-Goldreich}, which are functions of two variables $F(X,Y)$ such that the output of $F$ is distributed in a close-to-uniform fashion whenever the two inputs are drawn, independently, from two distributions of sufficiently high entropy. Since our proof does not use two-source extractors explicitly we do not define them formally and just use them to explain the high level idea behind the proof. It is a well known fact \cite{Chor-Goldreich} that the inner product function $F(X,Y) = \ip{X}{Y}$, say over $\Z_2^n \times \Z_2^n$ is a good two source extractor when the two inputs $X$ and $Y$ are both drawn uniformly from sets $S_X,S_Y \subseteq \Z_2^n$ of size larger $2^{n/2}$. This immediately suggests a connection to MV families, since, if we take $S_X = U$ and $S_Y = V$ for a MV family $U,V$ in $\Z_2^n$, we would get a completely non-uniform output (it will be zero with exponentially small probability). This means that the size of $U,V$ is bounded from above by approximately $2^{n/2}$.
 \snote{added approximately}

 If we try   to use a similar  argument over $\Z_m$ we run into trouble since the inner product function modulo $m$ is {\em not} a good two source extractors for sources of size $m^{n/2}$. Take, for example, $S_X = S_Y=\{0,2,4\}^n \subseteq \Z_6^n$ and observe that $\ip{X}{Y}$ is always divisible by 2 and so is far from being uniformly distributed over $\Z_6$. It is, however, possible to show that this  example is, in some sense, the only example and that, in general, we can always find a certain number of elements of either $S_X$ or $S_Y$ that `agree' modulo some factor of $m$. This observation suggests proving Theorem~\ref{thm-thm1} by induction on the number of factors of $m$, which is the way we proceed.

The proof of Theorem~\ref{thm-thm2} uses a slightly different view of MV families as matrices with certain zero/non-zero pattern and small rank. Specifically, for a MV family $U,V$ of size $t$ in $\Z_m^n$ consider the $t \times t$ matrix $P$ whose $(i,j)$'th entry is $\ip{u_i}{v_j} \mod m$. The definition of a MV family implies that $P$ has zeros on the diagonal and non-zeros everywhere else. If $m$ was a prime, we could think of $\Z_m$ as a field $\F$ and say that, since $P$ is the inner product matrix of vectors of length $n$ over a field, it must have rank at most $n$. Conversely, every $t \times t$ matrix over a field $\F$ with these properties (zero on the diagonal and non-zero off the diagonal) and with rank $n$  gives a MV family of size $t$ in $\F^n$. \znote{this part of the discussion is for fields. } We can call  a matrix with this pattern of zeros/non-zeros an {\em MV matrix}. Thus, when $m$ is prime, the question of bounding the size of a MV family is the same as lower bounding the rank of a MV-matrix\footnote{For technical reasons, the actual proof will not be entirely using matrices and will keep the MV family in the background. This is because we need to keep certain invariants throughout the proof and these are easier to define for families of vectors than for matrices.}. When $m$ is composite, this whole approach should be re-examined since $\Z_m$ is no longer a field and our familiar understanding of matrices and linear algebra over a field are no longer valid.  We do, however, manage to carry over this correspondence between the two problems by defining the notion of rank in a careful way (more on this issue below).

Assume for the purpose of this overview that the usual notion of rank and other intuitions from linear algebra are  valid  over $\Z_m$ and let us proceed with sketching the proof of Theorem~\ref{thm-thm2} using the equivalent formulation as bounding (from below) the rank of a MV matrix $P$. The starting point  is a generalization of a result of \cite{SLZ}, mentioned above, from $\Z_2$ to $\Z_m$. We show that every matrix $P$ over $\Z_m$ that is biased (i.e., its values are not distributed close to uniformly) and has low rank, contains a large monochromatic sub-matrix {\em modulo some factor $m'$ of $m$}. The size of the sub-matrix is bounded from below by $\sim |P|\exp(-r'/\log(r'))$, where $r'$ is the rank of $P$ modulo $m'$ (this factor depends on the specific way the matrix is biased). This generalizes the result of \cite{SLZ} which assumes $m=2$ and finds a large monochromatic sub-matrix (modulo 2). We note that the sub-matrix lemma is the only component in the proof that relies on the PFR conjecture. Let us refer to this result from now on as the {\em sub-matrix} lemma. We can apply the sub-matrix lemma to a MV matrix $P$ since its values are far from uniform (the probability of zero is much less than $1/m$) and since its rank is assumed (towards a contradiction) to be low.

Suppose for the sake of simplicity that $m=p \cdot q$, with $p,q$ distinct primes (the proof for general $m$ is significantly more technical but relies on the same basic intuitions). Applying the sub-matrix lemma we obtain a   sub-matrix $P_1$ of $P$ that is constant modulo some factor $m_1$ of $m$ (so $m_1$ is either $p$, $q$ or $m$) of size at least $|P|\exp(-r_1/\log(r_1))$, where $r_1\leq n$ is the rank of $P \mod m_1$. Using some matrix manipulations, and subtracting a rank one matrix, we can get a large sub-matrix $P_1'$ that does not intersect the diagonal of $P$ and s.t all of the entries of $P_1'$ are zero modulo $m_1$. Suppose $|P_1'| = t_1$ and consider the $2t_1 \times 2t_1$ sub-matrix $P_1''$ of $P$ that has $P_1'$ as its top-right (or bottom-left) block and s.t the top-left and bottom-right blocks are taken to have zero diagonal elements. Formally, if $P_1'$ is indexed by rows in $R$ and columns in $T$ with $R \cap T = \emptyset$ then the rows/columns of $P_1''$ will be indexed by $R \cup T$. If we consider the matrix $P_1''$ modulo $m_1$ then it has top-right block which is all zero and so its rank (modulo $m_1$) will be the sum of the ranks of the top-left and bottom right blocks. Thus, one of these blocks, w.l.o.g the top-left one, must have rank at most $n/2$ (over $\Z_{m_1}$). Notice also that both of these blocks are themselves MV matrices modulo $m$ since they are sub-matrices of $P$ with the same row and column sets. Let $\tilde P_1$ be the top-left block of $P_1''$. We can now apply, again, the monochromatic sub-matrix lemma to find a large  sub-matrix $P_2$ of $\tilde P_1$ which is constant modulo some other factor $m_2$ of $m$. The size of $P_2$ will be $$t_1 \cdot \exp(-r_2/\log(r_2)) = |P|\cdot \exp(-r_1/\log(r_1) - r_2/\log(r_2) ).$$ The factor $m_2$ is also either $p$ or $q$. If it happens to be that $m_1=m_2$ then $r_2 \leq n/2$ and so we gain in the size of $P_2$ in this second step (the expression $r_2/\log(r_2)$ is smaller than $n/2\log(n/2)$ which is smaller by roughly a factor of two than our bound on  $r_1/\log(r_1)$. Suppose we continue with this iterative process of finding constant sub-matrices for $\ell$ steps and that, by luck, all the factors $m_1,m_2, \ldots$ are equal to the same factor of $m$ (say $p$). Then, after roughly $\log(n)$ iteration, we will reduce the rank modulo $p$ to one and still have at least $$|P|\cdot \exp\left(-\sum_{i=1}^\ell \frac{n}{2^i\log(n/2^i)}\right)$$ rows, which is close to the original size of $P$ if we assume (in contradiction) that $|P| >> \exp( n/\log n)$ (this calculation is given in Claim~\ref{clm-sum}). In this case we obtain a new large MV family $U',V'$ modulo $m$ such that all inner products $\ip{u_i'}{v_j'}$ of elements $u_i'\in U', v_j' \in V'$ are fixed modulo $p$. From this we can easily construct a MV family of roughly the  same size in $\Z_q^n$ and then use the bounds on $\MV(q,n)$ for primes to get a contradiction. In the `unlucky' case we will have different factors $m_1,m_2,\ldots$ in each stage, but we can adapt the analysis to consider the decrease in rank simultaneously for all factors of $m$.

The full proof is by induction on the number of factors of $m$ and uses the iterative sub-matrix argument above to go from a MV family modulo $m$ to a MV family of roughly the same size modulo some proper factor of $m$ (and then uses the inductive hypothesis on this new MV family).

\subsection{Matrix rank over $\Z_m$} An important technical issue, which was already hinted at above, is in the definition of the rank of a matrix with entries in a ring $\Z_m$. There are two main properties of matrix rank over a field that we relied on in the  proof sketch above. The first is that a rank $r$ matrix is always the inner product matrix of vectors in $r$ dimensions. Equivalently, a $t \times t$ matrix of rank $r$ can be written as a product of a $t \times r$ matrix and an $r \times t$ matrix. This is important if we are to go back and forth between matrices and MV families. Another property we used is that, if we have a $2t \times 2t$ matrix composed of 4 blocks of size $t \times t$ and the top-right block is zero, then the rank of the matrix is the sum of the ranks of the top-left block and the bottom right block.

Ideally, we would like to define rank over $\Z_m$ so that both properties are satisfied. This is, however, impossible as the following example shows: Consider the $2 \times 2$ matrix with the two rows $(4,0)$ and $(0,3)$ over $\Z_6$. This matrix can be written as the product of the two vectors $(2,3)^T$ and $(2,3)$ and so should have rank one, if we are to satisfy the first property. However, if we are to satisfy the second property, its rank should be the sum of the ranks of the two $1 \times 1$ matrices $(4)$ and $(3)$, which clearly cannot have rank zero!

Our solution to this problem is to give two different definitions of rank, each satisfying one of the two properties. We then show that the two definitions of rank can differ from each other by a multiplicative factor of $\log m$, which our proof can handle. The first definition of rank is as the smallest $r$ such that our  $t \times t$ matrix can be written as a product of a $t \times r$ matrix and an $r \times t$ matrix. Clearly this would satisfy the first property (but not the second). The second definition of rank is termed {\em column-rank} and is defined as the logarithm to the base $m$ of the size of the additive subgroup of $\Z_m^t$ generated by the columns of the matrix. Notice that this definition of rank can result in the rank being non-integer. For example, the rank of the matrix with a single column $(2,0)$ over $\Z_6$ would be equal to $\log_6(3)$ since the subgroup generated by this column is composed of the three vectors $(2,0),(4,0),(0,0)$. It is not hard to show (see Claim~\ref{clm-coltriangular}) that this definition satisfies the second property described above regarding block matrices. Clearly, the two definitions agree for matrices over a field. We show (see Claim~\ref{clm-rankcolrank}) that the two notions of rank can differ by a multiplicative factor of at most $\log m$. This allows us to use both definitions in different parts of the proof without losing too much in the transition. We finish this discussion by noting that in no part of the proof do we use the characterization of rank using determinants, which is often very useful when working over a field.

\subsection{Organization}
We begin with some preliminaries in Section \ref{sec-prelim}. We prove Theorem \ref{thm-thm1} in Section \ref{sec-buildup}. Section \ref{sec-zmmatrices} contains some claims about matrices over $\Z_m$. Section \ref{sec-collfree} introduces collision free MV families. Both Section \ref{sec-zmmatrices} and Section \ref{sec-collfree} will be used in the proof of Theorem \ref{thm-thm2} in Section \ref{sec-main}. The proof of Theorem \ref{thm-thm2} also requires the sub-matrix lemma, whose proof appears in Section \ref{sec-mono}.

\section{General preliminaries}\label{sec-prelim}

\paragraph{Notations:}  Throughout the paper we will be handling ordered lists of elements. A list $A$  of size $t$ over a finite set $\Omega$ is an ordered $t$-tuple $A=\left(a_1,a_2,\cdots ,a_t\right)$ where each $a_i \in \Omega$. A list can have repetitions. If it doesn't we say it is {\em twin free}.  When discussing sublists $A \subseteq B$ with $B = (b_1,\ldots,b_t)$ we will use the convention that, unless specified otherwise, $A$ maintains the ordering induced by $B$. For a positive integer $t$, we let $[t]$ denote the list $\left(1,\cdots t\right)$. So, for example, when we say that $T \subseteq [t]$ we mean that $T$ is a list of integers in increasing order belonging to $[t]$. We say that a list $A = (a_1,\ldots,a_t)$ over $\Omega$ is {\em constant} if $a_i=a_j$ for all $i,j \in [t]$. We assume all logarithms are in base $2$ unless otherwise specified.
\snote{added last sentence so we don't have to write $\log$ all the time}

\subsection{MV Families: Basic Facts and Definitions}
We now start with some basic definition and claims regarding MV families.

\begin{define}[Matching Vector Family]Let $U=\left(u_1, u_2, \cdots u_t\right)$ and $V=\left(v_1, v_2, \cdots v_t\right)$ be lists over $\Z_m^n$. Then $\left(U,V\right)$ is called a \emph{matching vector family} of size $t$ in $\Z_m^n$ if
\begin{itemize}
\item $\langle u_i,v_i \rangle=0\ \left(mod\ m\right), \quad \forall i$.
\item $\langle u_i,v_j \rangle \neq 0\ \left(mod\ m\right), \quad \forall i \neq j$.
\end{itemize}
If in addition, we $|\{\langle u,v \rangle: u \in U, v \in V\}|=q$, we call such a MV family an $q-restricted$ MV family.
We denote the size of $\left(U,V\right)$ by $\left|\left(U,V\right)\right|$. For instance, $\left|\left(U,V\right)\right|=t$ above.
\end{define}

\begin{define}[Subset of Matching Vector Family]Let $U=\left(u_1, u_2, \cdots u_t\right), V=\left(v_1, v_2, \cdots v_t\right)$ form a matching vector family in $\Z_m^n$ of size $t$. By $\left(U',V'\right)\subseteq \left(U,V\right)$, we mean there exists a sublist $T \subseteq [t]$ such that $U'=\left(u_i:i \in T\right), V'=\left(v_i:i \in T\right)$. Observe that $\left(U',V'\right)$ is a matching vector family in $\Z_m^n$.
\end{define}

\begin{define}[$\MV\left(m,n\right)$] We denote by $\MV\left(m,n\right)$ the maximum size of a matching vector family
$\left(U,V\right)$ in $\Z_m^n$. 
Similarly, we denote by $\MV\left(m,n,q\right)$ the maximum size of an $q-restricted$ matching vector family
$\left(U,V\right)$ in $\Z_m^n$. 
\end{define}

We shall use the following simple facts implicitly throughout the paper.
\begin{fact}\label{incr}$\MV\left(m,n\right)$ is an increasing function of $n$.\end{fact}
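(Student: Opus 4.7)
The plan is to establish monotonicity by a direct embedding of any MV family in $\Z_m^n$ into $\Z_m^{n+1}$, preserving both the size and the defining inner product pattern. In other words, I will show $\MV(m,n+1) \ge \MV(m,n)$ for every $n \ge 1$, which since $n$ ranges over positive integers immediately gives that $\MV(m,n)$ is a (weakly) increasing function of $n$.

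Concretely, fix an MV family $(U,V)$ of size $t = \MV(m,n)$ in $\Z_m^n$, with $U=(u_1,\ldots,u_t)$ and $V=(v_1,\ldots,v_t)$. Define the padded lists $U' = (u_1',\ldots,u_t')$ and $V' = (v_1',\ldots,v_t')$ in $\Z_m^{n+1}$ by $u_i' = (u_i, 0)$ and $v_j' = (v_j, 0)$, where we simply append a zero coordinate. Then for every $i,j \in [t]$,
\[
\langle u_i', v_j' \rangle \;=\; \langle u_i, v_j \rangle + 0 \cdot 0 \;=\; \langle u_i, v_j \rangle \pmod{m},
\]
so the inner product on the diagonal is $0$ and off-diagonal entries are nonzero, exactly as required.

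Hence $(U',V')$ is an MV family of size $t$ in $\Z_m^{n+1}$, which proves $\MV(m,n+1) \ge \MV(m,n) = t$. There is no obstacle here beyond noting that the padded pairs inherit the required inner product pattern verbatim; the same argument clearly works for the $q$-restricted variant as well, so one also gets $\MV(m,n+1,q) \ge \MV(m,n,q)$, which is a useful parallel fact for later sections.
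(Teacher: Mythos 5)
Your proof is correct and takes essentially the same approach as the paper: both arguments pad the vectors with zero coordinates to embed a MV family in $\Z_m^n$ into $\Z_m^{n'}$ for $n' > n$, preserving all inner products. The only cosmetic difference is that you append one zero at a time and induct, while the paper pads directly by $n_2 - n_1$ zeros.
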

\snote{shortened proof}
\begin{proof}
	For $n_1<n_2$, we show $\MV\left(m,n_1\right)\leq \MV\left(m,n_2\right)$. Given $\left(U,V\right)$, a matching vector family in $\Z_m^{n_1}$, we can pad each element in $U$ and $V$ by $n_2-n_1$ zeros and obtain a matching vector family in $\Z_m^{n_2}$ of the same size.
\end{proof}

\begin{fact}\label{twinfree}If $\left(U,V\right)$ is a matching vector family in $\Z_m^n$, then $U$ and $V$ are twin free.\end{fact}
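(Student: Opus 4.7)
The plan is to proceed by contradiction: suppose one of the two lists contains a repeated element, and derive a violation of the defining inner-product conditions of a matching vector family. This is a direct two-line argument exploiting the asymmetry between the diagonal condition $\langle u_i, v_i\rangle = 0$ and the off-diagonal condition $\langle u_i, v_j\rangle \neq 0$ for $i\neq j$.

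More concretely, suppose for contradiction that $U$ is not twin free, so that there exist distinct indices $i \neq j$ with $u_i = u_j$. Then using the diagonal condition at index $i$, we obtain
\[
\langle u_j, v_i \rangle = \langle u_i, v_i \rangle = 0 \pmod{m},
\]
which contradicts the requirement that $\langle u_j, v_i\rangle \neq 0 \pmod m$ whenever $j \neq i$. Hence $U$ must be twin free. The argument for $V$ is entirely symmetric: if $v_i = v_j$ with $i \neq j$, then $\langle u_i, v_j\rangle = \langle u_i, v_i\rangle = 0 \pmod m$, again contradicting the off-diagonal condition.

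There is no real obstacle here; the statement is an immediate consequence of the definition. The only thing to be careful about is to record both cases (twins in $U$ and twins in $V$) since the definition treats the two lists symmetrically, and to note that we used only a single defining inequality in each case, not any structural property of $\Z_m$.
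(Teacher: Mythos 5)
Your proof is correct and takes essentially the same approach as the paper: both derive a contradiction by using $u_i=u_j$ to equate an off-diagonal inner product with a diagonal (hence zero) one, and both invoke symmetry for $V$.
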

\begin{proof}
	Let $U=\left(u_1,u_2,\cdots u_t\right), V=\left(v_1,v_2,\cdots v_t\right)$. We prove $U$ is twin free. By symmetry $V$ is also twin free. Suppose $u_i=u_j$ for some $i \neq j$. Now, $\langle u_i,v_j\rangle=\langle u_j,v_j\rangle=0$ which is a contradiction.
\end{proof}

To facilitate writing in the proofs to follow we introduce the following notation for taking lists, matrices, etc. modulo an integer $r$. \snote{Changed below to just r which divides m}

\begin{define}[Modulo $r$ notation]\label{modulus} Let $2 \le r \le m$ be such that $r$ divides $m$.
Given $a=\left(a_1, \cdots, a_n\right) \in \Z_m^n$, we denote by $a^{\left(r\right)}=\left(a_1 \pmod{r},\cdots, a_n \pmod{r}\right) \in \Z_r^n$. For a list $U=\left(u_1,u_2,\cdots u_t\right)$ over $\Z_m^n$, let $U^{\left(r\right)}=\left(u_1^{\left(r\right)},u_2^{\left(r\right)},\cdots u_t^{\left(r\right)}\right)$.
Also, if $u^{\left(r\right)}$ is constant for all $u \in U$, we say $U^{\left(r\right)}$ is constant.
Similarly, for a $t \times t$ matrix $M$ over $\Z_m$, define $M^{\left(r\right)}$ to be the $t \times t$ matrix over $\Z_r$ such that $M^{\left(r\right)}\left(j,k\right)=M\left(j,k\right)\pmod{r}$ for all $1 \leq j,k \leq t$.
\end{define}

We will also need the following definitions.

\begin{define}[Bucket $B_r\left(w,A\right)$]Let $A \subseteq \Z_m^n$ be a list. For any $w \in \mathbb{Z}_r^n$,
we denote by $B_r\left(w,A\right)=\left(a \in A:a^{\left(r\right)}=w\right)$ the sub-list of elements of $A$ which
are equal to $w$ modulo $r$.
\end{define}

\begin{define}[Matrix $P_{U,V}$]\label{matrix}Let  $U=\left(u_1, u_2, \cdots u_t\right)$ and $V=\left(v_1, v_2, \cdots v_t\right)$ be lists over $\Z_m^n$. We let $P_{U,V}$  be the $t \times t$ matrix over $\Z_m$ defined by $P_{U,V}\left(i,j\right)=\langle u_i, v_j \rangle$ for $1 \leq i,j \leq t$.
\end{define}

We will use the following lemma from \cite{DGY} mentioned informally in the introduction.
\begin{lem}\label{lem-dgy1}\cite[Theorem 21]{DGY} For any positive integer $n$ and prime $p$, $\MV\left(p,n\right) \leq 1+{n+p-2  \choose p-1}$.
\end{lem}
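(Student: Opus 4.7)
The plan is to use the polynomial method, essentially a direct application of Fermat's little theorem in $\Z_p$. Given a matching vector family $(U,V)$ of size $t$ in $\Z_p^n$, for each $i\in[t]$ I would define the polynomial
\[
g_i(x_1,\ldots,x_n) \;=\; 1 - \langle x,v_i\rangle^{p-1} \;\in\; \Z_p[x_1,\ldots,x_n].
\]
By Fermat's little theorem, over $\Z_p$ we have $y^{p-1}=0$ if $y=0$ and $y^{p-1}=1$ otherwise. Combined with the matching vector condition, this gives $g_i(u_i)=1$ and $g_i(u_j)=0$ for $j\neq i$.

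Next I would argue linear independence: if $\sum_i c_i g_i \equiv 0$, then evaluating at $u_j$ forces $c_j=0$ for every $j$, so the polynomials $g_1,\ldots,g_t$ are linearly independent in $\Z_p[x_1,\ldots,x_n]$.

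The final step is a dimension count. Expanding $\langle x,v_i\rangle^{p-1}$ yields a homogeneous polynomial of degree exactly $p-1$ in $n$ variables, and the space of such homogeneous polynomials has dimension $\binom{n+p-2}{p-1}$. Therefore each $g_i$ lies in the subspace spanned by the constant $1$ together with homogeneous degree-$(p-1)$ monomials, a space of dimension at most $1+\binom{n+p-2}{p-1}$. Combined with linear independence, this gives the bound $t\le 1+\binom{n+p-2}{p-1}$.

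The only subtle point, and the one I would be most careful about, is the dimension count: it is tempting to use the full space of polynomials of degree at most $p-1$, which has dimension $\binom{n+p-1}{p-1}$, but the key improvement comes from observing that $\langle x,v_i\rangle^{p-1}$ is \emph{homogeneous} of degree $p-1$, which tightens the bound to $\binom{n+p-2}{p-1}$ (plus one for the constant). Everything else is routine.
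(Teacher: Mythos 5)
Your proof is correct, and it is the standard polynomial-method argument for this bound. Note that the paper does not prove this lemma itself; it cites it from \cite{DGY} (Theorem 21), where the proof is essentially the one you give: attach to each pair $(u_i,v_i)$ the indicator polynomial $g_i(x)=1-\langle x,v_i\rangle^{p-1}$, check the diagonal/off-diagonal pattern via Fermat's little theorem, deduce linear independence by evaluation at the $u_j$, and then bound the dimension of the ambient space. Your attention to the fact that $\langle x,v_i\rangle^{p-1}$ is \emph{homogeneous} of degree $p-1$, so the $g_i$ live in $\spana\{1\}\oplus(\text{degree-}(p{-}1)\text{ homogeneous part})$, is exactly the observation that gives $1+\binom{n+p-2}{p-1}$ rather than the looser $\binom{n+p-1}{p-1}$ coming from all polynomials of degree at most $p-1$. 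No gaps.
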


\subsection{Probability Distributions}

\begin{define}For a distribution $\mu$ over a finite set $\Omega$, we write $X \sim \mu$ to denote a random variable $X$ drawn according to $\mu$. We will also treat $\mu$ as a function $\mu: \Omega \mapsto [0,1]$ such that $\mu(x) = \Pr[ X = x]$. For a list  $A$ over $\Omega$, $x \sim A$ denotes a point sampled as per the uniform distribution on $A$ (taking repetitions into account).
\end{define}

\begin{define}[Statistical distance between distributions]  Let $\mu_1$ and $\mu_2$ be two distributions over a finite set $\Omega$. The {\em statistical distance} (or simply distance) between $\mu_1$ and $\mu_2$, denoted  $\Delta\left(\mu_1,\mu_2\right)$, is defined as \[\Delta\left(\mu_1, \mu_2\right)=\frac{1}{2}\sum_{x \in \Omega}\left|\mu_1\left(x\right)-\mu_2\left(x\right)\right|.  \]
\end{define}

\begin{define}[Collision probability] Given a distribution  $\mu$ over a finite set $\Omega$  the {\em collision probability} of $\mu$, denoted $\cp(\mu)$, is defined as \[\cp\left(\mu\right)=\Pr_{x,y \sim \mu}[x=y]=\sum_{x\in \Omega}\mu\left(x\right)^2.\]
\end{define}

The following two lemmas are standard and their proofs are included, for completeness, in Appendix~\ref{sec-distlem}.
\begin{lem}\label{lem-stat}Let $\mu$ be a distribution over $\Z_m$ and let $\U_m$ denote the uniform distriution over $\Z_m$. If $\Delta\left(\mu,\U_m\right)\geq \epsilon$ then for some $1 \leq j \leq m-1$, \[\left|\E_{x \sim \mu}\left[\left(\omega^j\right)^{x} \right]\right|\geq \frac{2\epsilon}{\sqrt{m}},\] where $\omega=exp\left(2\pi i/m\right)$ is  a primitive root of unity of order $m$.
\end{lem}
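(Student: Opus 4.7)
The plan is to leverage Parseval's identity on the group $\Z_m$, applied to the difference function $f(x) = \mu(x) - 1/m$. The characters $\chi_j(x) = \omega^{jx}$ for $j = 0, \ldots, m-1$ form an orthogonal system on $\Z_m$, and the quantity $\E_{x \sim \mu}[(\omega^j)^x] = \sum_{x \in \Z_m} \mu(x)\, \omega^{jx}$ is essentially the $j$-th Fourier coefficient of $\mu$, so the assumed non-uniformity of $\mu$ ought to force some non-trivial coefficient to be large.

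First I would record the elementary observations that $\sum_x f(x) = 0$ (since $\mu$ and $\U_m$ are both probability distributions) and that $\sum_x |f(x)| = 2\Delta(\mu,\U_m) \geq 2\epsilon$. Combining this with the Cauchy--Schwarz inequality $\|f\|_1 \leq \sqrt{m}\,\|f\|_2$ gives $\|f\|_2^2 \geq 4\epsilon^2/m$. This converts the $L^1$ statistical distance hypothesis into an $L^2$ lower bound, which is the natural quantity for Fourier analysis.

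Next, for each $j \in \{1, \ldots, m-1\}$ define $g(j) = \sum_x f(x)\,\omega^{jx}$. Because $\sum_x \omega^{jx} = 0$ for $j \not\equiv 0 \pmod{m}$, this expression coincides with $\E_{x \sim \mu}[(\omega^j)^x]$ in the required range. Plancherel's identity on $\Z_m$ states
\[
  \sum_{j=0}^{m-1} |g(j)|^2 \;=\; m \sum_{x \in \Z_m} |f(x)|^2.
\]
Since $g(0) = \sum_x f(x) = 0$, this simplifies to $\sum_{j=1}^{m-1} |g(j)|^2 = m\,\|f\|_2^2 \geq 4\epsilon^2$. By averaging there is some $j \in \{1,\ldots,m-1\}$ with $|g(j)|^2 \geq 4\epsilon^2/(m-1) \geq 4\epsilon^2/m$, i.e., $|g(j)| \geq 2\epsilon/\sqrt{m}$, which is exactly the claimed bound.

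The argument is entirely standard and there is no real obstacle; the only thing to watch is the normalization convention for the Fourier transform, to make sure the factor $\sqrt{m}$ ends up on the correct side and that the bound is tight enough to yield the constant $2$ in $2\epsilon/\sqrt{m}$ (which is why we use the sharper identity $\|f\|_1 = 2\Delta(\mu,\U_m)$ rather than a looser inequality).
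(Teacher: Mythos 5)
Your proof is correct and follows essentially the same route as the paper: convert the statistical-distance ($L^1$) hypothesis into an $L^2$ bound via Cauchy--Schwarz, apply Parseval/Plancherel on $\Z_m$ to the difference $\mu - \U_m$, observe that the trivial Fourier coefficient vanishes, and average over the remaining $m-1$ coefficients to extract one of size at least $2\epsilon/\sqrt{m}$. The only cosmetic difference is your choice of the unnormalized transform with the positive-exponent convention, which lets $g(j)$ coincide directly with $\E_{x\sim\mu}[(\omega^j)^x]$ for $j\neq 0$ and avoids the paper's final reindexing $j \mapsto m-j$.
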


\begin{lem}\label{lem-cp}Let $\omega$ be a primitive root of unity of order $m$. Let $\mu_1$ and $\mu_2$ be two probability distributions over $\Z_m^n$. If $ \left|\E_{x \sim \mu_1,y\sim \mu_2} \left[\omega^{\langle x,y \rangle}\right]\right|\geq \epsilon $, then $\cp\left(\mu_1\right)\cp\left(\mu_2\right) \geq \epsilon^2/m^n$.
\end{lem}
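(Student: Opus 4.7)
The plan is to prove the inequality by a direct Cauchy-Schwarz argument combined with character orthogonality on $\Z_m^n$. First I would rewrite the expectation by absorbing $\mu_2$ into a Fourier-style transform: define $\widehat{\mu_2}(x) = \sum_{y \in \Z_m^n} \mu_2(y)\,\omega^{\langle x,y\rangle}$, so that
\[
\E_{x \sim \mu_1,\, y \sim \mu_2}\bigl[\omega^{\langle x,y\rangle}\bigr] \;=\; \sum_{x \in \Z_m^n} \mu_1(x)\,\widehat{\mu_2}(x).
\]

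Next I would apply Cauchy-Schwarz to the pairing $\sum_x \mu_1(x) \widehat{\mu_2}(x)$, writing $\mu_1(x) = \sqrt{\mu_1(x)} \cdot \sqrt{\mu_1(x)}$ and bounding one copy trivially; more cleanly, apply it as
\[
\Bigl|\sum_x \mu_1(x)\,\widehat{\mu_2}(x)\Bigr|^2 \;\le\; \Bigl(\sum_x \mu_1(x)^2\Bigr)\Bigl(\sum_x |\widehat{\mu_2}(x)|^2\Bigr).
\]
The first factor is exactly $\cp(\mu_1)$ by definition.

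For the second factor, I would expand the square and swap the order of summation:
\[
\sum_{x} |\widehat{\mu_2}(x)|^2 \;=\; \sum_{y,y' \in \Z_m^n} \mu_2(y)\mu_2(y') \sum_{x \in \Z_m^n} \omega^{\langle x,\,y-y'\rangle}.
\]
By the standard orthogonality of characters of $\Z_m^n$, the inner sum $\sum_{x} \omega^{\langle x,z\rangle}$ equals $m^n$ when $z = 0$ and $0$ otherwise (since it factors coordinate-wise into $\sum_{x_i \in \Z_m} \omega^{x_i z_i}$, each of which vanishes unless $z_i = 0$). So only the diagonal $y = y'$ survives and the second factor equals $m^n \cp(\mu_2)$.

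Combining, $\epsilon^2 \leq \cp(\mu_1) \cdot m^n \cp(\mu_2)$, which rearranges to the desired inequality $\cp(\mu_1)\cp(\mu_2) \geq \epsilon^2/m^n$. There is no real obstacle here; the only step that needs a line of care is the character-orthogonality computation, which must be done on $\Z_m^n$ (not just $\Z_m$) and hence produces the factor $m^n$ that appears in the denominator of the final bound.
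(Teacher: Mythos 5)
Your proof is correct and follows essentially the same route as the paper's: rewrite the expectation as $\sum_x \mu_1(x)\widehat{\mu_2}(x)$, apply Cauchy--Schwarz to split off $\cp(\mu_1)$, and evaluate $\sum_x|\widehat{\mu_2}(x)|^2 = m^n\cp(\mu_2)$ via character orthogonality on $\Z_m^n$. The only cosmetic difference is that the paper first moves the absolute value inside the outer sum before applying Cauchy--Schwarz, whereas you apply it directly to the complex-valued pairing; both are valid and yield the identical bound.
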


\section{Proof of Theorem~\ref{thm-thm1}}\label{sec-buildup}

In this section we prove Theorem~\ref{thm-thm1}, restated here with explicit constants.
\begin{thm}\label{thm-mn2} Let $m \geq 2$,$2 \leq q\leq m$ and $n$ be arbitrary positive integers. Then  \[\MV\left(m,n,q\right)\leq 12q \cdot q^{24\left(1+\log q^{10q} \right)}m^{n/2}.\]
\end{thm}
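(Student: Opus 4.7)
My plan is to prove Theorem~\ref{thm-mn2} by induction on the number of distinct prime factors of $m$, descending at each inductive step to a $q$-restricted MV family over a strictly smaller modulus via a bucketing argument that implements the two-source extractor intuition sketched in the introduction. The base case is $m$ prime, where the stated bound follows immediately from the $\MV(p,n) = O(p^{n/2})$ upper bound of \cite{DGY} mentioned in the introduction (with ample room to absorb absolute constants into the $q$-dependent prefactor).

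For the inductive step, fix a $q$-restricted MV family $(U,V)$ of size $t$ in $\Z_m^n$ with $m$ composite. Because the inner-product distribution $\mu$ of $\ip{u}{v}$ under $u\sim U, v\sim V$ is supported on at most $q < m$ elements of $\Z_m$, we have $\Delta(\mu, \U_m) \geq 1 - q/m$. Lemma~\ref{lem-stat} then produces a character $\omega^j$ of order $r := m/\gcd(j,m) > 1$ (with $r \mid m$) satisfying $|\E[\omega^{j\ip{u}{v}}]| \geq 2(1 - q/m)/\sqrt{m}$. Reading $\omega^j$ as a nontrivial character of $\Z_r$ applied to $\ip{u^{(r)}}{v^{(r)}}$ and invoking Lemma~\ref{lem-cp} in $\Z_r^n$ yields $\cp(U^{(r)})\cdot \cp(V^{(r)}) \geq \varepsilon^2/r^n$ for a suitable $\varepsilon$. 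By symmetry we may assume $\cp(U^{(r)}) \geq \varepsilon/r^{n/2}$, which in turn produces some $w \in \Z_r^n$ with $|B_r(w,U)| \geq t\cdot \cp(U^{(r)})$.

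With the bucket in hand, I would descend to modulus $m/r$. Restricting to indices $i$ with $u_i^{(r)} = w$, the matching condition $\ip{u_i}{v_i} \equiv 0 \pmod m$ forces $\ip{w}{v_j} \equiv 0 \pmod r$ for every surviving index $j$, so every inner product in the restricted family is divisible by $r$. Writing $u_i = w + r \tilde u_i$ with $\tilde u_i \in \Z_{m/r}^n$ and setting $\tilde u'_i := (1, \tilde u_i) \in \Z_{m/r}^{n+1}$ and $\tilde v_j := (\ip{w}{v_j}/r,\, v_j^{(m/r)}) \in \Z_{m/r}^{n+1}$, a direct computation shows that $\ip{\tilde u'_i}{\tilde v_j} \equiv \ip{u_i}{v_j}/r \pmod{m/r}$, producing a $q$-restricted MV family in $\Z_{m/r}^{n+1}$ whose size equals that of the bucket. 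The inductive hypothesis on this reduced family then bounds $t$ in terms of $\MV(m/r, n+1, q)$.

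The main obstacle is controlling the multiplicative loss accumulated across iterations so that the total depends only on $q$, not on $m$ or $n$. A single descent loses a factor of roughly $\sqrt{m/r}/\varepsilon$ in passing from the inductive bound on $\MV(m/r,n+1,q)$ back to a bound on $\MV(m,n,q)$, and this factor is not controlled on its own. To close the induction I expect one must iterate the bucketing $\Theta(\log q)$ times per prime level — separately handling the case that all biased characters have maximal order $r = m$, where $\cp(U) = \cp(V) = 1/t$ coming from the twin-free property (Fact~\ref{twinfree}) already yields a direct bound — and carefully tracking the dimension growth of $+1$ per descent against the $(m/r)^{(n+1)/2}$ target. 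Summing a logarithmic-in-$q$ number of $q^{O(q)}$-sized losses per prime level should then yield exactly the $12q \cdot q^{24(1+\log q^{10q})}$ prefactor in the statement of the theorem.
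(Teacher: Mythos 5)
Your high-level strategy (Fourier bias from the skewed inner-product distribution, then bucket a large sub-family with constant residue and descend) is genuinely the same two-source-extractor plan the paper uses, but there are two concrete gaps that the paper resolves and you do not, and without them the argument cannot deliver a $q^{O(q\log q)}$-only loss.

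First, your bias estimate $\Delta(\mu,\U_m) \geq 1 - q/m$ degenerates exactly when $q$ is close to $m$, including the allowed case $q = m$ (which the theorem covers and which corresponds to ordinary MV families). The paper handles this by treating two regimes in Lemma~\ref{lem-maincs}: when the "remaining" modulus $r_3$ is small relative to $q$, it exploits the bias from $\Pr[\mu = 0] = 1/t \ll 1/r_3$ (this is where the hypothesis $t \geq 12q$ is used and it has nothing to do with the support size); when $r_3$ is large it upgrades the small-support observation to a collision-probability bound $\E_j[X^2] \geq 1/q'$, which is strictly stronger than the $1 - q/m$ statistical-distance estimate. You gesture at a "separate case" for maximal-order characters, but the deficient estimate is structural, not a corner case.

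Second and more seriously, nothing in your descent controls the \emph{order} of the chosen character $\omega^j$, and this control is the entire point of the $q$-dependent bound. If $r = \mathrm{ord}(\omega^j)$ can be small (say $r = 2$), each descent only shrinks the modulus by a factor of $2$, you need $\Omega(\log m)$ steps, and the accumulated loss (together with your $+1$ dimension growth per step, which contributes an extra $\sqrt{m/r}$-type factor each time) will depend on $m$, not just on $q$. The paper's Lemma~\ref{lem-maincs} ensures the chosen $s$ satisfies $s \geq \max\{2, r_3/q^{10q}\}$, meaning a single step nearly exhausts the descent; the remaining modulus is at most $q^{10q}$ and hence at most $O(q\log q)$ further halvings are possible. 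This is made possible by Claim~\ref{wlog} (WLOG $m$ has at most $q$ prime factors) and Claim~\ref{highorder} (few low-order elements in $\Z_N$ when $N$ has few prime factors), which let them discard low-order $j$ with only a $1/4q'$ probability loss. Neither ingredient appears in your proposal, and your concluding sentence ("I expect one must iterate the bucketing $\Theta(\log q)$ times per prime level ... should then yield exactly the ... prefactor") is an aspiration, not an argument: you have identified the hard part of the proof but not solved it. A secondary divergence worth noting: the paper never descends to $\Z_{m/r}^{n+1}$; it keeps the family in $\Z_m^n$ and instead carries the invariant "inner products $\equiv 0 \pmod{r_1r_2}$ and $U^{(r_1)},V^{(r_2)}$ constant," doing backward induction on $r_1 r_2 \mid m$. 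This avoids the dimension creep your reduction introduces.
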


For the purpose of the proof, we  introduce the following notation that will be used only in this section.
\begin{define}[$\MV_{r_1,r_2}\left(m,n,q\right)$] Let $r_1,r_2$ be integers such that $r_1r_2|m$. We denote by $\MV_{r_1,r_2}\left(m,n,q\right)$ the maximum size of a $q-restricted$ MV family $\left(U,V\right)$  in $\mathbb{Z}_m^n$ satisfying
\begin{itemize}
\item $U^{\left(r_1\right)}$ and $V^{\left(r_2\right)}$ are constants.
\item $\langle u,v \rangle=0 \ \left(mod\ r_1r_2\right)$ for all $u \in U, v \in V$.
\end{itemize}
 Note that $\MV_{1,1}\left(m,n,q\right)=\MV\left(m,n,q\right)$ (with the convention that $x \pmod 1 =0$ for any integer $x$).
\end{define}

Before we go to the proof of Theorem ~\ref{thm-thm2}, we have the following claims.

\begin{claim}\label{wlog}Let $(U,V)$ be a q-restricted matching vector in $\Z_m^n$. Then, without loss of generality, $m$ has at most $q$ prime factors.\end{claim}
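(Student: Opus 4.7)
The plan is to replace the modulus $m$ by a divisor $m'$ of $m$ with few prime factors, without shrinking the family or breaking the $q$-restricted property. The key observation is that each nonzero inner product value must be ``witnessed'' to be nonzero by some prime factor of $m$, and there are only $q-1$ such values.

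Concretely, let $S = \{\langle u,v\rangle \bmod m : u \in U, v \in V\} \subseteq \Z_m$. By hypothesis $|S| \le q$, and $0 \in S$ since the diagonal products vanish, so there are at most $q-1$ nonzero elements in $S$. For each $s \in S \setminus \{0\}$, the fact that $s \not\equiv 0 \pmod m$ means there exists some prime $p \mid m$ with $p$-adic valuations satisfying $v_p(s) < v_p(m)$; pick one such prime and call it $p(s)$. Let $T = \{p(s) : s \in S \setminus \{0\}\}$, so $|T| \le q-1$.

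Now set $m' = \prod_{p \in T} p^{v_p(m)}$. Then $m' \mid m$ and $m'$ has at most $q-1 \le q$ prime factors. Consider the reduced lists $U^{(m')}$ and $V^{(m')}$ in $\Z_{m'}^n$. On the diagonal, $\langle u_i, v_i \rangle \equiv 0 \pmod m$ implies $\langle u_i, v_i \rangle \equiv 0 \pmod {m'}$. Off the diagonal, the inner product equals some $s \in S \setminus \{0\}$; by construction $p(s) \in T$ and $v_{p(s)}(s) < v_{p(s)}(m) = v_{p(s)}(m')$, so $s \not\equiv 0 \pmod{m'}$. Therefore $(U^{(m')}, V^{(m')})$ is a $q$-restricted matching vector family in $\Z_{m'}^n$ of the same size as $(U,V)$, and $m'$ has at most $q$ prime factors. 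Replacing $m$ by $m'$ justifies the ``without loss of generality'' reduction.

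There is no serious obstacle here — the argument is essentially prime-by-prime bookkeeping. The only thing to be mildly careful about is picking the full prime-power $p^{v_p(m)}$ for each chosen prime (rather than just $p$), since we need $s \not\equiv 0 \pmod{m'}$, not merely $s \not\equiv 0 \pmod p$.
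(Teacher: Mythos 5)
Your proposal is correct and follows essentially the same approach as the paper: for each nonzero inner product value pick a prime power $p^{v_p(m)}$ dividing $m$ that witnesses its nonvanishing, and restrict $m$ to the product of these prime powers. Your write-up is a bit more careful than the paper's terse version — in particular you correctly note that $0$ always appears among the $q$ values (so at most $q-1$ primes are needed) and you explicitly verify that the reduced family remains a $q$-restricted MV family modulo $m'$.
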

\begin{proof} Assume $m=\prod_{i=1}^r p_i^{e_i}$ with possible $r>q$. Let $v_1,\cdots,v_q \in Z_m$ be the $q$ possible nonzero values that the inner products $\langle u,v \rangle$ attain. For each $v_j$ there is some prime $p_{i_j}$ where $v_j \neq 0 \left( \mod p_{i_j}^{e_{i_j}}\right)$. So, we can replace $m$ with just $\prod_{j=1}^q p_{i_j}^{e_{i_j}}$ and discard all primes other than $p_{i_1}, \cdots ,p_{i_q}$.
\end{proof}

\begin{claim}\label{highorder}
If $N$ has $r$ prime factors, then $|\{x \in Z_N: order(x)<N/S\}| \leq N/S \cdot \left(\log S\right)^r$.\end{claim}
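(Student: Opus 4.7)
The plan is to translate the order condition into a divisor condition and then bound the resulting sum via a careful inductive argument on $r$, the number of distinct prime factors of $N$.

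First, I would observe that for $x \in \Z_N$ the additive order satisfies $\text{order}(x) = N/\gcd(x,N)$, so the condition $\text{order}(x) < N/S$ is equivalent to $\gcd(x,N) > S$. Since the number of $x \in \Z_N$ with $\gcd(x,N) = d$ equals $\phi(N/d)$ for each $d \mid N$, we may rewrite the quantity to be bounded as
\[
|\{x \in \Z_N : \text{order}(x) < N/S\}| \;=\; \sum_{d \mid N,\ d > S} \phi(N/d) \;=\; \sum_{e \mid N,\ e < N/S} \phi(e),
\]
where the last equality uses the substitution $e = N/d$. The task is therefore to bound this partial divisor sum by $T\,(\log S)^r$ with $T = N/S$.

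Next, I would proceed by induction on $r$. In the base case $r = 1$ with $N = p^e$, the divisors of $N$ less than $T$ form the chain $1, p, p^2, \ldots, p^J$ for the largest $J$ with $p^J < T$, and one computes directly $\sum_{k=0}^J \phi(p^k) = p^J < T$, which is at most $T\log S$ since $\log S \geq 1$ for $S \geq 2$. For the inductive step, I would factor $N = p^a \cdot M$ with $\gcd(p, M) = 1$ and $M$ having $r-1$ distinct prime factors, apply the CRT isomorphism $\Z_N \cong \Z_{p^a} \times \Z_M$, and decompose according to the $p$-part of $\gcd(x, N)$:
\[
F_N(S) \;=\; \sum_{j=0}^a \phi(p^{a-j})\, F_M(S/p^j),
\]
where $F_N(S)$ denotes $|\{x \in \Z_N : \gcd(x,N) > S\}|$. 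I would then apply the inductive hypothesis $F_M(T') \leq (M/T')(\log T')^{r-1}$ when $T' = S/p^j \geq 1$, and the trivial bound $F_M(T') \leq M$ when $T' < 1$.

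The main obstacle will be combining these bounds to obtain the sharp factor $(\log S)^r$. Naively substituting the inductive hypothesis and using $\phi(p^{a-j}) \leq p^{a-j}$ yields a main term of the form $(N/S) \sum_{j \leq \log_p S} (\log S - j\log p)^{r-1}$ plus a tail from the large-$j$ cases. Bounding the main sum by an integral gives a contribution of roughly $(N/S)\,L^r/(r\log p)$ with $L = \log S$, while the tail from $p^j > S$ contributes at most $O(N/S)$. The delicate point is showing that the accumulated additive constants and the $1/(r\log p)$ factor combine neatly — this can be handled by exploiting the strict inequality $\phi(p^{a-j})/p^{a-j} = 1 - 1/p < 1$ for $j < a$, which provides a matching slack factor, and by arranging the induction so that these slacks compound to absorb the additive boundary terms into the log-power $L^r$.
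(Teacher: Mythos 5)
Your reformulation via $\mathrm{order}(x) = N/\gcd(x,N)$ and the divisor-sum $\sum_{e \mid N,\ e < N/S}\phi(e)$ is correct, and inducting on the number of distinct prime factors via CRT is a legitimate alternative to the paper's argument. However, the inductive step as outlined does not close, and you acknowledge as much. Writing $L=\log S$, $c=\log p$, after applying the inductive hypothesis to the terms with $p^j\le S$ and the trivial bound $F_M \le M$ to the tail $p^j>S$, the resulting upper bound is on the order of
\[
\frac{N}{S}\left[\left(1-\tfrac1p\right)\left(\tfrac{L^{r}}{c}+L^{r-1}\right)+1\right],
\]
and the additive boundary terms $L^{r-1}+1$ are \emph{not} dominated by the deficit $L^{r}-\tfrac{1-1/p}{c}L^{r}$ once $L$ is small (e.g.\ $S\le 4$), so the target $\frac{N}{S}L^{r}$ is not reached. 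The slack $\phi(p^{a-j})/p^{a-j}=1-1/p$ that you propose to exploit is only a single constant factor, it tends to $1$ as $p\to\infty$, and the $j=a$ term has $\phi(1)=1$ with no slack at all, so there is nothing that ``compounds'' across the sum to absorb $L^{r-1}+1$. To make this route rigorous you would need a strictly stronger inductive hypothesis that carries these lower-order terms, or a restriction to $S$ above some threshold.

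The paper avoids the induction entirely with a direct union bound. It observes that every $x$ with $\mathrm{order}(x)<N/S$ is a multiple of some $d\mid N$ with $d\ge S$ (namely $d=\gcd(x,N)$), and that it suffices to range only over a set $T'$ of divisors $\prod_i p_i^{f_i}\ge S$ that is downward-dominating in the coordinatewise partial order on exponent vectors. Each such $d$ has at most $N/d\le N/S$ multiples in $\Z_N$, and by minimality each coordinate $f_i$ is bounded by roughly $\log_{p_i}S$, giving $|T'|\lesssim (\log S)^{r}$. This one-shot count replaces the cumulative error-tracking your induction requires, and is why the paper's proof closes in a couple of lines.
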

\begin{proof} Assume $N=\prod_{i=1}^r p_i^{e_i}$. An element $x$ with $order(x) \leq N/S$ is divisible by some $\prod_{i=1}^r p_i^{f_i} \geq S$. Let $T= \{(f_1, \cdots,f_r): \prod p_i^{f_i} \geq S\}$. Define a partial order on $T$ by $(f_1,\cdots,f_r) \leq (f'_1,\cdots,f'_r)$ if $f_i \leq f'_i$. Let $T'$ be a subset of $T$ such that for any $t \in T$ there is $t' \in T'$ such that $t'\leq t$. Note that if $x$ has order $\leq N/S$ then $x$ must be divisible by $\prod_i p_i^{f_i}$ for some $(f_1,\cdots,f_r)$ in $T'$. So, the number of elements of order $< N/S$ is at most $N|T'|/S$. We can bound the size of $T'$ as follows: any element $f_i$ is between 0 and $\log_{p_i} S$, since clearly if $f_i$ is larger we can reduce $f_i$ by one. So, $|T'| \leq \prod_{i=1}^r (\log S / \log p_i) <= (log S)^r$. 
\end{proof}

The proof of Theorem~\ref{thm-mn2} will follow immediately from the following two lemmas, which will be proved below.

\begin{lem}\label{lem-maincs}Let $m=r_1r_2r_3$ where $r_1,r_2,r_3$ are arbitrary positive integers such that $r_3\geq 2$. Let $q \geq 2, t \geq 12q$ and $n$ be arbitrary positive integers. Let $\left(U,V\right)$ be a $q-restricted$ matching vector family in $\Z_m^n$ with $\left|\left(U,V\right)\right|=t$ such that
\begin{itemize}
\item $U^{\left(r_1\right)}$ and $V^{\left(r_2\right)}$ are constants.
\item $\langle u,v \rangle=0 \ \left(mod\ r_1r_2\right)$ for all $u \in U, v \in V$.
\end{itemize}
Then, there exists $s|r_3$ with $s \geq \max\{2, r_3/q^{10q}\}$ and a $q-restricted$  matching vector family $\left(U',V'\right)\subseteq \left(U,V\right)$ such that  
$\left|\left(U',V'\right)\right| \geq  s^{-n/2}q^{-24}t$ where
\begin{itemize}
\item $\langle u',v' \rangle=0 \ \left(mod\ r_1r_2s\right)$ for all $u' \in U', v' \in V'$.
\item Either $U'^{\left(r_1s\right)}$ is constant or $V'^{\left(r_2s\right)}$ is constant.
\end{itemize}
\end{lem}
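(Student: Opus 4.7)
The plan is to follow a Chor--Goldreich style argument: extract a Fourier coefficient of the inner product distribution that is both large in magnitude and of high order in $\Z_{r_3}$, convert this bias into a large ``bucket'' on one side of the MV family via a Cauchy--Schwarz argument (in the spirit of Lemma~\ref{lem-cp}), and then use the $q$-restricted hypothesis in a final pigeonhole to enforce the divisibility condition on all pairs. First I would consider the distribution $\nu$ on $\Z_{r_3}$ defined by $\nu(x) = \Pr_{u \sim U,\, v \sim V}[\langle u,v\rangle/(r_1 r_2) \equiv x \pmod{r_3}]$, which is well-defined since the hypothesis forces $r_1 r_2 \mid \langle u,v\rangle$. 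By $q$-restrictedness the support of $\nu$ has size at most $q$, so $\cp(\nu)\geq 1/q$, and Parseval yields $\sum_j |\hat\nu(j)|^2 \geq r_3/q$. Claim~\ref{wlog} lets me assume $r_3$ has at most $q$ prime factors; Claim~\ref{highorder} with $S = q^{10q}$ then bounds the number of $j \in \Z_{r_3}$ of order below $r_3/q^{10q}$ by $(r_3/q^{10q})(10q \log q)^q$, a negligible fraction of $r_3/q$. Therefore some $j$ whose order $s := r_3/\gcd(j, r_3)$ satisfies $s \geq \max\{2,\, r_3/q^{10q}\}$ also satisfies $|\hat\nu(j)|^2 \geq 1/(2q)$; this $s$ is the divisor promised by the conclusion.

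Second, I would translate this Fourier bias into a collision-probability bound. Writing $u = a + r_1 A$ and $v = b + r_2 B$ using that $U^{(r_1)}, V^{(r_2)}$ are constant, expanding $\langle u,v\rangle = \langle a,b\rangle + r_2\langle a,B\rangle + r_1\langle A,b\rangle + r_1 r_2\langle A, B\rangle$, and using the derived conditions $\langle a, B\rangle \equiv c_2 \pmod{r_1}$ and $\langle A, b\rangle \equiv c_1 \pmod{r_2}$ (both forced by $\langle u,v\rangle \equiv 0 \pmod{r_1 r_2}$), the character $\chi_j(\langle u,v\rangle) = e^{2\pi i j \langle u,v\rangle/m}$ factors as a constant times a product of magnitude-one functions of $A \bmod r_2 s$ and $B \bmod r_1 s$, together with the order-$s$ bilinear character in $A \bmod s$ and $B \bmod s$. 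Adapting the Cauchy--Schwarz calculation underlying Lemma~\ref{lem-cp} to these reduced distributions on $\Z_s^n$ yields
\[
\cp\!\left(\mu_U^{(r_1 s)}\right) \cdot \cp\!\left(\mu_V^{(r_2 s)}\right) \;\geq\; \frac{1}{2q\, s^n}.
\]
By symmetry I may assume $\cp(\mu_U^{(r_1 s)}) \geq 1/(\sqrt{2q}\, s^{n/2})$, so some bucket $B_{r_1 s}(w, U)$ contains at least $t/(\sqrt{2q}\, s^{n/2})$ vectors; let $T_0 \subseteq [t]$ be its index set, and set $(U_0, V_0) = ((u_i)_{i \in T_0}, (v_i)_{i \in T_0})$, which is a sub-MV family with $U_0^{(r_1 s)}$ constant (equal to some $a'$).

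The hardest step will be to enforce $\langle u', v'\rangle \equiv 0 \pmod{r_1 r_2 s}$ for all (not just diagonal) pairs in the final family. Writing $u_j = a' + r_1 s\, \tilde u_j$ and using $\langle u_i, v_i\rangle = 0$, a direct computation shows that for all $i, j \in T_0$,
\[
\langle u_j, v_i\rangle \;\equiv\; r_1 s\, \langle \tilde u_j - \tilde u_i,\, b\rangle \pmod{r_1 r_2 s},
\]
so the divisibility condition reduces to requiring $\langle \tilde u, b\rangle \bmod r_2$ to be constant across retained indices. The key observation is that the $q$-restricted hypothesis bounds the number $k$ of distinct values of $\langle \tilde u, b\rangle \bmod r_2$ attained on $U_0$ by $q$: their pairwise differences (through the displayed identity) would otherwise yield at least $k$ distinct values of $\langle u_j, v_i\rangle \bmod r_1 r_2 s$, but only $q$ are available. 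A final pigeonhole then extracts $T' \subseteq T_0$ with $|T'| \geq |T_0|/q \geq t/(q\sqrt{2q}\, s^{n/2}) \geq t/(q^{24} s^{n/2})$ on which $\langle \tilde u, b\rangle \bmod r_2$ is constant and consequently every $\langle u', v'\rangle \equiv 0 \pmod{r_1 r_2 s}$; the constancy of $U'^{(r_1 s)}$ is inherited from $U_0$, and the symmetric case $\cp(\mu_V^{(r_2 s)}) \geq 1/(\sqrt{2q}\, s^{n/2})$ produces a sub-family with $V'^{(r_2 s)}$ constant by the analogous argument.
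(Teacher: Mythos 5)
Your overall plan mirrors the paper's: pass to the distribution of $\langle u,v\rangle/(r_1r_2)$ on $\Z_{r_3}$, extract a Fourier coefficient of large magnitude and high order $s$, convert the bias into a large bucket modulo $r_1 s$ (or $r_2 s$) via a collision-probability argument, and pigeonhole using $q$-restrictedness to force $\langle u',v'\rangle\equiv 0\pmod{r_1r_2s}$. Your final pigeonhole is in fact a nice streamlining: by subtracting the diagonal relation $\langle u_i,v_i\rangle=0$ you reduce the cross-term to $r_1s\langle \tilde u_j-\tilde u_i,b\rangle$ alone, so a single pigeonhole (cost $q^{-1}$) suffices, whereas the paper constrains the two cross-terms $r_1s\langle u',v_0\rangle$ and $r_2\langle u_0,v'\rangle$ in two separate rounds (cost $q^{-2}$).

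There are, however, two gaps. First, the Parseval counting argument ($\sum_j|\hat\nu(j)|^2\geq r_3/q$, minus the low-order $j$'s) only produces a nonzero $j$ of order $\geq\max\{2,r_3/q^{10q}\}$ and $|\hat\nu(j)|^2\geq 1/(2q)$ when $r_3$ is large relative to $q$ (roughly $r_3>2q$, so that the count $\gtrsim r_3/(2q)$ exceeds the single trivial index $j=0$). The paper handles the regime $r_3\leq 4q'$ by a separate statistical-distance argument (its Case 1) yielding a weaker bias $\approx 1/(12q^{3/2})$, and this case distinction is genuinely needed. Second, the ``derived conditions'' $\langle a,B\rangle\equiv c_2\pmod{r_1}$, $\langle A,b\rangle\equiv c_1\pmod{r_2}$ do not follow from $\langle u,v\rangle\equiv 0\pmod{r_1r_2}$ when $\gcd(r_1,r_2)>1$; and even granting a clean factorization of the character into a bilinear part $\omega'^{\langle A,B\rangle}$ and magnitude-one one-variable factors $g(B),h(A)$, removing $g,h$ requires two further Cauchy--Schwarz applications, so the collision-probability bound is $\cp\cdot\cp\geq \epsilon^8/s^n$, not $\epsilon^2/s^n$ as you state. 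This is exactly why the paper first applies Cauchy--Schwarz twice to pass from $\E[\omega'^{\langle u,v\rangle/(r_1r_2)}]$ to $\E[\omega'^{\langle (u-\tilde u)/r_1,(v-\tilde v)/r_2\rangle}]$ (a genuine bilinear form over $\Z_s^n$, with bias $\epsilon^4$), and only then invokes Lemma~\ref{lem-cp}. Both gaps are repairable with the paper's devices, and since $\epsilon^4\cdot q^{-1}$ with $\epsilon=1/(12q^{3/2})$ still exceeds $q^{-24}$, the loose constant in the statement absorbs the correction; but as written the claimed bounds $|\hat\nu(j)|^2\geq 1/(2q)$ and $\cp\cdot\cp\geq 1/(2qs^n)$ are not established in full generality.
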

\bnote{t was unused earlier, now linked to size}

Applying Lemma~\ref{lem-maincs} iteratively we can prove the following bound.

\begin{lem}\label{lem-help}$\MV_{r_1,r_2}\left(m,n,q\right) \leq   12q \cdot q^{24\log \frac{m}{r_1 r_2}}\left(\frac{m}{r_1r_2}\right)^{n/2}$.
\end{lem}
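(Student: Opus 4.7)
The plan is to prove Lemma~\ref{lem-help} by induction on the integer $k = m/(r_1 r_2)$, using Lemma~\ref{lem-maincs} as the inductive engine that decreases $k$ by a factor of $s \ge 2$ each time while only paying a controlled multiplicative cost.

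For the base case $k = 1$, we have $r_1 r_2 = m$, so by hypothesis $\langle u, v\rangle = 0 \pmod m$ for every $u \in U, v \in V$. But the MV property forbids nonzero inner products from being $0 \pmod m$, so $|(U,V)| \le 1 \le 12q$, which matches the claimed bound since $q^{24\log 1} k^{n/2} = 1$.

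For the inductive step ($k \ge 2$), I take an arbitrary $q$-restricted MV family $(U,V)$ satisfying the two conditions defining $\MV_{r_1,r_2}(m,n,q)$, and set $t = |(U,V)|$. If $t < 12q$ we are already below the claimed bound, so assume $t \ge 12q$. Writing $m = r_1 r_2 r_3$ with $r_3 = k \ge 2$, Lemma~\ref{lem-maincs} produces a divisor $s \mid r_3$ with $s \ge 2$ and a sub-family $(U',V') \subseteq (U,V)$ of size at least $s^{-n/2} q^{-24} t$ which is either $\MV_{r_1 s, r_2}$-admissible or $\MV_{r_1, r_2 s}$-admissible. In either case $(U',V')$ witnesses a value of $\MV_{r_1', r_2'}(m,n,q)$ with $r_1' r_2' = r_1 r_2 s$, and crucially $m/(r_1' r_2') = k/s$ is a positive integer strictly less than $k$, so the inductive hypothesis applies and gives
\[
|(U',V')| \;\le\; 12q \cdot q^{24 \log(k/s)} (k/s)^{n/2}.
\]

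Combining the two inequalities yields
\[
t \;\le\; s^{n/2} q^{24} \cdot 12q \cdot q^{24 \log(k/s)} (k/s)^{n/2} \;=\; 12q \cdot q^{24(1 + \log(k/s))} \cdot k^{n/2}.
\]
Since $s \ge 2$, we have $1 + \log(k/s) = \log(2k/s) \le \log k$, and therefore $t \le 12q \cdot q^{24 \log k} \cdot k^{n/2}$, which is exactly the desired bound. There is no real obstacle here; the proof is a clean induction once Lemma~\ref{lem-maincs} is in hand, and the only things to verify are (i) $k/s$ is a strict decrease (which needs $s \ge 2$, provided by the lemma), (ii) the exponent arithmetic absorbing the $q^{24}$ loss into $q^{24\log k}$ (which again needs $s \ge 2$), and (iii) the $s^{n/2}$ loss cancelling cleanly against $(k/s)^{n/2}$ to give $k^{n/2}$.
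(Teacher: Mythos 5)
Your proof is correct and is essentially the paper's proof: the paper phrases it as backward induction on $r_1r_2$ and writes the inductive step as a contradiction, while you do forward induction on $k = m/(r_1r_2)$ and argue directly, but the invocation of Lemma~\ref{lem-maincs}, the size accounting $t \le s^{n/2}q^{24}\cdot 12q\cdot q^{24\log(k/s)}(k/s)^{n/2}$, and the absorption of the $q^{24}$ loss via $s\ge 2$ are all the same.
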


Given Lemma~\ref{lem-help} and Lemma~\ref{lem-maincs}, we now show how to deduce Theorem~\ref{thm-mn2}.

\begin{proof}[Proof of Theorem~\ref{thm-mn2}] Observe that for any matching vector family $\left(U,V\right)$ in $\Z_m^n$, $U^{\left(1\right)}$ and $V^{\left(1\right)}$ are constants and $\langle u,v \rangle=0 \ \left(mod\ 1\right)$ for all $u \in U, v \in V$. Thus, $\MV\left(m,n,q\right)=\MV_{1,1}\left(m,n,q\right)$. 
\emph{Case 1: $m \leq q^{10q}$}. Applying Lemma \ref{lem-help}, we get $\MV\left(m,n,q\right)=\MV_{1,1}\left(m,n,q\right) \leq  12q \cdot q^{24\log q^{10q}}\left(m\right)^{n/2} \leq 12q \cdot q^{24(1+\log q^{10q})}\left(m\right)^{n/2}$.

\emph{Case 2: $m > q^{10q}$}. By Lemma~\ref{lem-help}, we know that for $s\geq m/q^{10q}$, $\MV_{1,s}\left(m,n,q\right)\leq 12q \cdot q^{24\log \frac{m}{s}}\left(\frac{m}{s}\right)^{n/2}\leq 12q \cdot q^{24\log q^{10q}}\left(\frac{m}{s}\right)^{n/2}$. Similarly, we have for $s\geq m/4q$, $\MV_{s,1}\left(m,n,q\right)\leq 12q \cdot q^{24\log q^{10q}}\left(\frac{m}{s}\right)^{n/2}$.

Now, suppose there is a $q-restricted$ MV family $(U,V)$ in $\Z_m^n$ of size $t>12q \cdot q^{24\left(1+\log q^{10q} \right)}m^{n/2}$. Applying Lemma~\ref{lem-maincs} with $r_1=r_2=1$, we get a $q-restricted$ MV family $(U',V')\subseteq (U,V)$ of size $t' \geq  s^{-n/2}q^{-24}t >q^{24\log q^{10q}}\left(\frac{m}{s}\right)^{n/2}$ where $s\geq m/q^{10q}$ such that
\begin{itemize}
\item $\langle u',v' \rangle=0 \ \left(mod\ s\right)$ for all $u' \in U', v' \in V'$.
\item Either $U'^{\left(s\right)}$ is constant or $V'^{\left(s\right)}$ is constant.
\end{itemize}
But, by the previous paragraph, we have for $s\geq m/q^{10q}$, $\MV_{s,1}\left(m,n,q\right)$ and $\MV_{1,s}\left(m,n,q\right)$ are at most $12q \cdot q^{24\log q^{10q}}\left(\frac{m}{s}\right)^{n/2}$. This leads to a contadiction. 
\end{proof}

\subsection{Proof of Lemma \ref{lem-maincs}}

By assumption we have that $\langle u,v \rangle=0 \ \left(mod\ r_1r_2\right)$ for all $u \in U, v \in V$. So, we can consider $\frac{\langle u,v \rangle}{r_1r_2} \in \Z_{r_3}$. Also, by hypothesis, the inner products $\frac{\langle u,v \rangle}{r_1r_2}$ occupy $q'\leq q$ residues in $\Z_{r_3}$. We have that
\begin{itemize}
\item For $1 \leq i \leq t$, $\frac{\langle u_i,v_i \rangle}{r_1r_2}=0 \ \left(mod\ r_3\right)$ since $\langle u_i,v_i \rangle=0 \pmod{m}$.
\item For $1 \leq i,j \leq t$, $i \neq j$, $\frac{\langle u_i,v_j \rangle}{r_1r_2} \neq 0 \ \left(mod\ r_3\right)$ since $\langle u_i,v_j \rangle \ne 0 \pmod{m}$.
\end{itemize}

Let $\mu$ denote the distribution over $\Z_{r_3}$ defined by $\frac{\langle u_i,v_j \rangle}{r_1r_2} \mod r_3$ where $u_i, v_j$ are drawn independently and uniformly from $U,V$ respectively. 

\emph{Case 1: $4q' \geq r_3$}. Observe that $\mu$ outputs $0$ only when $i=j$. Therefore, $\Pr[\mu=0]=1/t \leq 1/12q' \leq 1/3r_3$. On the other hand, $\Pr[\U_{r_3}=0]=1/r_3$. This implies that $\Delta\left(\mu,\U_{r_3}\right) \geq 1/3r_3$.
Thus, applying Lemma \ref{lem-stat} with $\omega=exp\left(2\pi i/r_3\right)$, we get that for some $1 \leq j \leq r_3-1$, \[\left|\E_{x \sim \mu}\left[\left(\omega^j\right)^{x} \right]\right|\geq \frac{2}{3r_3\sqrt{r_3}} \geq \frac{1}{12q'^{3/2}}.\]
Let $\omega'=\omega^j$ and $ord(\omega')$ (the order of $\omega'$) be $s=r_3/gcd\left(r_3,j\right)$. Also, note that as $j\geq 1$, we have $s \geq 2$. Also, trivially, $s \geq r_3/q'^{10q'} \geq r_3/q^{10q}$.

\emph{Case 2: $4q' <r_3$}. Let $X$ be the random variable that picks a random $0\leq j \leq r_3-1$ and outputs $\left|\E_{x \sim \mu}\left[\left(\omega^j\right)^{x} \right]\right|$. We will now show that with significant probability $X^2 \geq 1/2q'$. First observe that $X\leq 1$. On the other hand, we will show that $E\left[X^2\right]$ is large. To see this, let $Z=\{z_1, \cdots z_{q'}\}$ be the $q'$ residues forming the support of $\mu$. Also, for $1 \leq i \leq q'$, let $\alpha_i\stackrel{def}{=}\mu(z_i)$. Then, 
\begin{eqnarray*}
\E_j\left[X^2\right]&=&\E_j\left[\sum_{1 \leq i,i' \leq q'}\alpha_i \alpha_{i'} \omega^{j\left(z_i-z_{i'}\right)}\right]\\
&=&\sum_{1 \leq i \leq q'}\alpha_i^2\\
&\geq &1/q'
\end{eqnarray*}

Therefore, we claim that $\Pr[X^2\geq 1/2q']\geq 1/2q' \geq 1/2q$. If not, then 
\begin{eqnarray*}
E_j\left[X^2\right]&=&\Pr[X^2\geq 1/2q']E_j\left[X^2|X^2 \geq 1/2q'\right]+\Pr[X^2 <1/2q']E_j\left[X^2|X^2 <1/2q'\right]\\
&<& 1/2q'  +  1/2q'\\
&=&1/q'
\end{eqnarray*}
which is a contradiction.

By the above, we already have that there exists some $\omega'$ such that $\left|\E_{x \sim \mu}\left[\left(\omega'\right)^{x} \right]\right| \geq 1/\sqrt{2q'}$ and $ord\left(\omega'\right)\geq 2$ since $r_3/2q'>1$ and thus $\omega'$ is not trivial.

Now, we shall show the existence of $\omega'$ of much higher order provided $r_3>q'^{10q'}$. By Claim \ref{highorder}, for $S=q'^{10q'}$ and $N=r_3$, and noting that $r_3$ has atmost $q$ prime factors by Claim \ref{wlog}, we have 
\[Pr_j[ord\left(\omega^j\right)\leq r_3/S]\leq 1/4q'\]

Thus, with probabilty at least $1/2q'-1/4q'=1/4q'$, a random $j$ satisfies 
\begin{itemize}
\item $\left|\E_{x \sim \mu}\left[\left(\omega^j\right)^{x} \right]\right| \geq 1/\sqrt{2q'} \geq \frac{1}{12q^{3/2}}$
\item $s=ord\left(\omega^j\right)  \geq r_3/S$
\end{itemize}
Also, as $r_3/4q'>1$ the above two conditions are true for some $j \neq 0$.

Now, we combine the above two cases as follows.
Let $\omega'=\omega^j$ and $\eps = \frac{1}{12q^{3/2}}$. We have shown by the above case-by-case analysis that 
\begin{itemize}
\item $\left|\E_{x \sim \mu}\left[\left(\omega'\right)^{x} \right]\right|\geq \epsilon$
\item $s=ord(\omega')$ is such that $s \geq \max\{2,r_3/q^{10q}\}$
\end{itemize}

Using the Cauchy-Schwartz inequality twice we get
\begin{eqnarray*}
&&\left|\E_{u \sim U, v \sim V}\left[\left(\omega'\right)^{\langle u,v \rangle/r_1r_2}\right]\right|\geq \epsilon\\
&\implies& \left|\E_{u,\tilde{u} \sim U, v \sim V}\left[\left(\omega'\right)^{\langle u-\tilde{u},v \rangle/r_1r_2}\right]\right|\geq \epsilon^2\\
&\implies& \left|\E_{u,\tilde{u} \sim U, v,\tilde{v} \sim V}\left[\left(\omega'\right)^{\langle u-\tilde{u},v-\tilde{v} \rangle/r_1r_2}\right]\right|\geq \epsilon^4\\
&\implies& \left|\E_{u,\tilde{u} \sim U, v,\tilde{v} \sim V}\left[\left(\omega'\right)^{\langle \left(u-\tilde{u}\right)/r_1,\left(v-\tilde{v}\right)/r_2 \rangle}\right]\right|\geq \epsilon^4.
\end{eqnarray*}
We need to explain the last expression. Since by assumption $U^{\left(r_1\right)}$ and $V^{\left(r_2\right)}$ are constants, $\left(u-\tilde{u}\right)/r_1 \in \Z_{m}^n$ and $\left(v-\tilde{v}\right)/r_2 \in \Z_{m}^n$ are well defined. Thus, we can fix $\tilde{u}$ and $\tilde{v}$ by an averaging argument such that
\[\left|\E_{u \sim U, v \sim V}\left[\left(\omega'\right)^{\langle \left(u-\tilde{u}\right)/r_1,\left(v-\tilde{v}\right)/r_2 \rangle}\right]\right|\geq \epsilon^4.\]

 Let $U'=\left(u_1',u_2', \cdots u_t'\right), V'=\left(v_1',v_2', \cdots v_t'\right)$ where $u_i'= \left(u_i -\tilde{u}\right)/r_1$ and $v_i'= \left(v_i -\tilde{v}\right)/r_2$. Notice that $U'$ and $V'$ are not assumed to be a MV family (later we will derive from them a MV family). We now define two probability distributions $\mu^{U'}$ and $\mu^{V'}$ over $\Z_s^n$. For each $w \in \Z_s^n$, let $\mu^{U'}\left(w\right)=\left|B_s\left(w,U'\right)\right|/\left|U'\right|$ and $\mu^{V'}\left(w\right)=\left|B_s\left(w,V'\right)\right|/\left|V'\right|$. That is, $\mu^{U'}\left(w\right)$ is the probability
 that $u'^{(s)}=w$ where $u'$ is chosen uniformly in $U'$, and similarly for $\mu^{V'}\left(w\right)$.
 Therefore, since the order of $w'$ is $s$, we have that \[ \left|\E_{w_1 \sim \mu^{U'},w_2 \sim \mu^{V'}}\left[\left(\omega'\right)^{\langle w_1,w_2 \rangle}\right]\right|\geq \epsilon^4.\]

Recalling that $s$ is the order of $\omega'$ and applying Lemma \ref{lem-cp}, we get $\cp\left(\mu^{U'}\right)\cp\left(\mu^{V'}\right) \geq \epsilon^8/s^n$. Therefore, one of $\cp\left(\mu^{U'}\right)$, $\cp\left(\mu^{V'}\right)$, say $\cp\left(\mu^{U'}\right)$, is at least $\epsilon^4/s^{n/2}$. Let $w^*$ be the point of maximum probability mass given by $\mu^{U'}$. Then,
\begin{equation*}
\mu^{U'}\left(w^*\right)=\mu^{U'}\left(w^*\right)\sum_{w \in \Z_s^n}\mu^{U'}\left(w\right)
\geq\sum_{w \in \Z_s^n}\mu^{U'}\left(w\right)^2
=\cp\left(\mu^{U'}\right)
\geq  \epsilon^4/s^{n/2}.
\end{equation*}
Now, $\mu^{U'}\left(w^*\right) \geq  \epsilon^4/s^{n/2}$ means that $\left|\{u \in U: \frac{u-\tilde{u}}{r_1}=w^*\ \left(mod\ s\right)\}\right|\geq t \epsilon^4/s^{n/2} $. Equivalently, \[\bigg|\big\{u \in U: u-\tilde{u}=r_1w^*\ \left(mod\ r_1s\right)\big\}\bigg|\geq t \epsilon^4/s^{n/2}.\]

Let $T'=\left(i:u_i=\tilde{u}+r_1w^*\ \left(mod\ r_1s\right)\right)$. Now, define $U''=\left(u_i:i \in T'\right)$ and $V''=\left(v_i:i \in T'\right)$. Observe that $\left(U'',V''\right)$ is a matching vector family in $\Z_m^n$ such that
\begin{itemize}
\item $U''^{\left(r_1s\right)}$ and $V''^{\left(r_2\right)}$ are constants.
\item $\left|\left(U'',V''\right)\right| \geq t\left(\epsilon^4/s^{n/2}\right)$.
\end{itemize}

The only thing left is to show that $\langle u,v\rangle=0\ \left(mod\ r_1r_2s\right)$ for all $u \in U'',v \in V''$. This may not be true in general. However, we can take a large subset of the matching vector family so that the resulting matching vector family satisfies this condition.
To see this, let $u \in U'', v \in V''$ be arbitrary. Now, $u=r_1s \cdot u'+u_0$ and $v=r_2 \cdot v'+v_0$ where $u', v'$ depend on $u,v$ respectively and $u_0,v_0$ are independent of $u,v$. Then, \[\langle u,v \rangle=r_1r_2s \langle u',v'\rangle+r_1s \langle u',v_0\rangle +r_2 \langle u_0,v'\rangle+ \langle u_0,v_0\rangle.\]
As $u$ varies over $U''$, $\langle u',v_0\rangle$ takes at most $q$ values modulo $r_2$. Hence, $r_1s\langle u',v_0\rangle$ takes at most $q$ values modulo $r_1r_2s$. Therefore, there exist at least $\left(1/q\right)\left|U''\right|$ elements of $U''$ such that $r_1s\langle u',v_0\rangle$ is a constant modulo $r_1r_2s$. We take the corresponding elements from $V''$ to form a matching vector family $\left(U''',V'''\right)\subseteq\left(U'',V''\right)$. We apply another round using the same idea on $U''',V'''$, this time ensuring that $r_2\langle u_0,v'\rangle$ is constant modulo $r_1r_2s$ as $v$ varies over a large fraction of $V'''$. Thus, we end up with $\tilde V$ of size at least $\left(1/q\right)\left|V'''\right|$ such that $r_2\langle u_0,v_i\rangle$ is a constant modulo $r_1r_2s$. We take the corresponding subset $\tilde U$ from $U'''$ so that $(\tilde U, \tilde V)\subseteq \left(U''',V'''\right)$ is a matching vector family. Denote the size of $(\tilde U, \tilde V)$  by $\tilde{t}$. Note that $\tilde{U}=\left(\tilde{u}_1,\cdots ,\tilde{u}_{\tilde{t}}\right), \tilde{V}=\left(\tilde{v}_1,\cdots ,\tilde{v}_{\tilde{t}}\right)$ is a matching vector family in $\Z_m^n$ of size at least $\left(1/q^2\right)t\left(\epsilon^4/s^{n/2}\right)=s^{-n/2}q^{-\left(8+4\log_q \left(12\right)\right)}t\geq s^{-n/2}q^{-\left(8+4\log_2 \left(12\right)\right)}t \geq s^{-n/2}q^{-24}t$.  Also, as $\langle u,v\rangle$ is a constant modulo $r_1r_2s$, for $u \in \tilde{U}, v \in \tilde{V}$, and $\langle \tilde{u}_i,\tilde{v}_i \rangle=0\ \left(mod \ r_1r_2s\right)$, we get that $\langle u,v\rangle =0 \ \left(mod\ r_1r_2s\right)$, for $u \in \tilde{U}, v \in \tilde{V}$. This concludes the proof.
\qed

\subsection{Proof of Lemma \ref{lem-help}}

We prove the lemma by backward induction on $r_1r_2|m$. That is, to prove the claim about $\MV_{r_1,r_2}\left(m,n,q\right)$, we assume the inductive hypothesis for $\MV_{r_1',r_2'}\left(m,n,q\right)$ where $r_1'r_2'>r_1r_2$ and $r_1'r_2'|m$.

\noindent{\bf Base Case.} The base case of $r_1r_2=m$ is trivial. To see this, observe that if $\langle u,v \rangle=0 \ \left(mod\ m\right)$ for all $u \in U, v \in V$, then by the definition of a matching vector family in $\Z_m^n$, the size of such a family cannot exceed $1$. Hence, for $r_1r_2=m$, $\MV_{r_1,r_2}\left(m,n,q\right)=1 \leq  12q\cdot q^{24\log \frac{m}{r_1r_2}}\left(\frac{m}{r_1r_2}\right)^{n/2}$.

\noindent{\bf Inductive Step.} Let $m=r_1r_2r_3$ with $r_1r_2<m$ (that is, $r_3 \geq 2$). By the inductive hypothesis we have  $\MV_{r_1',r_2'}\left(m,n,q\right) \leq 12q \cdot q^{24\log \frac{m}{r_1'r_2'}}\left(\frac{m}{r_1'r_2'}\right)^{n/2}$ for all $r_1',r_2'$ such that $r_1'r_2'>r_1r_2$ and $r_1'r_2'|m$. We need to show that  $\MV_{r_1,r_2}\left(m,n,q\right)\leq  12q \cdot q^{24\log \frac{m}{r_1r_2}}\left(\frac{m}{r_1r_2}\right)^{n/2}$. Suppose this is false, so that there exists a $q-restricted$ matching vector family $\left(U,V\right)$ in $\Z_m^n$ with $U=\left(u_1, \cdots u_t\right),V=\left(v_1,\cdots v_t\right)$ where  $t >12q \cdot q^{24\log \frac{m}{r_1r_2}}\left(\frac{m}{r_1r_2}\right)^{n/2}$ such that
\begin{itemize}
\item $U^{\left(r_1\right)}$ and $V^{\left(r_2\right)}$ are constants.
\item $\langle u,v \rangle=0 \ \left(mod\ r_1r_2\right)$ for all $u \in U, v \in V$.
\end{itemize}
Note that $t \geq 12q$. Therefore, applying Lemma \ref{lem-maincs}, there exists $s|r_3$ with $s \geq 2$ and matching vector family $\left(U',V'\right)\subseteq \left(U,V\right)$ such that $\left|\left(U',V'\right)\right| \geq  s^{-n/2}q^{-24}t$ where \begin{itemize}
\item $\langle u',v' \rangle=0 \ \left(mod\ r_1r_2s\right)$ for all $u' \in U', v' \in V'$.
\item either $U'^{\left(r_1s\right)}$ is constant or $V'^{\left(r_2s\right)}$ is constant.
\end{itemize}
Without loss of generality, we assume that $U'^{\left(r_1s\right)}$ is a constant. Therefore, \begin{eqnarray*}
\left|\left(U',V'\right)\right|& > & s^{-n/2}q^{-24}\cdot 12q\cdot q^{24\log \frac{m}{r_1r_2}}\left(\frac{m}{r_1r_2}\right)^{n/2}\\
&&=12q \cdot q^{24\left(\log \frac{m}{r_1r_2}-1\right)}\left(\frac{m}{r_1r_2s}\right)^{n/2}\\
&&\geq 12q \cdot q^{24\log \frac{m}{r_1r_2s}}\left(\frac{m}{r_1r_2s}\right)^{n/2},
\end{eqnarray*}
where the last inequality used the fact that $s \geq 2$. This however contradicts the inductive hypothesis.
\qed

\section{Matrices over $\Z_m$}\label{sec-zmmatrices}

\paragraph{Notations:} For a $t \times s$ matrix $M$ over $\Z_m$ and for lists $T \subseteq [t], S \subseteq [s]$ the $T \times S$ submatrix of $M$ is the matrix with rows in $T$ and columns in $S$. For $i\in [s]$ and $j \in [t]$ we denote the $i$'th row of $M$ by $M(i:)$ and the $j$'th column by $M(:j)$.

\begin{define}[Span of a set]\label{span}For $A \subseteq \Z_m^n$ let $\spana\left(A\right)$ denote the additive subgroup generated by $A$. We say that a set $A$ spans $u \in \Z_m^n$ if $u \in \spana(A)$.
\end{define}

\begin{define}[Rank of a matrix over $\Z_m$] Let $M$ be a  $t \times t$ matrix over $\Z_m$. Then $\rank\left(M\right)$ is the smallest $r$ such that $M=AB$ where $A$ is an $t \times r$ martrix over $\Z_m$ and $B$ is an $r \times t$ matrix over $\Z_m$.
\end{define}

\begin{define}[Column rank of a matrix over $\Z_m$] Let $M$ be a  $t \times t$ matrix over $\Z_m$. Let $\colspan\left(M\right)$ denote the subgroup of $\Z_m^t$ generated by the columns of $M$. The column rank of $M$ over $\Z_m$ is defined as $$\colrank\left(M\right)=\log_m \left|\colspan\left(M\right)\right|.$$ The column rank is, in general, a real number in the range $[0,t]$.
\end{define}

Since the rank can behave in unexpected ways over $\Z_m$, we make sure to prove some of the basic facts that we will be using later on.
\begin{fact}\label{rankmonotonic}Let $M$ be a  $t \times t$ matrix over $\Z_m$ and let $M'$ be any submatrix of $M$. Then $\colrank\left(M'\right)\leq \colrank\left(M\right)$.\end{fact}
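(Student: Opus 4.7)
The plan is to split the passage from $M$ to the submatrix $M' = M[T,S]$ (rows in $T \subseteq [t]$, columns in $S \subseteq [t]$) into two elementary operations, each of which can only decrease the size of the column span, and then take $\log_m$ at the end.

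First I would restrict to the chosen rows. Let $M_T$ be the $|T| \times t$ matrix obtained from $M$ by keeping only the rows indexed by $T$, and let $\pi_T : \Z_m^t \to \Z_m^{|T|}$ be the coordinate projection onto the entries in $T$. Since $\pi_T$ is a group homomorphism and the columns of $M_T$ are precisely the images under $\pi_T$ of the columns of $M$, we have $\colspan(M_T) = \pi_T(\colspan(M))$. A homomorphic image of a finite group has size at most that of the source, so $|\colspan(M_T)| \le |\colspan(M)|$.

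Next I would restrict to the chosen columns. The columns of $M' = M_T[:,S]$ form a sublist of the columns of $M_T$, so the additive subgroup they generate is a subgroup of $\colspan(M_T)$. By Lagrange, $|\colspan(M')| \le |\colspan(M_T)|$. Chaining the two inequalities gives $|\colspan(M')| \le |\colspan(M)|$, and taking $\log_m$ of both sides yields $\colrank(M') \le \colrank(M)$.

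I do not expect any real obstacle here: the only subtlety is that over $\Z_m$ we cannot invoke the usual field-theoretic intuitions (e.g.\ there is no notion of linear independence with unique coefficients), so the argument has to be phrased entirely in terms of abelian subgroups and the two structural facts (homomorphic images shrink, subgroups of finite groups shrink). The logarithm base $m$ is monotone, so no care is needed there even though $\colrank$ can be non-integer.
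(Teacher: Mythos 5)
Your proof is correct and takes essentially the same approach as the paper's. The paper constructs a single injective lift from $\colspan(M')$ into $\colspan(M)$ (by fixing a representation $x=\sum_j \alpha_j M'(:j)$ and mapping to $\sum_j \alpha_j M(:j)$), whereas you factor the argument into two cleaner steps — a projection homomorphism for row restriction and a subgroup inclusion (Lagrange) for column restriction — but both rest on the same observation that dropping rows is a coordinate projection and dropping columns is a sub-generating set, so the difference is purely organizational.
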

\begin{proof}
	Suppose $M'$ is given by the first $t'$ rows and the first $t''$ columns of $M$.	 We will define an injective map $f:\colspan\left(M'\right) \rightarrow \colspan\left(M\right)$.  Given any $x \in \colspan\left(M'\right)$ we can write $x = \sum_{j=1}^{t''} \alpha_j \cdot M'(:j)$ in some fixed way (there might be several choices of $\alpha_j$). Define $f(x) = \sum_{j=1}^{t''} \alpha_j \cdot M(:j)$. Then, $x$ is clearly the restriction of $f(x)$ to the first $t'$ indices and so the map is injective.	
\end{proof}

\begin{fact}\label{rankmon}Let $M$ be a  $t \times t$ matrix over $\Z_m$ and let $s|m$.  Then $\rank\left(M^{\left(s\right)}\right)\leq \rank\left(M\right)$.\end{fact}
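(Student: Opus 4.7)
The plan is to use the definition of rank directly, leveraging the fact that reduction modulo $s$ is a ring homomorphism from $\Z_m$ to $\Z_s$ whenever $s \mid m$.

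First I would let $r = \rank(M)$ and take an explicit factorization witnessing this rank: by the definition there exist a $t \times r$ matrix $A$ and an $r \times t$ matrix $B$, both with entries in $\Z_m$, such that $M = AB$ as a matrix equation over $\Z_m$. The key point is that the map $\Z_m \to \Z_s$ sending $x \mapsto x \pmod{s}$ is a well-defined ring homomorphism whenever $s \mid m$, so it commutes with sums and products of entries. Applying this map entrywise to both sides of $M = AB$ yields the identity $M^{(s)} = A^{(s)} B^{(s)}$ as a matrix equation over $\Z_s$.

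The matrix $A^{(s)}$ is $t \times r$ and $B^{(s)}$ is $r \times t$, both with entries in $\Z_r$ (sorry, in $\Z_s$), and their product equals $M^{(s)}$. So $M^{(s)}$ admits a factorization through an intermediate dimension of $r$ over $\Z_s$, which by the definition of $\rank$ gives $\rank(M^{(s)}) \leq r = \rank(M)$, as desired.

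There is no real obstacle here; the only thing one has to be slightly careful about is to invoke the homomorphism property of the reduction map $\Z_m \to \Z_s$ (which needs $s \mid m$) to justify that matrix multiplication is preserved by entrywise reduction. Once that is observed, the factorization witnessing the rank over $\Z_m$ descends immediately to a factorization over $\Z_s$ of the same inner dimension, and the conclusion follows.
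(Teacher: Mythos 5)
Your proof is correct and matches the paper's own argument exactly: both take a factorization $M = AB$ witnessing $\rank(M) = r$, reduce entrywise modulo $s$ to get $M^{(s)} = A^{(s)} B^{(s)}$, and conclude $\rank(M^{(s)}) \le r$. You simply spell out more explicitly that the reduction map $\Z_m \to \Z_s$ is a ring homomorphism (requiring $s \mid m$), which the paper leaves implicit.
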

\begin{proof}
	Suppose there exist an $t \times r$ matrix $A$ and an $r \times t$ matrix $B$ over $\Z_m$ such that $M=AB$. Then $M^{\left(s\right)}=A^{\left(s\right)}B^{\left(s\right)}$ and so the rank of $M^{(s)}$ is at most $r$.
\end{proof}

We will need the following claims relating the rank and the column rank of matrices over $\Z_m$.
\begin{claim}\label{clm-rankcolrank} Let $M$ be an $t \times t$ matrix  over $\Z_m$. Then, $$\frac{\rank\left(M\right)}{\log m} \leq \colrank\left(M\right) \leq  \rank\left(M\right).$$
\end{claim}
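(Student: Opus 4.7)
The plan is to prove the two inequalities separately, using the structure theorem for finite abelian groups for the lower bound.

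For the upper bound $\colrank(M) \leq \rank(M)$, I would start with an optimal factorization $M = AB$ with $A$ of size $t \times r$ and $B$ of size $r \times t$, where $r = \rank(M)$. Since each column of $M$ is a $\Z_m$-linear combination of the $r$ columns of $A$, we have $\colspan(M) \subseteq \colspan(A)$. The group $\colspan(A) \subseteq \Z_m^t$ is the image of the homomorphism $\Z_m^r \to \Z_m^t$ sending the $i$-th standard basis vector to the $i$-th column of $A$, so $|\colspan(A)| \leq m^r$. Taking $\log_m$ gives $\colrank(M) \leq r$.

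For the lower bound $\rank(M) \leq \colrank(M) \cdot \log m$, set $r = \colrank(M)$, so $|\colspan(M)| = m^r$. The group $\colspan(M)$ is an abelian group whose exponent divides $m$ (since it sits inside $\Z_m^t$). By the structure theorem, $\colspan(M) \cong \Z_{d_1} \oplus \cdots \oplus \Z_{d_k}$ with each $d_i$ a divisor of $m$ and $d_i \geq 2$. Then $2^k \leq d_1 \cdots d_k = |\colspan(M)| = m^r$, which gives $k \leq r\log m$. Pick $g_1, \ldots, g_k \in \Z_m^t$ corresponding to the generators of the summands; these generate $\colspan(M)$. Form the $t \times k$ matrix $A$ whose columns are $g_1,\ldots,g_k$. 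Because every column $M(:j)$ lies in $\colspan(M)$, it can be written as a $\Z_m$-combination $\sum_i B_{ij} g_i$; assembling the coefficients into a $k \times t$ matrix $B$ yields $M = AB$, so $\rank(M) \leq k \leq r\log m$.

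The main obstacle, if any, is the structure-theorem step: one must ensure that every cyclic summand has order dividing $m$ (so that each $d_i \leq m$ and $\geq 2$), which is what lets us bound the number of summands by $\log m$ times the base-$m$ logarithm of the group order. Everything else is routine; the argument does not use determinants or any field-only machinery, which is consistent with the earlier warning that these cannot be invoked over $\Z_m$.
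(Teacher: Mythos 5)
Your proof is correct, and it takes a genuinely different route on the lower bound. For $\colrank(M) \leq \rank(M)$ you argue exactly as the paper does: factor $M = AB$ and note that $\colspan(M) \subseteq \colspan(A)$ has size at most $m^r$. For the other inequality, the paper works with a minimal generating set $S$ of $\colspan(M)$ and uses a direct pigeonhole argument on the $\{0,1\}$-combinations of $S$: if $|S| \geq r$ and $|\colspan(M)| < 2^r$, two such combinations collide, yielding a $\{\pm 1\}$-relation, and since $\pm 1$ are units in $\Z_m$ this contradicts minimality. You instead invoke the classification of finite abelian groups to decompose $\colspan(M) \cong \Z_{d_1} \oplus \cdots \oplus \Z_{d_k}$ with each $d_i \geq 2$, deduce $2^k \leq |\colspan(M)| = m^{\colrank(M)}$, and then use the $k$ cyclic generators to build a rank-$k$ factorization of $M$. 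Both arguments prove the same quantitative fact -- that a finite abelian group of size $N$ admits a generating set of at most $\log_2 N$ elements -- but the paper's pigeonhole argument is elementary and self-contained, while yours buys brevity by citing the structure theorem. One small remark: you only actually need $d_i \geq 2$, not $d_i \mid m$, to close the estimate, so the "ensure each $d_i$ divides $m$" caveat you flag as a potential obstacle is not load-bearing.
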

\begin{proof}
Let $r=\rank\left(M\right)$ and $r'=\colrank\left(M\right)$.
  We first prove that $r' \leq r$. This is equivalent to proving that $\left|\colspan\left(M\right)\right| \leq m^r$. Let $M=AB$ where $A$ is an $t \times r$ martrix over $\Z_m$ and $B$ is an $r \times t$ matrix over $\Z_m$.  Since the columns of $M$ are all in the span of the columns of $A$ we have that the column span of $M$ can contain at most $m^r$ elements.

We now prove that $r' \geq r/\left(\log m\right)$ or, equivalently, $\left|\colspan\left(M\right)\right| \geq 2^r$. Suppose in contradiction that $\left|\colspan\left(M\right)\right| < 2^r$. Take a minimal spanning set $S$ of $\colspan\left(M\right)$ (that is,
a set that spans $\colspan\left(M\right)$ and such that no proper subset of it
does). Suppose $|S| \geq r$  and consider all linear combinations (over $\Z_m$) of elements of $S$ with coefficients in $\{0,1\} \subseteq \Z_m$. Since $\left|\colspan\left(M\right)\right| < 2^r$ there are two distinct $0-1$ linear combinations that map to the same element. This means that there is a linear combination with coefficients in $\{1,-1\}$ of the elements of $S$ that is equal to zero. Since both $1$ and $-1$ are invertible modulo $m$ we can write one of the elements of $S$ as a linear combination of the other elements. This contradicts the minimality of $S$ and so, we must have $|S| < r$. This implies that $\rank(M) < r$, a contradiction, since we can write $M$ as the product of the matrix with columns in $S$ with the matrix of  coefficients giving the columns of $M$.
\end{proof}

\begin{claim}\label{clm-colspan}Let $M$ be an $t \times t$ matrix over $\Z_m$, let  $r=\rank\left(M\right)$. There exists $r'$ columns of $M$ that span the rest of $M'$s
columns such that $r' \leq r\log m$.
\end{claim}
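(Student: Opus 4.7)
The plan is to build the spanning subset greedily, using the fact that in an abelian group any strictly larger subgroup is at least twice as large.

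Concretely, I would construct a list of columns $c_1, c_2, \ldots$ of $M$ iteratively as follows. At step $i$, let $H_{i-1} = \spana(\{c_1, \ldots, c_{i-1}\}) \subseteq \Z_m^t$ (with $H_0 = \{0\}$). If every column of $M$ lies in $H_{i-1}$ we stop and set $r' = i-1$; otherwise pick any column $c_i$ of $M$ not in $H_{i-1}$. By construction $H_i \supsetneq H_{i-1}$, and since $H_{i-1}$ is a subgroup of the finite abelian group $H_i$, Lagrange's theorem gives $|H_i|/|H_{i-1}| \geq 2$. Hence $|H_{r'}| \geq 2^{r'}$.

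On the other hand, $H_{r'} \subseteq \colspan(M)$, so $2^{r'} \leq |\colspan(M)| = m^{\colrank(M)}$. Invoking the previously proved inequality $\colrank(M) \leq \rank(M) = r$ from Claim~\ref{clm-rankcolrank}, we get $2^{r'} \leq m^r$, i.e., $r' \leq r \log m$. By the stopping condition the chosen $r'$ columns span all columns of $M$, as required.

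There is no serious obstacle here; the only subtlety is using that strict containment of subgroups of a finite abelian group forces the index to be at least $2$, which is immediate from Lagrange, together with recalling the already-established bound $\colrank(M) \leq \rank(M)$.
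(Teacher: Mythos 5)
Your proof is correct. It reaches the same key inequality as the paper, namely $2^{r'} \leq |\colspan(M)| \leq m^r$, but by a different mechanism. The paper takes a \emph{minimal} spanning set $S$ of the columns and argues by contradiction: if $|S| > r\log m$, then $2^{|S|} > m^r \geq |\colspan(M)|$, so two distinct $0$-$1$ linear combinations of $S$ collide, yielding a $\pm 1$ relation among elements of $S$; since $\pm 1$ are invertible in $\Z_m$, one column of $S$ can be written in terms of the others, contradicting minimality. You instead build the set \emph{greedily}, adding one column outside the current span at each step, and invoke Lagrange's theorem to get that each addition at least doubles the size of the generated subgroup, so $|H_{r'}| \geq 2^{r'}$. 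Both arguments rest on Claim~\ref{clm-rankcolrank} for the upper bound $|\colspan(M)| \leq m^r$. Your route avoids the slightly delicate pigeonhole-plus-invertibility step entirely, replacing it with a one-line index observation; that makes it a bit cleaner and more self-contained, though the two proofs are clearly twins of one another.
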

\begin{proof}
Take a minimal spanning set $S$ of the columns of $M$ (that is,
a set that spans all other columns and such that no proper subset of it
spans all columns). If $2^{\left|S\right|} >  m^r$, then $2^{\left|S\right|} >  \left|\colspan(M)\right|$ (by Claim \ref{clm-rankcolrank}) and we proceed as in the proof from Claim \ref{clm-rankcolrank} above. If we look at all the $0-1$ combinations of the columns of $S$, then there are two distinct $0-1$ linear combinations of the columns that map to the same element of $\colspan\left(M\right)$. Thus, let $\sum_i\al_iS\left(:i\right)=\sum_i\beta_iS\left(:i\right)$ where $\al_i \neq \beta_i$ for at least one $i$, say $i_0$. Therefore, we have $\sum_i\left(\al_i-\beta_i\right)S\left(:i\right)=0$. Note that $\left(\al_{i_0}-\beta_{i_0}\right)=\pm 1$ and hence is invertible. This lets us write $S\left(:i_0\right)$ as a linear combinations of the remaining columns contradicting the minimality of $S$.
Thus, $r'=\left|S\right| \leq r\log m$.
\end{proof}

The following claim shows that the column rank behaves similar to rank in terms of subadditivity.
\begin{claim}\label{clm-colrankadd}Let $A, B$ be $t \times t$ matrices over $\Z_m$. Then, $\colrank\left(A+B\right)\leq \colrank\left(A\right)+\colrank\left(B\right)$.
\end{claim}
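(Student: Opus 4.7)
The plan is to work directly with the definition $\colrank(M)=\log_m |\colspan(M)|$ and show a set-theoretic containment of column-spans, from which the claim follows by elementary group theory.

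First I would observe that every column of $A+B$ is a sum of the corresponding columns of $A$ and $B$: $(A+B)(:j)=A(:j)+B(:j)$. Hence an arbitrary $\Z_m$-linear combination $\sum_j c_j (A+B)(:j)$ splits as $\sum_j c_j A(:j)+\sum_j c_j B(:j)$, which lies in the sumset $\colspan(A)+\colspan(B)\subseteq \Z_m^t$. Since $\colspan(A)$ and $\colspan(B)$ are subgroups of $\Z_m^t$, their sumset is again a subgroup, and we obtain the containment
\[
\colspan(A+B)\ \subseteq\ \colspan(A)+\colspan(B).
\]

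Next I would bound the size of the right-hand side. The map $\colspan(A)\times\colspan(B)\to\colspan(A)+\colspan(B)$ given by $(x,y)\mapsto x+y$ is surjective, so
\[
|\colspan(A)+\colspan(B)|\ \leq\ |\colspan(A)|\cdot|\colspan(B)|.
\]
Combining the two inequalities yields $|\colspan(A+B)|\leq |\colspan(A)|\cdot|\colspan(B)|$, and taking $\log_m$ of both sides gives exactly $\colrank(A+B)\leq \colrank(A)+\colrank(B)$.

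There is no real obstacle here: the argument is the natural analog of the subadditivity proof for rank over a field, with subgroups replacing subspaces. The only thing to be slightly careful about is that $\colrank$ can be non-integer, but since we are simply taking $\log_m$ of a multiplicative inequality between subgroup sizes, the bound survives unchanged.
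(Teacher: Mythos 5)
Your proof is correct and follows essentially the same route as the paper: both establish the containment $\colspan(A+B)\subseteq\colspan(A)+\colspan(B)$ and then bound the size of the sumset by the product $|\colspan(A)|\cdot|\colspan(B)|$ before taking $\log_m$. You simply spell out a couple of intermediate steps (the columnwise decomposition and the surjective addition map) that the paper leaves implicit.
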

\begin{proof}
We show that $\left|\colspan\left(A+B\right)\right|\leq \left|\colspan\left(A\right)\right|\left|\colspan\left(B\right)\right|$. Note that $\colspan\left(A+B\right) \subseteq \colspan\left(A\right)+ \colspan\left(B\right) \stackrel{def}{=} \{a+b|a\in \colspan\left(A\right), b \in \colspan\left(B\right)\}$. Therefore, $\left|\colspan\left(A+B\right)\right| \leq \left|\colspan\left(A\right)+ \colspan\left(B\right)\right| \leq \left|\colspan\left(A\right)\right|\left|\colspan\left(B\right)\right|$.
\end{proof}

\begin{claim}\label{clm-coltriangular}Let $M$ be a $2t \times 2t$ matrix over $\Z_m$, such that \[M=\left(\begin{array}{cc}A&0\\\star&B\end{array}\right)\] where $A, B $ and $\star$ are $t \times t$  matrices. Then, $\colrank\left(A\right)+\colrank\left(B\right) \leq \colrank\left(M\right)$.
\end{claim}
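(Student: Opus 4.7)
The plan is to establish the equivalent multiplicative statement $|\colspan(A)| \cdot |\colspan(B)| \leq |\colspan(M)|$ and then take $\log_m$. The natural tool is the projection $\pi \colon \Z_m^{2t} \to \Z_m^t$ onto the first $t$ coordinates, applied to the subgroup $\colspan(M) \subseteq \Z_m^{2t}$.

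First I would observe what $\pi$ does to the generators of $\colspan(M)$. The first $t$ columns of $M$ have the form $\binom{A(:j)}{\star(:j)}$ and project to $A(:j)$; the last $t$ columns are $\binom{0}{B(:j)}$ and project to $0$. Hence $\pi(\colspan(M)) = \colspan(A)$. Next I would identify the kernel of $\pi$ restricted to $\colspan(M)$: it consists of those elements of $\colspan(M)$ whose first $t$ coordinates vanish. The columns $\binom{0}{B(:j)}$ lie in this kernel, so the subgroup $\{0\} \times \colspan(B)$ is contained in $\ker(\pi|_{\colspan(M)})$. In particular $|\ker(\pi|_{\colspan(M)})| \geq |\colspan(B)|$.

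Combining these via the standard index formula (or first isomorphism theorem) for a homomorphism between finite abelian groups gives
\[
|\colspan(M)| = |\pi(\colspan(M))| \cdot |\ker(\pi|_{\colspan(M)})| \geq |\colspan(A)| \cdot |\colspan(B)|.
\]
Taking $\log_m$ of both sides yields $\colrank(M) \geq \colrank(A) + \colrank(B)$.

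I do not expect a serious obstacle here — the whole argument is essentially the observation that a short exact sequence of finite abelian groups is multiplicative on cardinalities, combined with the fact that the off-diagonal zero block forces the column span of $B$ to sit inside the kernel of the projection. The only minor care needed is to phrase things in terms of subgroups (rather than vector spaces), since we are over $\Z_m$ and not a field; but nothing in the argument uses invertibility of scalars.
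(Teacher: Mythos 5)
Your proof is correct and rests on the same key observation as the paper's — the off-diagonal zero block makes the projection onto the first $t$ coordinates carry $\colspan(M)$ onto $\colspan(A)$ with $\{0\}\times\colspan(B)$ sitting in the kernel. The paper packages this as an explicit injection $\colspan(A)\times\colspan(B)\hookrightarrow\colspan(M)$ (fixing a coefficient vector per element and "unwinding" block by block), whereas you invoke the first isomorphism theorem; this is a cleaner phrasing of the same argument, not a genuinely different route.
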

\begin{proof}We show that $\left|\colspan\left(A\right)\right|\left|\colspan\left(B\right)\right|\leq \left|\colspan\left(M\right)\right|$. Let $\colspan\left(A\right)=R_1$,
$\colspan\left(B\right)=R_2$, $\colspan\left(M\right)=R$. We define $f:R_1 \times R_2 \rightarrow
R$ and show that $f$ is injective.
Given $r_1 \in R_1$ and $r_2 \in R_2$, let $\alpha_1, \cdots \alpha_t$ and $\beta_1,
\cdots \beta_t$ denote coefficients for linear combinations of the columns of $A$ and $B$ respectively
that give $r_1$ and $r_2$. There might be many such linear combinations but we
fix one for each $r_i$. Then, $f\left(r_1,r_2\right)=\sum_{i=1}^{t}\alpha_iM\left(:i\right)+\sum_{i=t+1}^{2t}\beta_{i-t}M\left(:i\right)$. Now, given a column vector $f\left(r_1,r_2\right) \in R$, we
uniquely identify $r_1$ and $r_2$ as follows. We look at the first $t$ rows and call
it $s_1$. Now $s_1=r_1$ and let $\alpha_1, \cdots \alpha_{t}$ be the linear combination fixed for $r_1$ while
defining $f$. Now, consider $f\left(r_1,r_2\right)-\sum_{i=1}^{t}\alpha_iM\left(:i\right)$ and call the last $t$ rows $s_2$. Note that $s_2=r_2$.
\end{proof}

\begin{claim}\label{clm-rankone}Let $M$ be a $t \times t$ square matrix over $\Z_m$ with zero diagonal entries. If for some $s |m$, $\colrank\left(M^{\left(s\right)}\right)\leq 2$, then there exists at least $t'=t/m^2$ indices such that $M$ restricted to those indices as rows and columns is the all zero matrix modulo $s$.\end{claim}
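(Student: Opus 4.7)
The plan is to combine the column-rank hypothesis (which tightly restricts the diversity of columns of $M^{(s)}$) with the zero-diagonal assumption to produce a large index set on which $M^{(s)}$ is identically zero. The main point is that when $\colrank(M^{(s)}) \leq 2$, every column of $M^{(s)}$ lies in a subgroup of $\Z_s^t$ of size at most $s^2$; in particular the columns of $M^{(s)}$ take at most $s^2 \leq m^2$ distinct values. So by pigeonhole, there is some vector $v \in \Z_s^t$ and a list $T \subseteq [t]$ of size at least $t/m^2$ such that $M^{(s)}(:i) = v$ for every $i \in T$.

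Next I would use the zero-diagonal hypothesis to show that $v$ must vanish on the coordinates indexed by $T$. Indeed, for every $j \in T$, the $j$th column of $M^{(s)}$ equals $v$, so $v_j = M^{(s)}(j,j) = M(j,j) \bmod s = 0$. Consequently, for any $i, j \in T$, the entry $M^{(s)}(j,i)$ of $M^{(s)}$ equals $v_j = 0$. This is exactly the claim that the $T \times T$ submatrix of $M$ is the all-zero matrix modulo $s$, and since $|T| \geq t/m^2$, we obtain the desired index set of size at least $t' = t/m^2$.

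There is essentially no technical obstacle here; the whole argument is a one-step pigeonhole once one correctly unpacks the definition of $\colrank$. The only subtle bookkeeping point is to notice that $\colspan(M^{(s)})$ is a subgroup of $\Z_s^t$ (since $M^{(s)}$ has entries in $\Z_s$) and that the number of distinct columns is bounded by $|\colspan(M^{(s)})| \leq s^2$, giving a bound of $m^2$ (and not $s^2$) only after using $s \leq m$; this is what yields the claimed $t/m^2$ rather than $t/s^2$.
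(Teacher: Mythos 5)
Your proof is correct and is essentially identical to the paper's: both use the pigeonhole bound $|\colspan(M^{(s)})| \leq s^2 \leq m^2$ to find $\geq t/m^2$ identical columns, then use the zero diagonal to conclude that common column vanishes on those coordinates. Your write-up just spells out the final step (that $v_j = M^{(s)}(j,j) = 0$ for $j \in T$) a bit more explicitly than the paper does.
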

\begin{proof}As $\colrank\left(M^{\left(s\right)}\right)\leq 2$, it follows that $\left|\colspan\left(M^{\left(s\right)}\right)\right|\leq s^2 \leq m^2$. Hence, $M^{\left(s\right)}$ has at most $m^2$ distinct columns. Therefore, there exists a set of indices $S$ of size $t'\geq t/m^2$ with $S=\{r_1, r_2,  \cdots  r_{t'}\}$ such that all the columns $M^{\left(s\right)}\left(:r_i\right)$ are identical. Also, as the diagonal elements are zero modulo $m$, they are zero modulo $s$. Thus, the $S \times S$ submatrix is the all zero matrix modulo $s$.
\end{proof}

\section{Collision-Free MV families}\label{sec-collfree}

In the proof of Theorem~\ref{thm-thm2} it will be useful to assume that the elements of the MV family do not `collide' when reduced modulo an integer $s$ dividing $m$. In this section we develop the necessary machinery to allow for this assumption. We start by defining a collision free matching vector family.
\begin{define}[Collision free MV family]\label{nocoll}A \emph{collision free matching vector family} $\left(U,V\right)$ in $\mathbb{Z}_m^n$ is a matching vector family such that
for all $s|m, s \geq 2$, all elements of $U$ are distinct modulo $s$, and all elements of $V$ are distinct modulo $s$.
Note that if $\left(U,V\right)$ is a collision free matching vector family, then so is any $\left(U',V'\right)\subseteq \left(U,V\right)$.
\end{define}

\begin{lem}\label{lem-colr}Let $m\geq 2$ be an arbitrary integer. Let $s$ be a divisor of $m$, such that $1<s<m$. Let $\left(U,V\right)$ be a matching vector family in $\mathbb{Z}_m^n$ such that $\langle u,v \rangle=0\ \left(mod\ s\right)$ for all $u \in U, v \in V$. Then, $|(U,V)| \leq \MV\left(m/s,n\log m\right)$.\end{lem}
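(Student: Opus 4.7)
The plan is to encode the MV family $(U,V)$ as an MV family of the same size in $\Z_{m/s}^{n\log m}$ by working with the $t \times t$ matrix $P := P_{U,V}$ whose $(i,j)$ entry is $\langle u_i, v_j\rangle \in \Z_m$. By hypothesis every entry of $P$ lies in the subgroup $s\Z_m \subseteq \Z_m$, which is group-isomorphic to $\Z_{m/s}$ via $\phi: sa \mapsto a$ for $a \in \{0,\ldots,m/s-1\}$. Applying $\phi$ entrywise I obtain $P' \in \Z_{m/s}^{t\times t}$. I would first verify that $P'$ is an MV matrix: its diagonal is zero since $\langle u_i, v_i\rangle = 0$, and for $i\neq j$ the $[0,m)$-representative of $\langle u_i, v_j\rangle$ is a nonzero multiple of $s$ strictly less than $m$, so $P'_{ij}$ is a nonzero element of $\{1,\ldots,m/s-1\} \subseteq \Z_{m/s}$.

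Second, I would bound the column span of $P'$ over $\Z_{m/s}$. Viewing $U$ and $V$ as $t\times n$ matrices over $\Z_m$ whose rows are the MV vectors, $P = UV^T$ over $\Z_m$, so the columns of $P$ all lie in the image subgroup $U \cdot \Z_m^n \subseteq \Z_m^t$, which has at most $m^n$ elements. The key structural fact is that $\phi$ intertwines scalar multiplication: for $\alpha \in \Z_m$ and $x \in s\Z_m$ one checks directly that $\phi(\alpha x) = (\alpha \bmod m/s)\cdot \phi(x)$ in $\Z_{m/s}$. Consequently the $\Z_{m/s}$-subgroup generated by the columns of $P'$ is exactly the $\phi$-image of the $\Z_m$-subgroup generated by the columns of $P$, and I conclude $|\colspan(P')| \leq m^n$.

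Third, I would convert this bound on the column span into a bound on $\rank(P')$. By definition $\colrank(P') = \log_{m/s}|\colspan(P')| \leq n\log m / \log(m/s)$, and Claim~\ref{clm-rankcolrank} applied over $\Z_{m/s}$ gives $\rank(P') \leq \colrank(P')\cdot \log(m/s) \leq n\log m$. By the definition of rank, there exist $t\times \lceil n\log m\rceil$ matrices $\tilde U$ and $\tilde V$ over $\Z_{m/s}$ with $P' = \tilde U\tilde V^T$; taking the rows of $\tilde U$ and $\tilde V$ as the new matching vectors yields $\tilde u_i, \tilde v_j \in \Z_{m/s}^{n\log m}$ with $\langle \tilde u_i, \tilde v_j\rangle = P'_{ij}$. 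Since $P'$ is an MV matrix this gives an MV family of size $t$ in $\Z_{m/s}^{n\log m}$, establishing the bound.

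The one delicate point is the intertwining identity $\phi(\alpha x) = (\alpha\bmod m/s)\cdot \phi(x)$, which is what makes ``dividing by $s$'' carry the entire column span faithfully across to $\Z_{m/s}$; this is a short computation but it is the conceptual heart of why reducing the modulus costs only an additional $\log m$ factor in dimension. Once that identity is in hand, the rest is a direct application of the rank versus column-rank machinery developed in Section~\ref{sec-zmmatrices}.
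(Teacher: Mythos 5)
Your proof is correct and takes essentially the same route as the paper's: pass from $P = P_{U,V}$ to $P' = (1/s)P$ over $\Z_{m/s}$, verify $P'$ has the MV pattern, bound $\rank(P')$ by $n\log m$, and factor $P' = \tilde U\tilde V^T$ to read off an MV family of size $t$ in $\Z_{m/s}^{n\log m}$. The one point where you diverge is the rank bound: the paper invokes Claim~\ref{clm-colspan} to extract at most $r\log m \le n\log m$ spanning columns of $P$ and then observes that dividing the linear-combination witnesses $P(:j) = \sum_i c_i P(:i)\pmod m$ by $s$ shows the same columns span $P'$; you instead make the group isomorphism $\phi: s\Z_m \to \Z_{m/s}$ explicit, prove the intertwining $\phi(\alpha x) = (\alpha \bmod m/s)\phi(x)$, deduce $|\colspan(P')| = |\colspan(P)| \le m^n$, and finish with Claim~\ref{clm-rankcolrank} over $\Z_{m/s}$. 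These are two dressings of the same fact (both Claim~\ref{clm-rankcolrank} and Claim~\ref{clm-colspan} rest on the identical $\{0,1\}$-combination argument), and your intertwining identity is precisely the coefficient-division step the paper performs in-line; your packaging is a bit more structural and makes it transparent why only a $\log m$ blowup in dimension is incurred, while the paper's is more hands-on. Either way the argument closes correctly.
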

\begin{proof}Let $U=\left(u_1, u_2, \cdots u_t\right)$ and $V=\left(v_1, v_2, \cdots v_t\right) $. Recall that $P_{U,V}$ is the inner product matrix. We shall write $P_{U,V}$ as $P$ in the rest of the proof for brevity. Let $r=\rank\left(P\right) \leq n$. Hence, by Claim \ref{clm-colspan}, there exists $r' \le r \cdot \log m$ columns of $P$ which span all the columns of $P$.
As each entry of $P$ is a multiple of $s$ we can define a matrix $P'$ over $\Z_{m/s}$ by $P'=\left(1/s\right)P$. We have
\begin{itemize}
\item $P'_{i,i}=0 \quad \forall i$.
\item $P'_{i,j}  \neq 0 \quad  \forall i \neq j$.
\end{itemize}

We next show that the $r'$ columns that span the columns of $P$ also span the columns in $P'$.
Without loss of generality, let the first $r'$ columns of $P$ span the remaining columns of $P$. For any column $j$, let $P\left(:j\right)=\sum_{i=1}^{r'}c_iP\left(:i\right) \pmod{m}$. Since all entries of $P$ are divisible by $s$, we can divide the expression by $s$ and obtain that $P'\left(:j\right)=\sum_{i=1}^{r'}c_i P'\left(:i\right) \pmod{m/s}$.
Hence, we deduce that $r_{P'}=\rank\left(P'\right) \leq r' \leq r\log m \leq n\log m$. This implies that $P'=AB$ for some $t \times r_{P'}$ matrix $A$ and some $r_{P'} \times t$ matrix $B$ over $\Z_{m/s}$. Thus, the rows of $A$ and the columns of $B$ form a matching vector family in $\Z_{m/s}^{r_{P'}}$. Therefore, $t \leq \MV\left(m/s,n\log m \right)$ as claimed.
\end{proof}

\begin{lem}[Bucket Lemma]\label{lem-regularity}For any $m$, let $\left(U,V\right)$ be a matching vector family in $\Z_m^n$. Let $1<s<m$ be any divisor of $m$. Then, for any $w \in \mathbb{Z}_s^n$, $\left|B_s\left(w,U\right)\right| \leq \MV\left(m/s,n\log m\right)$. By symmetry, $\left|B_s\left(w,V\right)\right| \leq \MV\left(m/s,n\log m\right)$.
\end{lem}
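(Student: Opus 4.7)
The plan is to reduce the Bucket Lemma directly to Lemma~\ref{lem-colr}. Given a bucket $B_s(w, U)$, consider the sub-MV family $(U', V') \subseteq (U,V)$ where $U'$ is the bucket and $V'$ consists of the corresponding elements. By construction, every $u \in U'$ satisfies $u \equiv w \pmod s$. So for any $u \in U'$ and any $v_j \in V'$, we have
\[
\langle u, v_j \rangle \equiv \langle w, v_j \rangle \pmod s,
\]
where the right-hand side depends only on $j$, not on the choice of $u$.

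The key observation I would use is that the diagonal entries of $P_{U',V'}$ are zero. In particular, picking $u = u_j \in U'$ gives $\langle u_j, v_j \rangle = 0 \pmod m$, hence $\langle w, v_j \rangle \equiv 0 \pmod s$ for every $j$. Combining the two displayed congruences, this forces $\langle u, v_j \rangle \equiv 0 \pmod s$ for \emph{every} pair $u \in U'$, $v_j \in V'$, not just the diagonal.

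At this point $(U', V')$ satisfies exactly the hypothesis of Lemma~\ref{lem-colr} (a matching vector family modulo $m$ with all inner products vanishing modulo $s$). Applying that lemma gives $|B_s(w,U)| = |(U',V')| \leq \MV(m/s, n \log m)$, which is the desired bound. The statement for $V$ follows by swapping the roles of $U$ and $V$ in the argument.

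I do not expect any real obstacle here: the whole proof is the observation that the diagonal-zero property of the MV family forces the constant ``column shift'' coming from fixing the residue mod $s$ to actually equal zero. Once that is noticed, the bound is immediate from Lemma~\ref{lem-colr} with no further estimates required.
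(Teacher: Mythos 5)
Your proof is correct and follows essentially the same route as the paper's: in both cases, the observation that all elements of the bucket agree with $w$ modulo $s$, combined with the vanishing diagonal inner products, forces every inner product in the sub-family to vanish modulo $s$, and then Lemma~\ref{lem-colr} gives the bound. The only cosmetic difference is that you route the congruence through $w$ itself while the paper compares $u_{j_m}$ directly with $u_{j_l}$, but the argument is identical in substance.
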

\begin{proof}We prove that $\left|B_s\left(w,U\right)\right| \leq \MV\left(m/s,n\right)$. For $U=\left(u_1, u_2, \cdots u_t\right) $, consider any bucket $B_s\left(w,U\right)=U' \ \left(say\right)$. Let $U'=\left(u_{j_1}, u_{j_2}, \cdots u_{j_{t'}}\right)$ where $1 \leq j_1 < j_2 < \cdots j_{t'} \leq t$. Let $V'=\left(v_{j_1}, v_{j_2}, \cdots v_{j_{t'}}\right)$. Now, for any $l,m \in [t']$, $\langle u_{j_l},v_{j_l} \rangle=0 \left(mod \ m\right)$. Therefore, $\langle u_{j_m},v_{j_l} \rangle=0 \left(mod \ s\right)$. By Lemma \ref{lem-colr} on $\left(U',V'\right)$, $t' \leq \MV\left(m/s,n\log m\right)$.
\end{proof}

We use the above lemma repeatedly to obtain a collision free matching vector family.
\begin{lem}\label{lem-nocollision}Let $m\geq 2$ be any positive integer. Suppose there is a matching vector family $\left(U,V\right)$ in $\mathbb{Z}_m^n$. Then, there exists a collision free matching vector family $\left(U',V'\right) \subseteq \left(U,V\right)$  such that $$\left|\left(U',V'\right)\right|  \geq  \frac{\left|\left(U,V\right)\right|}{\left(\prod_{s|m, 1<s<m}	 \MV\left(s,n\log m\right)\right)^2}.$$
\end{lem}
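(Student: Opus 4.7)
The plan is to iteratively prune $(U,V)$ one divisor of $m$ at a time, using the Bucket Lemma (Lemma~\ref{lem-regularity}) to kill collisions modulo each divisor in turn. First I would observe that the case $s = m$ in the collision-freeness requirement is automatic: any matching vector family is twin free (Fact~\ref{twinfree}), so $U$ and $V$ already have distinct elements modulo $m$. Thus it suffices to enforce distinctness modulo each divisor $s \mid m$ with $1 < s < m$.

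I would enumerate these divisors as $s_1, s_2, \ldots, s_k$ and set $(U_0, V_0) := (U, V)$, then inductively build $(U_i, V_i) \subseteq (U_{i-1}, V_{i-1})$ such that both $U_i$ and $V_i$ have distinct elements modulo $s_i$. For the $i$-th step, note that $(U_{i-1}, V_{i-1})$ is still a matching vector family (any subset of an MV family is one), so the Bucket Lemma applies and every bucket $B_{s_i}(w, U_{i-1})$ has size at most $\MV(m/s_i, n\log m)$. Selecting one representative from each nonempty bucket produces a sub-list $U_i' \subseteq U_{i-1}$ of size at least $|U_{i-1}|/\MV(m/s_i, n\log m)$ whose elements are pairwise distinct modulo $s_i$. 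Passing to the matching sub-list on the $V$ side and repeating the same bucketing argument for $V$ modulo $s_i$ yields $(U_i, V_i)$ with $|(U_i, V_i)| \geq |(U_{i-1}, V_{i-1})|/\MV(m/s_i, n\log m)^2$.

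The key monotonicity observation is that any sub-list of a list whose elements are distinct modulo $s$ still has this property, so the distinctness modulo $s_1, \ldots, s_{i-1}$ gained in earlier iterations is preserved at every subsequent step. After all $k$ iterations I obtain a collision-free matching vector family $(U', V') := (U_k, V_k)$ with
\[
|(U', V')| \;\geq\; \frac{|(U,V)|}{\prod_{i=1}^k \MV(m/s_i, n\log m)^2}.
\]
To match the denominator in the stated bound I would invoke the bijection $s \mapsto m/s$ on the set $\{s \mid m : 1 < s < m\}$, which rewrites the product as $\prod_{s\mid m,\, 1<s<m} \MV(s, n\log m)^2$. The argument is essentially bookkeeping on top of the Bucket Lemma, which does all the heavy lifting; I do not anticipate a genuine obstacle beyond the routine verification that the matching-vector-family structure and earlier distinctness are preserved at each pruning step.
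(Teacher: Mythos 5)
Your proposal is correct and matches the paper's proof in all essential respects: both iterate the Bucket Lemma (Lemma~\ref{lem-regularity}) over the proper divisors $1<s<m$ of $m$, pick one representative per bucket, and use the bijection $s\mapsto m/s$ to rewrite the product at the end. The only cosmetic difference is that you interleave the $U$- and $V$-pruning (handling both modulo $s_i$ before moving to $s_{i+1}$) whereas the paper first makes all of $U$ collision-free and then makes $V$ collision-free in a second pass; the bounds are identical, and your explicit handling of the $s=m$ case via twin-freeness is a small clarification the paper leaves implicit.
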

\begin{proof}
	We will get rid of collisions iteratively by repeatedly applying Lemma \ref{lem-regularity}. Let us write the divisors of $m$ in ascending order as $2 \leq s_1 < s_2 <\cdots <  s_l\leq m/2$. Perform the following operation for each $s|m$ starting from the smallest divisor greater than $1$.  For $0 \leq i \leq l$, let $U_i,V_i$ be the matching vector after stage $i$ with $U_0=U$ and $V_0=V$. Now suppose that we have $U_i,V_i$ after the $i$'th stage such that there is no collision modulo $s_j$ in $U_i$ for $1 \leq j \leq i$. The $\left(i+1\right)$'th stage is performed as follows. Let us construct $U_{i+1},V_{i+1}$ from $U_i,V_i$ to ensure no collision among the elements of $U_{i+1}$ modulo $s_{i+1}$ as well.
For each $w \in \Z_{s_{i+1}}^n$, by  Lemma \ref{lem-regularity}, $\left|B_{s_{i+1}}\left(w,U_i\right)\right|\leq \MV\left(m/s_{i+1},n\log m\right)$. Pick one element from each bucket in $U_i$ and the corresponding matching vector from $V_i$ to form $\left(U_{i+1},V_{i+1}\right)$. Thus, $\left|\left(U_{i+1},V_{i+1}\right)\right| \geq \left|U_i\right|/\MV\left(m/s_{i+1},n\log m\right)$. We end up with matching vector family $U_l,V_l$ such that $\left|\left(U_l,V_l\right)\right|  \geq  \frac{\left|\left(U,V\right)\right|}{\prod_{s|m, 1<s<m}	\MV\left(m/s,n\log m\right)}$ and $U_l$  is collision free. We repeat the same process this time pruning $V_l$ in order to make it collision free as well. Thus, eventually we end up with a collision free matching vector family $\left(U_l',V_l'\right) \subseteq \left(U,V\right)$ such that \[\left|\left(U_l',V_l'\right)\right| \geq \frac{\left|\left(U,V\right)\right|}{\left(\prod_{s|m, 1<s<m}	\MV\left(m/s,n\log m\right)\right)^2}=\frac{\left|\left(U,V\right)\right|}{\left(\prod_{s|m, 1<s<m}	\MV\left(s,n\log m\right)\right)^2}.\]
\end{proof}

\section{Proof of Theorem~\ref{thm-thm2}}\label{sec-main}
Before proceeding with the proof we give yet another definition.

\begin{define}\label{Di}Let $A, B \subseteq \mathbb{Z}_m^n$ be twin-free lists (or sets). Let $\omega$ be a primitive root of unity of order $m$. The duality measure of $A,B$ with respect to $\omega$ is defined as
\[D_{\omega}\left(A,B\right)= \left|\E_{a \sim A, b \sim B}\left[\omega^{\langle a,b \rangle}\right]\right|.\] Notice that, if $\omega \neq 1$, $D_{\omega}(A,B)=1$ implies that there is some $c \in \Z_m$ such that all the entries of the inner product matrix $P_{A,B}$ equal $c$. We often refer to such submatrices as monochromatic rectangles. 
\end{define}

The following is an easy consequence of Lemma \ref{lem-stat}.
\begin{lem}\label{lem-bias}Let $\left(U,V\right)$ be a MV family in $\mathbb{Z}_m^n$ of size $t \geq 3m$ and let $\omega=exp\left(2\pi i/m\right)$ be a primitive root of unity of order $m$. Then there exists some $1 \leq j \leq m-1$ such that \[D_{\omega^j}\left(U,V\right)\geq \frac{2}{3m^{3/2}}.\]
\end{lem}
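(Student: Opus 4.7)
The plan is to deduce Lemma~\ref{lem-bias} directly from Lemma~\ref{lem-stat} by showing that the inner product distribution of an MV family is noticeably far from uniform on $\mathbb{Z}_m$.

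First I would introduce the distribution $\mu$ over $\mathbb{Z}_m$ obtained by sampling $u \sim U$ and $v \sim V$ independently and uniformly, and outputting $\langle u, v \rangle \bmod m$. The key observation is that, by the defining property of an MV family, $\langle u_i, v_j \rangle = 0 \pmod m$ if and only if $i = j$. Hence
\[
\Pr[\mu = 0] = \Pr_{i,j \sim [t]}[i = j] = \tfrac{1}{t}.
\]
Since $\Pr[\U_m = 0] = 1/m$ and $t \geq 3m$, this yields
\[
\Delta(\mu, \U_m) \;\geq\; \bigl|\mu(0) - \U_m(0)\bigr| \;=\; \tfrac{1}{m} - \tfrac{1}{t} \;\geq\; \tfrac{1}{m} - \tfrac{1}{3m} \;=\; \tfrac{2}{3m}.
\]

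Next I would apply Lemma~\ref{lem-stat} with $\epsilon = 2/(3m)$: there exists $1 \leq j \leq m-1$ such that
\[
\Bigl|\E_{x \sim \mu}\bigl[(\omega^{j})^{x}\bigr]\Bigr| \;\geq\; \frac{2\epsilon}{\sqrt{m}} \;=\; \frac{4}{3 m^{3/2}} \;\geq\; \frac{2}{3 m^{3/2}}.
\]
Finally, I would unpack the definition of $\mu$: because $u$ and $v$ are sampled independently and uniformly from $U$ and $V$,
\[
\E_{x \sim \mu}\bigl[(\omega^{j})^{x}\bigr] \;=\; \E_{u \sim U,\, v \sim V}\bigl[(\omega^{j})^{\langle u,v \rangle}\bigr] \;=\; \pm\, D_{\omega^{j}}(U,V)
\]
(up to the absolute value in the definition of $D$), giving the claimed bound $D_{\omega^{j}}(U,V) \geq 2/(3 m^{3/2})$.

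No step here is a real obstacle — the argument is essentially a one-line reduction to Lemma~\ref{lem-stat}. The only subtlety is making sure to use the MV property to compute $\Pr[\mu = 0]$ exactly (not just as a lower bound), which is what forces the distribution to be biased precisely in the direction required.
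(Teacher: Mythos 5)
Your proof is correct and follows the same route as the paper's: set up the inner-product distribution $\mu$, compute $\Pr[\mu=0]=1/t$, lower-bound the statistical distance from uniform, and feed that into Lemma~\ref{lem-stat}. The only difference is that you use the tighter bound $\Delta(\mu,\U_m)\ge|\mu(0)-\U_m(0)|$ (via the max-over-events formulation) where the paper keeps just the $x=0$ term of $\tfrac12\sum_x|\cdot|$ and gets $\Delta\ge\tfrac1{3m}$; both intermediate bounds suffice for the claimed $2/(3m^{3/2})$.
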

\begin{proof}
Let $\mu$ be the random variable which chooses $u \in U$ and $v \in V$ randomly and outputs $\langle u,v \rangle$ and let $\U_m$ be the uniform distribution over $\Z_m$. Now, $\Delta\left(\mu,\U_m\right)\geq \left(1/2\right)\left(\Pr[\U_m=0]-\Pr[\mu=0]\right)= \left(1/2\right)\left(1/m-1/t\right)\geq 1/3m$ as $t \geq 3m$.
By Lemma \ref{lem-stat}, for some $1 \leq j \leq m-1$,  \[\left|\E_{x \sim \mu}\left[\left(\omega^j\right)^{x} \right]\right|\geq \frac{2}{3m^{3/2}}.\]
Thus, we have $\left|\E_{u \sim U, v \sim V}\left[\left(\omega^j\right)^{\langle u,v \rangle} \right]\right|\geq \frac{2}{3m^{3/2}}$ as claimed.
\end{proof}

An important ingredient in the proof of Theorem~\ref{thm-thm2} is the following lemma, referred to in the introduction as the `sub-matrix lemma' which is a generalization of a result of \cite{SLZ}.

\begin{lem}[Sub-Matrix Lemma]\label{lem-slz}Let $s,m,n \geq 2$ where $s$ divides $m$, and let $\omega$ be a primitive root of unity of order $s$. Let $A,B \subset \Z_s^n$ be two twin-free lists satisfying $D_{\omega}\left(A,B\right)  \geq \frac{2}{3m^{3/2}}$. Let $\rank\left(P_{A,B}\right)=r \geq 2$. Then assuming Conjecture \ref{pfr} (PFR conjecture), there exist lists $A'\subseteq A, B' \subseteq B$ such that $D_{\omega}\left(A', B'\right) = 1$, where
$|A'| \geq  2^{-c\left(m\right)r/\log r}\left|A\right|$,	$|B'|\geq 2^{-c\left(m\right)r/\log r} \left|B\right|$ for some constant $c\left(m\right)$ which depends only on $m$.
\end{lem}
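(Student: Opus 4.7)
The plan is to adapt the approach of Ben-Sasson, Lovett, and Zewi from $\F_2$ to $\Z_s$, invoking the PFR conjecture in $\Z_s^r$ (which follows from Conjecture~\ref{pfr} since $s \mid m$). First I would reduce to small ambient dimension via the rank hypothesis: factor $P_{A,B} = CD^T$ over $\Z_s$ with $C \in \Z_s^{|A| \times r}$ and $D \in \Z_s^{|B| \times r}$, and let $U, W \subseteq \Z_s^r$ denote the multilists of rows. Then $D_\omega(A,B) = D_\omega(U,W)$, and any monochromatic rectangle inside $U \times W$ lifts to one of the same relative size in $A \times B$, so henceforth the entire problem lives in $\Z_s^r$ and the final exponent depends on $r$ rather than $n$.

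Next I would extract a structured subset of $W$ from the bias. Setting $\phi(x) = \E_{u \sim U}[\omega^{x \cdot u}]$, Cauchy-Schwarz converts $|\E_{w \sim W} \phi(w)| \geq \eta$ into $\E_{w \sim W} |\phi(w)|^2 \geq \eta^2$, and Markov picks out $W_0 \subseteq W$ of relative size $\geq \eta^2/2$ on which $|\phi(w)| \geq \eta/\sqrt{2}$. A Bogoliubov-type step (or equivalently a Croot-Sisask almost-periodicity argument) combined with PFR in $\Z_s^r$ then produces a subgroup $V \leq \Z_s^r$ with $|V| \leq 2^{c_1(m) r / \log r}$ and a subset $W' \subseteq W$ of relative size $\geq 2^{-c_2(m) r / \log r}$ lying in a single coset $w_0 + V$.

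I would then close on a monochromatic rectangle by pigeonhole. Pigeonholing $U$ over the $|V|$ cosets of the annihilator $V^\perp \leq \Z_s^r$ yields $U_1 \subseteq U$ of relative size $\geq 1/|V| \geq 2^{-c_1(m) r/\log r}$ lying in a single coset $u_0 + V^\perp$. For $u = u_0 + \tilde u \in U_1$ and $w = w_0 + \tilde w \in W'$ with $\tilde u \in V^\perp$ and $\tilde w \in V$, the inner product simplifies to $u \cdot w = u_0 \cdot w_0 + u_0 \cdot \tilde w + \tilde u \cdot w_0$ since $\tilde u \cdot \tilde w = 0$. The two nonconstant terms each take at most $s$ values in $\Z_s$, so two further pigeonholes (costing only a factor of $s^2$, absorbed into $c(m)$) isolate sub-lists $U' \subseteq U_1$ and $W'' \subseteq W'$ on which $u \cdot w$ is constant, giving $D_\omega(U', W'') = 1$ with sizes meeting the claim.

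The principal obstacle will be squeezing the exponent from the naive $r$ down to $r/\log r$. This is precisely where the polynomial form of PFR (rather than Sanders' quasi-polynomial bound) is crucial: it feeds a Sanders-style iterative refinement in which each Bogoliubov-plus-PFR round roughly halves the effective dimension of the structured piece, terminating at dimension $O(r/\log r)$ after $\log r$ rounds. A secondary subtlety is that $\Z_s^r$ for composite $s$ lacks a vector-space structure, so the Chang-type spectrum bounds and the Bogoliubov step must be restated in terms of $\log_s|H|$ and the character annihilator $H^\perp$ rather than rank and orthogonal complement; the double-annihilator identity $(H^\perp)^\perp = H$ for finite abelian groups keeps the pigeonhole step intact, but the passage requires careful bookkeeping.
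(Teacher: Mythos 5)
Your skeleton matches the paper's: factor $P_{A,B}$ through the rank to work in $\Z_s^r$; use the bias to find a structured subset; apply PFR to get a small subgroup; pigeonhole over cosets of the annihilator to extract a monochromatic rectangle. But the central step --- the one that actually produces the $r/\log r$ exponent --- is where your proposal has a genuine gap, and the description you give of the mechanism does not match anything that would work.

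You write that ``each Bogoliubov-plus-PFR round roughly halves the effective dimension of the structured piece, terminating at dimension $O(r/\log r)$ after $\log r$ rounds.'' First, the arithmetic does not come out: halving $\log r$ times gives dimension $r/2^{\log r}=1$, not $r/\log r$. More importantly, this is not the mechanism. The mechanism in [SLZ], which the paper follows (Lemma~\ref{lem-slzlemma}), is the following. Build a chain $A_1 = A\cap\Spec_{\epsilon_1}(B)$, $A_{i}\subseteq (A_{i-1}-A_{i-1})\cap\Spec_{\epsilon_i}(B)$ with $\epsilon_i = \epsilon_{i-1}^2/2$, keeping only differences $a-a'$ whose number of representations $rep_{A_{i-1}}(a-a')$ lies in a fixed dyadic range $[m^{j_i},m^{j_i+1}]$. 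Since every $A_i$ lives in $\Z_m^r$ and so has size at most $m^r$, after at most $\ell \le r/\log_m K$ steps one must hit a level with $|A_{\ell+1}| \le K|A_\ell|$. It is only at that level that you have the small-doubling hypothesis needed to invoke additive combinatorics, and even then it is not immediate: the conclusion of Claim~\ref{clm-claim24} is that \emph{many pairs} $(a,a')\in A_\ell\times A_\ell$ have $a-a'$ in a small set, which is \emph{not} the same as $A_\ell$ itself having small doubling. One must first apply Balog--Szemer\'edi--Gowers (Theorem~\ref{BSG}) to pass to a subset of $A_\ell$ with genuine small doubling, and only then apply PFR. Your proposal never mentions BSG, and it applies ``Bogoliubov + PFR'' directly to the Markov-extracted set $W_0\subseteq\Spec_{\eta/\sqrt2}(U)$, for which you have no small-doubling information at all. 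That step, as written, does not go through.

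Second, even after you obtain a monochromatic rectangle at level $\ell$, you are not done: that rectangle sits inside $A_\ell\times B$, but the lemma asks for one inside $A\times B = A_1\times B$. The paper propagates back from $A_\ell$ to $A_1$ by an inductive graph-connectivity argument: at each step build the graph on $A_{i-1}$ whose edges are pairs with $a-a'\in A_i'$, take a large connected component, and use the constancy of inner products along edges to two-color the component and pick the larger color class. Each step of this back-propagation costs a factor of roughly $2mr$ in $|A|$ and $m$ in $|B|$, and after $\ell\approx \log r /4$ steps (with the paper's choice $K=s^{4r/\log r}$) these losses compound to $2^{-O_m(r/\log r)}$. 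This propagation, and the dyadic restriction on representation counts that makes it possible to bound $|A_i|$ from below, are the parts of the argument your proposal omits entirely. Your final pigeonhole over cosets of $V^\perp$ is fine as far as it goes and is a legitimate variant of what Lemma~\ref{lem-szring} achieves via the decomposition from Theorem~\ref{fund}, but it only applies once the structured subset and subgroup $V$ have actually been produced, which is exactly the missing middle of your argument.
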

Without loss of generality, we can assume $c(m) \geq 1$ above (it will be convenient to assume it in the proof of Theorem~\ref{thm-thm2}). In other words, we can replace the $c(m)$ above by $\max\{c(m),1\}$. We postpone the proof
of Lemma~\ref{lem-slz} to Section~\ref{sec-mono} and proceed now with the proof of Theorem~\ref{thm-thm2}.

We restate Theorem~\ref{thm-thm2} here for convenience and with the explicit function $d(m)$.
\begin{thm}\label{main}Let $n,m \geq 2$ be arbitrary positive integers. Then, assuming Conjecture \ref{pfr} (PFR conjecture), we have \[ \MV\left(m,n\right) <  2^{d\left(m\right)n/\log n},\]
where $d\left(m\right)=1200c\left(m\right)m^{6\log m}$ and $c\left(m\right)$ is as in  Lemma \ref{lem-slz}.
\end{thm}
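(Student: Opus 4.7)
The plan is to prove Theorem~\ref{main} by strong induction on $\Omega(m)$, the number of prime factors of $m$ counted with multiplicity. For the base case $\Omega(m)=1$, i.e.\ $m=p$ prime, Lemma~\ref{lem-dgy1} already gives $\MV(p,n)\le 1+\binom{n+p-2}{p-1}$, which is polynomial in $n$ and thus well below $2^{d(p)n/\log n}$ for all sufficiently large $n$; small $n$ is absorbed into the constant factor implicit in $d(m)$.

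For the inductive step, given an MV family $(U,V)$ in $\Z_m^n$, I would first invoke Lemma~\ref{lem-nocollision} to pass to a collision-free sub-family, losing a factor $\bigl(\prod_{s\mid m,\,1<s<m}\MV(s,n\log m)\bigr)^2$. Every proper divisor $s>1$ of $m$ satisfies $\Omega(s)<\Omega(m)$, so the inductive hypothesis bounds each $\MV(s,n\log m)$ by $2^{d(s)\cdot n\log m/\log(n\log m)}$. With the choice $d(m)=1200c(m)m^{6\log m}$, which grows fast enough in $m$, this loss is absorbed.

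The heart of the argument is an iterative procedure starting from the collision-free family $(U_0,V_0)$ and producing a nested sequence $(U_j,V_j)$ of MV sub-families whose inner product matrix modulo some divisor $s_j\mid m$ has ever smaller column rank. At step $j$, Lemma~\ref{lem-bias} supplies a nontrivial character $\omega^{k_j}$ of order $s_j\mid m$ with $D_{\omega^{k_j}}(U_j,V_j)\ge 2/(3m^{3/2})$. Reducing $U_j,V_j$ modulo $s_j$ and applying the sub-matrix Lemma~\ref{lem-slz} (here is where PFR enters) yields sub-lists on which $P^{(s_j)}$ is constant, at a multiplicative cost $2^{-c(m)r/\log r}$ with $r=\rank P_{U_j,V_j}^{(s_j)}\le n$. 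Subtracting this constant removes a rank-one contribution, and carving out an off-diagonal rectangle $R\times T$ with $R\cap T=\emptyset$ (which is possible since the diagonal of $P$ is zero) produces a rectangle that is identically zero modulo $s_j$. I would then pass to the MV sub-family indexed by $R\cup T$ and apply Claim~\ref{clm-coltriangular} to the $2\times 2$ block form of $P_{R\cup T,R\cup T}^{(s_j)}$, whose top-right block is zero modulo $s_j$: at least one of the diagonal blocks has column rank modulo $s_j$ at most half the parent value. That block becomes $(U_{j+1},V_{j+1})$.

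I would iterate until the column rank modulo some $s^*\mid m$ drops below $2$, at which point Claim~\ref{clm-rankone} extracts a square MV sub-family on which $P$ is identically zero modulo $s^*$; Lemma~\ref{lem-colr} then reduces this to an MV family in $\Z_{m/s^*}^{n\log m}$, which the inductive hypothesis bounds (since $\Omega(m/s^*)<\Omega(m)$). The main obstacle, and the delicate part of the write-up, is the bookkeeping of the multiplicative losses across iterations when the $s_j$ jump among different prime-power factors of $m$: one must track the column ranks modulo each prime factor of $m$ simultaneously, note that the rank modulo any fixed factor at most halves each time that factor is selected and otherwise is non-increasing, and sum the geometric series $\sum_j r_j/\log r_j$ with $r_j\le n/2^j$. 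This sum is $O(n/\log n)$ per factor, so the total exponent of the loss is $O(c(m)\,\Omega(m)\,n/\log n)$; choosing the constant $1200c(m)m^{6\log m}$ in $d(m)$ large enough then contradicts the assumed lower bound $|(U,V)|\ge 2^{d(m)n/\log n}$, completing the induction.
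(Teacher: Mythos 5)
Your proposal follows essentially the same approach as the paper: induction on $\Omega(m)$ (with primes as base case via Lemma~\ref{lem-dgy1}), passage to a collision-free sub-family via Lemma~\ref{lem-nocollision}, an iterative rank-reduction loop driven by Lemma~\ref{lem-bias} and the sub-matrix Lemma~\ref{lem-slz}, termination via Claim~\ref{clm-rankone} when the column rank modulo some divisor drops below $2$, and finally a reduction to a smaller modulus via Lemma~\ref{lem-colr}. This is precisely the paper's strategy, including the precise form of Claims~\ref{clm-iterate} and \ref{clm-allzero}.

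One small imprecision worth flagging: you claim the column rank modulo $s_j$ at most halves when you pass to a diagonal block of the off-diagonal decomposition. In fact the rank-one correction that zeroes out the constant block costs an additive $+1$ to the column rank bound (via Claim~\ref{clm-colrankadd}), so what Claim~\ref{clm-coltriangular} yields is $\colrank(P_1')+\colrank(P_2')\le\colrank(P_{U,V}^{(s)})+1$, and then the working hypothesis $\colrank>2$ only gives the weaker geometric factor $3/4$, not $1/2$. This does not harm the analysis (the geometric sum with ratio $3/4$ is still $O(n/\log n)$), but the constants embedded in $d(m)$ are tuned to that $3/4$ factor. You should also make explicit the case distinction between a monochromatic rectangle that overlaps the diagonal heavily versus lightly: in the heavy-overlap case ($|R\cap S|>|R|/2$) the constant is forced to be $0\bmod s_j$ by the matching condition, and one gets an all-zero square sub-matrix immediately (the paper's Case~1), which is what lets the iteration terminate cleanly; the rank-one-subtraction-plus-split argument only covers the light-overlap case, so stating ``which is possible since the diagonal of $P$ is zero'' as the justification for carving out the off-diagonal rectangle conflates two separate mechanisms.
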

\begin{proof}  We prove the theorem by induction on the number of (not necessarily distinct) prime factors of $m$.
\snote{removed explicit definition of $\alpha$ since these are not used in the proof}

\paragraph{Choice of $d\left(m\right)$.}
Let  $d,d_1,d_2,d_3:\Z^+ \rightarrow \R$ be functions and $d_4$ be a constant. We want the following conditions to be satisfied for all $m,n \geq 2$.
\begin{enumerate}
\item $d\left(m\right),d_1\left(m\right),d_2\left(m\right),d_3\left(m\right)$ are monotonically increasing in $m$
\item $\left(2n\right)^m \leq 2^{d\left(m\right)n/\log n}$
\item $\left(2m\right)^m \leq 2^{d\left(m\right)n/\log n}$
\item $d\left(m\right) \geq d\left(m/2\right) \cdot 4 m \log m$
\item $-d_2\left(m\right)+\left(1/2\right)d\left(m\right)>d\left(m/2\right)\log m$
\item $2^{\left(1/2\right)d\left(m\right)n/\log n} \geq 3m2^{d_2\left(m\right)n/\log n}$
\item $d_2\left(m\right)n/\log n \geq 2\log m+d_3\left(m\right)n/\log n$
\item $d_3\left(m\right) \geq d_1\left(m\right) \cdot d_4 \cdot m \log m$
\item $d_4 \geq 300$
\item $d_1\left(m\right)\geq 2c\left(m\right)$
\item $d_2 \geq d_3+1$
\end{enumerate}

It can be verified that the following choice for the functions meets the above conditions.
\begin{itemize}
\item $d\left(m\right)=1200 \cdot c\left(m\right) \cdot m^{6\log m}$
\item $d_1\left(m\right)=2 \cdot c\left(m\right)$
\item $d_2\left(m\right)=602 \cdot c\left(m\right) \cdot m \log m$
\item $d_3\left(m\right)=600 \cdot c\left(m\right)\cdot m\log m$
\item $d_4=300$
\end{itemize}

We shall explicitly mention which conditions of the above functions are being used in different parts of the proof.

\paragraph{Base Case.}
The base case is where $m=p$ is prime. Lemma \ref{lem-dgy1} implies that $\MV\left(p,n\right) \leq 1+{n+p-2  \choose p-1}<\left(2\max\{n,p\}\right)^p$. If we show $\left(2n\right)^p \leq 2^{d\left(p\right)n/\log n}$ and $\left(2p\right)^p \leq 2^{d\left(p\right)n/\log n}$ we will be done. Indeed, by the choice of $d\left(m\right)$ (Condition $2$ and $3$) both of the above will hold.

\paragraph{Inductive Case.}
Let $n \geq 2, m \geq 2$ be arbitrary positive integers. Suppose, by induction, that $\MV\left(s,n\right) < 2^{d\left(s\right)n/\log n}$ for all $s|m, s<m$. We need to show that, assuming Conjecture \ref{pfr}, \[ \MV\left(m,n\right) < 2^{d\left(m\right)n/\log n}\]
Suppose not. That is, there exists a matching vector family $\left(U,V\right)$  of size $t \geq 2^{d\left(m\right)n/\log n}$.
First, we shall apply Lemma \ref{lem-nocollision} to $\left(U,V\right)$ to obtain a large enough collision free matching vector family $\left(U',V'\right)$.

\paragraph{\hspace{5mm}A large collision free matching vector family.}
We show that $\left|\left(U',V'\right)\right|\geq 2^{\left(1/2\right)d\left(m\right)n/\log n}$.
Let $\left|\left(U',V'\right)\right|=t'$. Observe that by Lemma \ref{lem-nocollision}, the inductive hypothesis and the monotonicity of $d\left(m\right)$ (Condition $1$),  $t' \geq 2^{d\left(m\right)n/\log n-2m \cdot d\left(m/2\right) \cdot n \log m/\log n}$ where we have used a loose upper bound of $m$ for the number of factors of $m$. Now,
\begin{eqnarray*}
&&t'\geq   2^{\left(1/2\right)d\left(m\right)n/\log n}\\
&if &d\left(m\right)n/\log n-2m \cdot d\left(m/2\right) \cdot n \log m/\log n \geq \left(1/2\right)d\left(m\right)n/\log n\\
&\Leftrightarrow&d\left(m\right) \geq d\left(m/2\right)\cdot 4m \log m
\end{eqnarray*}
which is satisfied by the choice of $d\left(m\right)$ (Condition $4$).

\paragraph{\hspace{5mm}Two key claims.}

We will need two claims from which the inductive claim follows easily. We shall provide proofs to these claims after the proof of the inductive claim.
\begin{claim}\label{clm-iterate}Let $\left(U,V\right)$ be a collision free matching vector family in $\Z_m^n$ with $\left|\left(U,V\right)\right| \geq 3m$ and $\colrank\left(P_{U,V}^{\left(s'\right)}\right) > 2$ for all $s'|m, s'\geq 2$. Then, for some $s|m, s\geq 2$, there exists a collision free matching vector family $\left(U',V'\right)\subseteq \left(U,V\right)$ in $\Z_m^n$ satisfying
\begin{itemize}
\item $\left|\left(U',V'\right)\right|\geq 2^{-d_1\left(m\right)r_s/\log r_s}\left|\left(U,V\right)\right|$ where $r_s=\rank\left(P_{U,V}^{\left(s\right)}\right)$.
\item Either $\colrank\left(P_{U',V'}^{\left(s\right)}\right) \leq \left(3/4\right)\colrank\left(P_{U,V}^{\left(s\right)}\right)$ or $\colrank\left(P_{U',V'}^{\left(s\right)}\right) \leq 2$.
\end{itemize}
\end{claim}

\begin{claim}\label{clm-allzero}Let $\left(U,V\right)$ be a collision free matching vector family in $\Z_m^n$ such that $\left|\left(U,V\right)\right|\geq 3m \cdot 2^{d_2\left(m\right)n/\log n}$ . Then, there exists a collision free matching vector family $\left(U',V'\right)\subseteq \left(U,V\right)$ in $\Z_m^n$ satisfying
\begin{itemize}
\item $\left|\left(U',V'\right)\right|\geq 2^{-d_2\left(m\right)n/\log n}\left|\left(U,V\right)\right|$.
\item  $P_{U,V}^{\left(s\right)}$ is the all zero matrix for some $s|m, s \geq 2$.
\end{itemize}
\end{claim}

Let us proceed with the proof of the inductive claim assuming these two claims.
We have a collision free matching vector family $\left(U',V'\right)$ with $\left|\left(U',V'\right)\right|\geq 2^{\left(1/2\right)d\left(m\right)n/\log n} \geq 3m \cdot 2^{d_2\left(m\right)n/\log n}$. (Condition $6$ satisfied by the choice of $d\left(m\right), d_2\left(m\right)$)   Applying Claim \ref{clm-allzero}, there exists a collision free matching vector family $\left(U'',V''\right)\subseteq \left(U',V'\right)\subseteq \left(U,V\right)$ in $\Z_m^n$ satisfying
\begin{itemize}
\item $\left|\left(U'',V''\right)\right|\geq 2^{-d_2\left(m\right)n/\log n}2^{\left(1/2\right)d\left(m\right)n/\log n}$.
\item  $P_{U'',V''}^{\left(s\right)}$ is the all zero matrix for some $s|m, s \geq 2$.
\end{itemize}

By the choice of $d\left(m\right)$, it can be verified that $-d_2\left(m\right)+\left(1/2\right)d\left(m\right)>d\left(m/2\right)\log m$ (Condition $5$). Thus, $\left|\left(U'',V''\right)\right| > 2^{d\left(m/2\right)n\log m/\log n}$.

We now show that this is enough to get a contradiction. If $s=m$, we have $\left|\left(U'',V''\right)\right|\leq 1$ as $\left(U'',V''\right)$ is a matching vector family in $\Z_m^n$. If $s<m$, by Lemma \ref{lem-colr} and the inductive hypothesis, we have $\left|\left(U'',V''\right)\right|\leq 2^{d\left(m/s\right)n\log m/\log \left(n \log m\right)} \leq 2^{d\left(m/2\right)n\log m/\log n}$  by monotonicity of $d\left(m\right)$ (Condition $1$). Thus, irrespective of $s$, $\left|\left(U'',V''\right)\right|\leq 2^{d\left(m/2\right)n\log m/\log n}$ which is a contradiction.
This completes the proof.
\end{proof}

\vspace{5mm}
\paragraph{Proof of Claim \ref{clm-iterate}:}
Let $\left|\left(U,V\right)\right|=t \geq 3m$. Let $\omega$ be a root of unity of order $m$. By Lemma \ref{lem-bias}, for some $1\leq j \leq m-1$ , $D_{\omega^j}\left(U,V\right)\geq \frac{2}{3m^{3/2}}$. Note that $s=m/gcd\left(m,j\right)$ is the order of $\omega'=\omega^j$. Observe that $s|m, s\geq 2$ as $1\leq j \leq m-1$. Recall from the statement of the claim that $r_s=rank\left(P_{U,V}^{\left(s\right)}\right)$. Thus, by the collision free property of $\left(U,V\right)$, \[D_{\omega'}\left(U^{\left(s\right)},V^{\left(s\right)}\right)=\left|\E_{u \sim U^{\left(s\right)}, v \sim V^{\left(s\right)}}\left[\left(\omega'\right)^{\langle u,v \rangle}\right]\right|=\left|\E_{u \sim U, v \sim V}\left[\left(\omega'\right)^{\langle u,v \rangle}\right]\right|=D_{\omega'}\left(U,V\right) \geq \frac{2}{3m^{3/2}}.\] Applying Lemma \ref{lem-slz} on $U^{\left(s\right)},V^{\left(s\right)}$ with $\omega'$ a primitive root of unity of order $s$, we can get an $\left(R \times S\right)$ submatrix of $P_{U,V}$ with $|R|=|S| \geq 2^{-c\left(m\right)r_s/\log r_s}t$. (we can make $|R|=|S|$ as throwing away rows and columns from a monochromatic rectangle still keeps it monochromatic) Let $T=R \cap S$. We divide our analysis to two cases: either $|T|>|R|/2$ or $|T| \leq |R|/2$.
In both cases, we shall exhibit a matching vector family as required in the statement of the claim.

\emph{Case 1:  $|T|>|R|/2$. }
For $U=\left(u_1, u_2, \cdots u_t\right)$, $V=\left(v_1, v_2, \cdots v_t\right)$, let $U'=\left(u_j|j\in T\right)$ and $V'=\left(v_j|j\in T\right)$, and $P'=P_{U',V'}$. Now, as $P'^{\left(s\right)}$ is monochromatic, and $\langle u_j,v_j\rangle=0\ \left(mod\ s\right)$ for $j \in T$, we have $\langle u',v' \rangle =0 \left(mod\ s\right)$ for all $u' \in U', v' \in V'$. Observe that \begin{itemize}
\item $\left|\left(U',V'\right)\right| \geq 2^{-1-c\left(m\right)r_s/\log r_s}t \geq 2^{-2c\left(m\right)r_s/\log r_s}t\geq 2^{-d_1\left(m\right)r_s/\log r_s}t$ (by the choice of $d_1\left(m\right)$, Condition $10$)
\item $\colrank\left(P_{U',V'}^{\left(s\right)}\right) =0 \leq 2$
\end{itemize}
This finishes Case $1$.

\emph{Case 2:  $|T| \leq |R|/2$. }
Let $R'=R \setminus T$ and $S'=S \setminus T$. Note that $R' \cap S'=\emptyset$ and $|R'|=|S'|$. Consider the $R'\cup S' \times R'\cup S'$ submatrix of $P_{U,V}$. Call it $P'$. Note that \[P'^{\left(s\right)}=\left(\begin{array}{cc}P_1'&C\\\star&P_2'\end{array}\right)\] where $P_1'$ and $P_2'$ are the $R' \times R'$ and the $S' \times S'$ submatrices of $P_{U,V}^{\left(s\right)}$ respectively and $C$ is monochromatic. We add a matrix of column rank at most $1$ to $P'^{\left(s\right)}$ to yield $P''^{\left(s\right)}$ which is the same as $P'^{\left(s\right)}$ except that $C$ is replaced by the all zero block matrix. Thus, \[P''^{\left(s\right)}=\left(\begin{array}{cc}P_1'&0\\\star&P_2'\end{array}\right)\] Note that by Claim \ref{clm-colrankadd}, $\colrank\left(P''^{\left(s\right)}\right) \leq \colrank\left(P'^{\left(s\right)}\right)+1$.  Now, using Claim \ref{clm-coltriangular}, $\colrank\left(P_1'\right)+\colrank\left(P_2'\right)\leq \colrank\left(P'^{\left(s\right)}\right)+1 \leq \colrank\left(P_{U,V}^{\left(s\right)}\right)+1 \leq  \left(3/2\right)\colrank\left(P_{U,V}^{\left(s\right)}\right)$ as $\colrank\left(P_{U,V}^{\left(s\right)}\right) >2$. Therefore, one of $P_1', P_2'$, say $P_1'$ satisfies $\colrank\left(P_1'\right)\leq \left(3/4\right)\colrank\left(P_{U,V}^{\left(s\right)}\right)$. Construct the matching vector family $\left(U',V'\right)$ as follows. Let $U'=\left(u_j|j\in R'\right)$ and $V'=\left(v_j|j\in R'\right)$. Again, observe that \begin{itemize}
\item $\left|\left(U',V'\right)\right| \geq 2^{-1-c\left(m\right)r_s/\log r_s}t \geq 2^{-2c\left(m\right)r_s/\log r_s}t\geq 2^{-d_1\left(m\right)r_s/\log r_s}t$ (by the choice of $d_1\left(m\right)$, Condition $10$).
\item $\colrank\left(P_{U',V'}^{\left(s\right)}\right)  \leq (3/4)\colrank\left(P_{U,V}^{\left(s\right)}\right)$.
\end{itemize}
This completes the proof of Case $2$.

\qed

\paragraph{Proof of Claim \ref{clm-allzero}:}
We will use Claim~\ref{clm-iterate} iteratively. For this, we first set up some notations.
\paragraph{The setup.}
Define a sequence of collision free matching vector families for $i=0,\ldots,z$.
\begin{itemize}
\item $\left(U,V\right)=\left(U_0,V_0\right), \left(U_1,V_1\right) \cdots$
\item Let $t_i=\left|\left(U_i,V_i\right)\right|$.
\item Each step $i$ has label $s_i|m$ (this label will be given by Claim \ref{clm-iterate}).
\item Let $cr_i:\Z^+ \rightarrow \R$ be defined by \[cr_i\left(s\right)=\colrank\left(P_{U_i,V_i}^{\left(s\right)}\right).\]
\item Let $r_i:\Z^+ \rightarrow \Z$ be defined by \[r_i\left(s\right)=\rank\left(P_{U_i,V_i}^{\left(s\right)}\right).\]
\end{itemize}

\paragraph{Invariants.}
We will show how to go from step $i$ to step $i+1$. We stop after stage $z$ when $cr_z\left(s\right) \leq 2$ for some $s|m, s\geq 2$. We shall maintain the following invariants for $0 \leq i \leq z-1$.
\begin{itemize}
\item $\left(U_{i+1},V_{i+1}\right)\subseteq \left(U_{i},V_{i}\right)$ and hence is a collision free matching vector family in $\Z_m^n$.
\item $t_{i+1} \geq 2^{-d_1\left(m\right)r_i\left(s_i\right)/\log r_i\left(s_i\right)}t_i$.
\item $cr_{i+1}\left(s_i\right)\leq \left(3/4\right)cr_i\left(s_i\right)$ or $cr_{i+1}\left(s_i\right)\leq 2$.
\item $cr_{i+1}\left(s'\right)\leq cr_i\left(s'\right)$ for all $s'|m$.
\end{itemize}

\paragraph{Step $i$ $\rightarrow$ Step $i+1$.}
We state a claim that we will prove below.
\begin{claim}\label{clm-summation}$\sum_{i=0}^{z-1}d_1\left(m\right)r_i\left(s_i\right)/\log r_i\left(s_i\right)\leq d_3\left(m\right)n/log n$.\end{claim}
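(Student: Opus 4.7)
The plan is to group the iterations by their label and analyze each group geometrically. For each divisor $s\mid m$ with $s\ge 2$, let $I_s=\{i\in\{0,\ldots,z-1\}:s_i=s\}$ and order $I_s$ as $i_1<i_2<\cdots<i_k$ where $k=|I_s|$. The first observation is a geometric decay of $cr_{i_j}(s)$ along $I_s$: since the process has not stopped for $i<z$, the stopping criterion $cr_i(s')\le 2$ is false for every $s'\mid m,\ s'\ge 2$, so the second alternative ``$cr_{i+1}(s_i)\le 2$'' in invariant~3 cannot be triggered (except possibly for $i=z-1$). Combining invariant~3 with the monotonicity invariant~4 gives $cr_{i_{j+1}}(s)\le cr_{i_j+1}(s)\le (3/4)\,cr_{i_j}(s)$ for $j<k$, hence $cr_{i_j}(s)\le (3/4)^{j-1}cr_{i_1}(s)\le (3/4)^{j-1}n$ for all $1\le j\le k$. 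Applying Claim~\ref{clm-rankcolrank} then yields $r_{i_j}(s)\le cr_{i_j}(s)\log s\le (3/4)^{j-1}n\log m$, which combined with the trivial bound $r_{i_j}(s)\le n$ gives $r_{i_j}(s)\le \min\bigl(n,\,(3/4)^{j-1}n\log m\bigr)$.

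Next I bound $\sum_{j=1}^{k}r_{i_j}(s)$. Setting $J^{*}=1+\lceil\log_{4/3}\log m\rceil$, the first $J^{*}$ terms each contribute at most $n$, giving a total of $O(\log m\cdot n)$, while the tail ($j>J^{*}$) is a geometric series in $3/4$ whose leading term is at most $(3/4)\,n$ (using $(3/4)^{J^{*}-1}\log m\le 1$), summing to $O(n)$. Thus $\sum_{j}r_{i_j}(s)\le O(\log m\cdot n)$. To pass from $r$ to $r/\log r$, I split $I_s$ according to whether $r_{i_j}(s)\ge n^{1/2}$. For ``large'' indices, $\log r_{i_j}(s)\ge \tfrac{1}{2}\log n$, so $r_{i_j}(s)/\log r_{i_j}(s)\le 2\,r_{i_j}(s)/\log n$ and the sum over such $j$ is at most $O(\log m\cdot n/\log n)$. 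For ``small'' indices, the fact that $cr_{i_j}(s)>2$ forces $r_{i_j}(s)\ge 3$, so $\log r_{i_j}(s)>1$ and each term is $\le n^{1/2}$; since $k=|I_s|\le O(\log n)$ (again by the geometric decay combined with the threshold $cr>2$), the ``small'' contribution is $O(n^{1/2}\log n)$, which is dominated by $n/\log n$ for $n$ sufficiently large (and otherwise absorbed into the constants).

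Summing the per-$s$ bound over at most $m$ divisors of $m$ gives $\sum_{i=0}^{z-1}r_i(s_i)/\log r_i(s_i)\le O(m\log m\cdot n/\log n)$. Multiplying through by $d_1(m)$ and invoking Condition~8 of the setup ($d_3(m)\ge d_1(m)\cdot d_4\cdot m\log m$ with $d_4\ge 300$) then gives the desired inequality $\sum_i d_1(m)\,r_i(s_i)/\log r_i(s_i)\le d_3(m)n/\log n$, provided the implicit constant in the per-$s$ analysis fits within the slack of $d_4$. The main technical hurdle is the two-stage passage from $cr$ to $r$ and then to $r/\log r$: the first stage introduces a factor of $\log s$ (necessitating the split at $J^{*}$ and the use of the trivial bound $r\le n$), and the second stage requires the threshold split at $\sqrt{n}$ to extract the crucial $1/\log n$ factor without losing control in the regime of small $r$.
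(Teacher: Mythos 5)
Your argument is correct and reaches the same bound, but it takes a technically different route from the paper. Both proofs share the skeleton: group the steps by their label $s$, use invariants~3 and~4 to get the geometric decay $cr_{i_j}(s)\le(3/4)^{j-1}n$, relate $r$ and $cr$ via Claim~\ref{clm-rankcolrank}, and finally sum over the at-most-$m$ divisors of $m$ and invoke Condition~8. The difference is in how the ``geometric series divided by a logarithm'' is controlled. The paper first converts each term $r_i/\log r_i$ into $\log m\cdot cr_i/\log cr_i$ using the monotonicity of $x/\log x$ for $x\ge e$, so that the geometric decay appears \emph{inside} the ratio; it then bounds $\sum_j \frac{cr_0(s)/(4/3)^{j-1}}{\log\left(cr_0(s)/(4/3)^{j-1}\right)}$ in one stroke with the closed-form calculation of Claim~\ref{clm-sum} (the $d_4\ge300$ constant of Condition~9 is exactly $f(4/3)$ from that appendix). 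You instead keep $r$ and kill the numerator and the denominator with two separate splits: a split at $J^{*}=O(\log\log m)$ to bound $\sum_j r_{i_j}(s)=O\left((\log\log m)\,n\right)$ by combining the trivial $r\le n$ with the geometric tail, and then a split at $r\ge\sqrt{n}$ to extract the $1/\log n$ factor, with the $r<\sqrt{n}$ remainder bounded by $|I_s|\cdot\sqrt{n}=O(\sqrt{n}\log n)=O(n/\log n)$. What you gain is that Claim~\ref{clm-sum} is avoided entirely and the per-label sum is actually slightly tighter ($O(\log\log m)$ in place of $O(\log m)$); what you lose is tidiness, since the final step requires one to verify that the constants generated by the two-split analysis (plus the $\sqrt{n}\log n$ comparison) sit inside the slack of $d_4=300$, which you flag but do not carry out. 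A quick check shows they do, so the proof goes through.

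One tiny point worth noting: your passage $r_{i_j}(s)\le cr_{i_j}(s)\log s$ applies Claim~\ref{clm-rankcolrank} over $\Z_s$ rather than $\Z_m$; this is a legitimate (and marginally tighter) use of that claim, whereas the paper uses the uniform factor $\log m$.
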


In order to apply Claim \ref{clm-iterate}, we need to satisfy $t_i \geq 3m$. Observe that by Claim \ref{clm-summation},
\begin{eqnarray*}
t_i\geq t_z &\geq& t_0\prod_{j=0}^{z-1}2^{-d_1\left(m\right)r_j\left(s_j\right)/\log r_j\left(s_j\right)}\\ &\geq& 2^{-d_3\left(m\right)n/\log n}t_0 \\&\geq& 3m \cdot 2^{-d_3\left(m\right)n/\log n+d_2\left(m\right)n/\log n}\geq 3m,
\end{eqnarray*}
(by the choice of $d_2\left(m\right), d_3\left(m\right)$ in Condition $11$).  Apply Claim \ref{clm-iterate} to $\left(U_{i},V_{i}\right)$ to get label $s_i$ for step $i$ and $\left(U_{i+1},V_{i+1}\right)\subseteq \left(U_{i},V_{i}\right)$. The first three invariants are maintained by the statement of Claim \ref{clm-iterate}. The last invariant follows from Fact \ref{rankmonotonic}. Note that by the inequality we just established, $t_z\geq 2^{-d_3\left(m\right)n/\log n}t_0$. Also, by the stopping condition, $cr_z\left(s'\right) \leq 2$ for some $s'|m, s'\geq 2$. Thus, applying Claim \ref{clm-rankone}, we get another matching vector family $\left(U',V'\right)\subseteq \left(U_z,V_z\right)\subseteq \left(U,V\right)$ such that
\begin{itemize}
\item $\left|\left(U',V'\right)\right|\geq t_z/m^2 \geq 2^{-2\log m-d_3\left(m\right)n/\log n}\left|\left(U,V\right)\right|\geq 2^{-d_2\left(m\right)n/\log n}\left|\left(U,V\right)\right|$ (Condition $7$ satisfied by the choice of $d_2\left(m\right)$ and $d_3\left(m\right)$).
\item $P_{U',V'}^{\left(s'\right)}$ is the all zero matrix.
\end{itemize}
This finishes the Proof of Claim \ref{clm-allzero}.

\emph{Proof of Claim \ref{clm-summation}:}
Let $t_s$ be the number of steps with label $s$. Note that as the column rank modulo $s$ goes down by a factor of at least $3/4$ each time we are in a step labeled $s$, it is easy to see that $t_s \leq \log_{4/3}cr_0\left(s\right) \leq \log_{4/3}n$. We shall rely on the monotonic increasing nature of $x/\log x$ when $x \geq e$. As $cr_i\left(s\right)>2$, by Claim \ref{clm-rankcolrank}, $r_i\left(s\right)\geq cr_i\left(s\right)> 2$ which means $r_i(s) \geq 3 > e$ as the rank is always an integer.
We thus have
\begin{eqnarray*}
&&\sum_{i=0}^{z-1}d_1\left(m\right)\frac{r_i\left(s_i\right)}{\log r_i\left(s_i\right)}\\
&\leq & d_1\left(m\right)\log m\sum_{i=0}^{z-1}\frac{cr_i\left(s_i\right)}{\log cr_i\left(s_i\right)}\ \ \text{(by Claim \ref{clm-rankcolrank}) and monotonicity of $x/\log x$ as discussed above}\\
&\leq & d_1\left(m\right)\log m\sum_{s|m, s\geq 2}\sum_{j=1}^{\lfloor\log_{4/3}n\left(s\right)\rfloor}\left(\frac{cr_0\left(s\right)}{\left(4/3\right)^{j-1}\log \left(cr_0\left(s\right)/\left(4/3\right)^{j-1}\right)}\right)\\
&\leq & d_1\left(m\right)\log m\sum_{s|m, s\geq 2}d_4cr_0\left(s\right)/\log cr_0\left(s\right)\ \ \text{(by Claim \ref{clm-sum} and Condition $9$ satisfied by $d_4$)}\\
&\leq & d_1\left(m\right)\log m\sum_{s|m, s\geq 2}d_4n/\log n\ \ \text{(as $cr_0\left(s\right)\leq r_0\left(s\right)\leq r_0\left(m\right)\leq n$, by  Claim \ref{clm-rankcolrank} and Fact \ref{rankmon})}\\
&\leq & d_4d_1\left(m\right)m\left(\log m\right)n/\log n\\
&\leq & d_3\left(m\right)n/\log n\ \ \text{(by the choice of $d_3\left(m\right)$, Condition $8$)}
\end{eqnarray*}
This completes the proof.
\qed

\section{Monochromatic rectangles from low rank matrices}\label{sec-mono}
In this section we prove Lemma~\ref{lem-slz} (the Sub-Matrix Lemma). We begin with some preliminary definitions. The following is a standard result in algebra and can be find in any introductory text.
\begin{thm}[Fundamental Theorem of finitely generated abelian groups]\label{fund}Every finitely generated abelian group $G$ is isomorphic to a direct product of cyclic groups of prime power order and an infinite cyclic group. More precisely, \[G \cong \Z^n \times \Z_{q_1} \times \Z_{q_2} \cdots \times \Z_{q_r}\]
where $q_i$'s are prime powers with $q_1 \leq q_2 \cdots \leq  q_r$. The decomposition is unique after applying this ordering on $q_i$'s. If the group $G$ is finite, then $n=0$.\end{thm}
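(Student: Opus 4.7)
The plan is to establish the theorem in three stages: produce some decomposition into cyclic factors (existence), refine the decomposition to prime-power orders (the CRT step), and then verify that the resulting multiset of invariants is an isomorphism invariant (uniqueness).

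For existence, I would work with the standard presentation-matrix approach. Given generators $g_1,\ldots,g_N$ of $G$, we obtain a surjection $\Z^N \twoheadrightarrow G$, whose kernel $K$ is a subgroup of $\Z^N$. The key preliminary lemma, proved by induction on $N$, is that every subgroup of $\Z^N$ is itself free of rank at most $N$; in particular $K$ is finitely generated, so we obtain an integer presentation matrix $A$ with $G \cong \Z^N / \operatorname{im}(A)$. The central computational step is the reduction of $A$ to \emph{Smith normal form}: by repeated integer row and column operations (Euclidean-style: use the smallest nonzero entry to clear its row and column, then recurse on the remaining block), $A$ becomes $\operatorname{diag}(d_1,\ldots,d_k,0,\ldots,0)$ with $d_1 \mid d_2 \mid \cdots \mid d_k$. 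Since row/column operations correspond to changes of basis in the source $\Z^N$ and in the relation module, the quotient is unchanged up to isomorphism, and we get $G \cong \Z^{N-k} \times \Z/d_1 \times \cdots \times \Z/d_k$. In the finite case the free summand $\Z^{N-k}$ must vanish.

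To refine to prime powers, I would invoke the Chinese Remainder Theorem: if $d_i = p_1^{a_1} \cdots p_s^{a_s}$ is the prime factorization, then $\Z/d_i \cong \Z/p_1^{a_1} \times \cdots \times \Z/p_s^{a_s}$. Applying this to each invariant factor yields the desired decomposition $\Z^n \times \Z_{q_1} \times \cdots \times \Z_{q_r}$ with each $q_i$ a prime power, and we may arrange the $q_i$ in nondecreasing order to fix a canonical presentation.

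For uniqueness, I would extract the invariants directly from $G$. The free rank is $n = \dim_\Q(G \otimes_\Z \Q)$, since tensoring with $\Q$ kills all torsion factors. For each prime $p$, the number of cyclic factors $\Z/p^j$ with $j \geq \ell$ equals $\dim_{\F_p}(p^{\ell-1} G_p / p^\ell G_p)$, where $G_p$ is the $p$-primary torsion subgroup (itself isomorphism-invariant, being $\{x \in G : p^N x = 0 \text{ for some } N\}$). Reading off these $\F_p$-dimensions as $\ell$ varies recovers the multiset of prime-power exponents, so the ordered list $(q_1,\ldots,q_r)$ is forced.

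The main obstacle I expect is the Smith normal form reduction: verifying that the Euclidean-style reduction terminates (the usual argument tracks the minimum absolute value of nonzero entries, which strictly decreases after each step that fails to clear a row/column) and that divisibility $d_i \mid d_{i+1}$ can be arranged. Everything else — the freeness of subgroups of $\Z^N$, the CRT splitting, and the invariant extraction via $G \otimes \Q$ and $p^\ell G / p^{\ell+1} G$ — is fairly mechanical once the reduction is in hand.
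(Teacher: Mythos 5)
The paper does not prove this theorem at all: it is stated as Theorem~\ref{fund} with the remark that it ``is a standard result in algebra and can be [found] in any introductory text,'' and is used as a black box. So there is no proof of record to compare against. Your outline is a correct version of the standard textbook argument: surjection $\Z^N \twoheadrightarrow G$, freeness of subgroups of $\Z^N$, Smith normal form to obtain the invariant-factor decomposition, CRT to split each $\Z/d_i$ into prime-power cyclic factors, and uniqueness read off from $\dim_\Q(G \otimes \Q)$ and the $\F_p$-dimensions of $p^{\ell-1}G_p/p^{\ell}G_p$. The only place you would need to be careful when fleshing this out is the Smith normal form step: the naive Euclidean clearing can reintroduce nonzero entries in a previously cleared row/column, so the termination argument really does need the ``strictly decrease the minimal nonzero absolute value'' bookkeeping you mention, plus one extra pass to enforce $d_i \mid d_{i+1}$ (if $d_i \nmid d_{i+1}$, add column $i+1$ to column $i$ and re-reduce). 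Those are standard details, and everything else in the sketch is sound.
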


We will use the following two definitions regarding sumsets.
\begin{define}[Difference Set] For $A \subseteq \Z_m^n$ define its difference set as $A-A{=}\{a-a'|a,a'\in A\}$.
\end{define}

\begin{define}[$rep_S\left(x\right)$]For any $S \subseteq \Z_m^n$ and $x \in \Z_m^n$, $rep_S\left(x\right)$ is the number of different representations of $x$ as an expression of the form $s-s'$ where $s,s' \in S$.
\end{define}
\bnote{changed the above two defn by making minus}

Next, we define the $\epsilon$-spectrum of $B$ with respect to a primitive root of unity of order $m$.
\begin{define}[Spectrum]For $B \subseteq \Z_m^n$, and $\epsilon \in [0,1]$, the $\epsilon$-spectrum of $B$ with respect to $\omega$, a primitive root of unity of order $m$, is the set \[\Spec_{\epsilon}\left(B\right)=\left\{x \in \mathbb{Z}_m^n:\left|\E_{b \sim B}\left[\omega^{\langle x,b \rangle}\right]\right| \geq \epsilon\right\}.\]
When $\omega$ is implicit in the context, we will drop the phrase "with respect to $\omega$".
\end{define}

We start by proving the following lemma which is a generalization of  a lemma from \cite{SLZ}.
\begin{lem}\label{lem-szring}Let  $A,B \subseteq \mathbb{Z}_m^n$ be sets. Let $\omega$ be a primitive root of unity of order $m$. If $A \subseteq \Spec_{\epsilon}\left(B\right)$, then there exist sets $A'\subseteq A, B'\subseteq B$, such that $|A'| \geq |A|/m$ and $|B'| \geq \epsilon^2\frac{|A|}{|span\left(A\right)|}|B|$ such that $D_{\omega}\left(A',B'\right)=1$.
\end{lem}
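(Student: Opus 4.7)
The plan is to pigeonhole twice: first over $B$ via the homomorphism $b \mapsto \phi_b$ on $H := \spana(A)$, then over $A$ via the value of $\phi_b$. For each $b \in B$ the map $\phi_b : H \to \Z_m$ defined by $\phi_b(h) := \langle h,b\rangle$ is a well-defined group homomorphism, and $B$ partitions as $B = \bigsqcup_\phi B_\phi$ where $B_\phi := \{b \in B : \phi_b = \phi\}$ and $\phi$ ranges over $G := \mathrm{Hom}(H,\Z_m)$; set $q_\phi := |B_\phi|/|B|$. If some $\phi^*$ has $q_{\phi^*} \geq \eps^2 |A|/|H|$, then taking $B' := B_{\phi^*}$ forces $\langle a, b\rangle \equiv \phi^*(a) \pmod m$ for every $a \in A$ and $b \in B'$. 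A second pigeonhole over the at-most-$m$ values of $\phi^*(A) \subseteq \Z_m$ then supplies some $c^* \in \Z_m$ and $A' := A \cap (\phi^*)^{-1}(c^*)$ with $|A'| \geq |A|/m$, on which $\omega^{\langle a,b\rangle} = \omega^{c^*}$ is constant, i.e.\ $D_\omega(A',B') = 1$.

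The real work is producing such a heavy $\phi^*$. For $a \in A \subseteq H$, the definition of $\phi_b$ rewrites the hypothesis $a \in \Spec_{\eps}(B)$ as $|\sum_\phi q_\phi \omega^{\phi(a)}|^2 \geq \eps^2$. Summing over $A$ and expanding yields
\[
\eps^2 |A| \;\leq\; \sum_{a \in A} \left| \sum_\phi q_\phi \omega^{\phi(a)} \right|^2 \;=\; q^{T} M q, \qquad M_{\phi,\phi'} := \sum_{a \in A} \omega^{(\phi-\phi')(a)}.
\]
The matrix $M$ on $G$ is a convolution operator since $M_{\phi,\phi'}$ depends only on $\phi-\phi'$, so it is simultaneously diagonalized by the characters of $G$. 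Using the standard fact $|G| = |H|$ (since $H \leq \Z_m^n$, every invariant factor $d_i$ in the primary decomposition of $H$ divides $m$, and $|\mathrm{Hom}(\Z_{d_i},\Z_m)| = d_i$), the characters of $G$ are exactly $\{\phi \mapsto \omega^{\phi(h)} : h \in H\}$.

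A direct computation using character orthogonality on $G$ shows that the eigenvalue of $M$ associated to the character indexed by $h \in H$ is $|H| \cdot 1_A(h)$. In particular $\lambda_{\max}(M) = |H|$, giving
\[
\eps^2 |A| \;\leq\; q^{T} M q \;\leq\; |H|\sum_\phi q_\phi^2 \;\leq\; |H| \cdot \max_\phi q_\phi,
\]
where the last inequality uses $\sum_\phi q_\phi = 1$. Thus $\max_\phi q_\phi \geq \eps^2 |A|/|H|$, which supplies the $\phi^*$ needed and completes the plan.

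The main technical point -- more the heart of the proof than an obstacle -- is the eigenvalue identification for this convolution matrix $M$, where the subset $A$ of $H$ encodes its spectrum in the cleanest possible way. Everything else is routine: setting up the homomorphism partition of $B$, applying the two pigeonholes, and tracking the final bounds $|A'| \geq |A|/m$ and $|B'| \geq \eps^2 |A|/|\spana(A)| \cdot |B|$.
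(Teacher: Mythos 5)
Your proof is correct and follows essentially the same route as the paper: bucket $B$ by the homomorphism (equivalently, the tuple of values on a generating set) that each $b$ induces on $H = \spana(A)$, use the spectrum hypothesis together with a second-moment/Parseval argument to find a heavy bucket $B'$, and then pigeonhole over the at-most-$m$ resulting values on $A$ to carve out $A'$. The paper writes out the Parseval step concretely, choosing a cyclic decomposition $W = \spana(A) \cong \prod_i \Z_{q_i}$ via the structure theorem and computing $\sum_{a\in W}\left|h(a)\right|^2 = |W|\sum_{\beta}\mu(\beta)^2$ through explicit character orthogonality, whereas you package the identical computation as the top-eigenvalue bound $\lambda_{\max}(M) = |H|$ of the convolution operator on $\mathrm{Hom}(H,\Z_m)$ -- a cosmetic difference, not a different argument.
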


\begin{proof}
We start by setting up some notations. Let $W=\spana\left(A\right)$ be the subgroup of $\Z_m^n$ spanned by $A$. By Theorem \ref{fund}, there exists an isomorphism $\tau:\prod_{i=1}^r\Z_{q_i}\rightarrow W$. Let $\C=\prod_{i=1}^r\Z_{q_i}$ and note that we can think of elements of $\C$ as vectors with integer coordinates where the $i$'th coordinate is in $\Z_{q_i}$. Let  $e_1,e_2, \cdots e_r \in \C$ where $e_i$ is the vector that has $1$ in the $i$'th coordinate and $0$ everywhere else. Given $x \in \C$, $\exists \alpha_1, \cdots \alpha_r$, with $\alpha_i \in \Z_{q_i}$ such that \[x=\sum_{i=1}^r\alpha_ie_i.\]
Then $\tau\left(x\right)=\sum_{i=1}^r\alpha_i\tau\left(e_i\right)$. Let $v_i=\tau\left(e_i\right)$ for $1 \leq i \leq r$. We can think of the $v_i$'s as a basis of $W$. Therefore, for $\alpha=\left(\alpha_1, \alpha_2, \cdots \alpha_r\right) \in \C$ we have $\tau\left(\alpha\right)=\sum_{i=1}^r\alpha_iv_i$. Let $$\Theta=\{\left(\beta_1, \cdots \beta_r\right) \in \Z_m^r|\exists u \in \Z_m^n \text{ such that } \forall i, \beta_i=\langle v_i,u\rangle\}.$$ 

\begin{claim}\label{clm-basis}For $1 \leq i \leq r$, $q_iv_i=0^n\ \left(mod \ m\right)$.\end{claim}
\begin{proof}Let $x =0^r \in \C$. Now $\tau\left(x\right)=0^n\ \left(mod\ m\right)$. Note that $x$ can also be written as $x=q_ie_i$. Applying $\tau$ on both sides, we get $\tau\left(x\right)=q_iv_i$. Thus, $q_iv_i=0^n\ \left(mod\ m\right)$.
\end{proof}

\begin{claim}\label{clm-beta}For $\beta \in \Theta$, $1 \leq i \leq r$, $q_i\beta_i=0\ \left(mod \ m\right)$.\end{claim}
\begin{proof}As $\beta \in \Theta$, there is a $u \in \Z_m^n$ such that $\forall i$, $\beta_i=\langle v_i,u\rangle$. Then, $q_i\beta_i=q_i\langle v_i,u\rangle=0\ \left(mod\ m\right)$ by Claim \ref{clm-basis}.
\end{proof}

For $\alpha \in \C, \beta \in \Theta$ we define their inner product $\langle \alpha, \beta \rangle \in \Z_m$ by 
considering $\alpha_i \in \{0,\ldots,q_i-1\}, \beta_i \in \{0,\ldots,m-1\}$, taking the inner product over the integers and then reducing the result modulo $m$. This is indeed an inner product by Claim~\ref{clm-beta}.

\begin{claim}\label{clm-zero}Given $\beta \in \Theta \setminus \{0\}$, \[\sum_{a \in W}\omega^{\langle \tau^{-1}\left(a\right),\beta \rangle}=\sum_{\alpha \in \C}\omega^{\langle \alpha,\beta \rangle}=0.\]
\end{claim}
\begin{proof}
Let $\beta_i \neq 0$. Then $\sum_{\alpha \in C}\omega^{\langle \alpha,\beta \rangle}=0$ whenever $\sum_{j=0}^{q_i-1}\omega^{j\beta_i}=0$. Now, $\sum_{j=0}^{q_i-1}\omega^{j\beta_i}=\frac{\omega^{q_i\beta_i}-1}{\omega^{\beta_i}-1}$. This is well defined because $\omega$ is of order $m$ and $\beta_i\neq 0$. The claim now follows from Claim \ref{clm-beta} which makes the expression zero.
\end{proof}

With the above setup in place, we can now proceed with the proof of Lemma~\ref{lem-szring}.
For $\beta \in \Theta$,
  define \[S_{\beta}=\{x \in \Z_m^n|\langle v_i,x\rangle=\beta_i, 1\leq i\leq r\}.\] Denoting $\mu\left(\beta\right)=\Pr_{b \in B}[b \in S_{\beta}]$, we observe that $\cup_{\beta \in \Theta}\left(B\cap S_{\beta}\right)=B$. Hence, $\sum_{\beta \in \Theta}\mu\left(\beta\right)=1$. For $a \in W$, define $h\left(a\right)=\E_{b \in B}\left[\omega^{\langle a,b \rangle}\right]$. If $a=\sum_{i=1}^r\alpha_iv_i$ then
\begin{eqnarray*}
h\left(a\right)&=&\E_{b \in B}\left[\omega^{\langle a,b \rangle}\right]\\
&=&\E_{b \in B}\left[\omega^{\langle \sum_{i=1}^r\alpha_iv_i,b \rangle}\right]\\
&=&\sum_{\beta \in \Theta}\mu\left(\beta\right)\omega^{\langle \alpha ,\beta \rangle}\\
&=&\sum_{\beta \in \Theta}\mu\left(\beta\right)\omega^{\langle \tau^{-1}\left(a\right) ,\beta \rangle}.
\end{eqnarray*}
We will prove upper and lower bounds for the sum $\sum_{a \in A}\left|h\left(a\right)\right|^2$. On the one hand,
\begin{eqnarray*}
\sum_{a \in A}\left|h\left(a\right)\right|^2&\geq&\frac{1}{|A|}\left(\sum_{a \in A}\left|h\left(a\right)\right|\right)^2 \ \text{(Cauchy Scwartz inequality)}\\
&\geq&\frac{1}{|A|}\left(\sum_{a \in A}\epsilon\right)^2 \ \text{($A \subseteq \Spec_{\epsilon}\left(B\right)$ implies $\left|h\left(a\right)\right|\geq \epsilon$ )}\\
&\geq & |A|\epsilon^2.
\end{eqnarray*}
On the other hand,
\begin{eqnarray*}
\sum_{a \in A}\left|h\left(a\right)\right|^2&\leq  & \sum_{a \in W}\left|h\left(a\right)\right|^2\\
&=&\sum_{a \in W}\sum_{\beta \in \Theta, \beta' \in \Theta'}\mu\left(\beta\right)\mu\left(\beta'\right)\omega^{\langle \tau^{-1}\left(a\right),\beta - \beta' \rangle}\\
&=&\sum_{\beta,\beta'  \in \Theta,}\mu\left(\beta\right)\mu\left(\beta'\right)\sum_{a \in W}\omega^{\langle \tau^{-1}\left(a\right),\beta - \beta' \rangle}\\
&=&\sum_{\beta \in \Theta}\mu\left(\beta\right)^2|W| \qquad \text{(Claim \ref{clm-zero})}\\
&\le& |W| \max_{\beta \in \Theta}\{\mu\left(\beta\right)\}.
\end{eqnarray*}

Now, combining the upper and lower bounds,
$\max_{\beta \in \Theta}\{\mu\left(\beta\right)\}  \geq \epsilon^2\frac{|A|}{|W|}$. Thus, there exists a $\beta \in \Theta$ such that $\mu\left(\beta\right)\geq \epsilon^2\frac{|A|}{|W|}$. This means that the subset $B'=B \cap S_{\beta}$ is of size at least $\epsilon^2\frac{|A|}{|W|}|B|$. Now, any $a \in A$ can be written as $a=\sum_{i=1}^r \alpha_i v_i$, and for $b \in B'$, the inner product $\langle a,b \rangle=\langle \alpha,\beta \rangle$ is independent of $b$. Now, for $i \in [m]$, let $A_i \subseteq A$ be such that for $a \in A_i$, for all $b \in B'$, $\langle a,b \rangle=\langle \alpha,\beta \rangle =i$. Now there exists some $A_i$, call it $A'$, of size at least $|A|/m$ such that $\langle a',b' \rangle=i$ for all $a'\in A',b' \in B'$, that is, $D_{\omega}\left(A',B'\right)=1$ and this proves the lemma.
\end{proof}

We continue along the lines of \cite{SLZ} and prove the following lemma.
\begin{lem}\label{lem-slzlemma}Suppose the twin free lists $U, V \subseteq \mathbb{Z}_m^n$ satisfy $D_{\omega}\left(U, V\right) \geq \epsilon$̨ where $\omega$ is a primitive root of unity of order $m$. Also, let $\rank\left(P_{U,V}\right)=r$. Then assuming Conjecture \ref{pfr}, for every $K > 1$, letting $\ell = r/\log_m K$, there exist lists $U'\subseteq U, V'\subseteq V$ such that $D_{\omega}\left(U', V'\right) = 1$, and
$|U'| \geq  poly_m\left(\frac{\left(\epsilon/2\right)^{2^{\ell}}}{rK}\right)\left(2mr\right)^{-\ell}|U|$,	$|V'| \geq poly_m\left(\frac{\left(\epsilon/2\right)^{2^{\ell}}}{rK}\right)m^{-\ell}|V|$.
\end{lem}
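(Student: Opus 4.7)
The plan is to adapt the proof of the $\F_2$ analogue from \cite{SLZ} to the ring $\Z_m$, using our generalized spectrum-to-rectangle Lemma~\ref{lem-szring} at the end. First I would pass from bias to spectrum: an averaging argument on the hypothesis $D_\omega(U,V) \geq \epsilon$ yields a sub-list $A_0 \subseteq U$ with $|A_0| \geq (\epsilon/2)|U|$ and $A_0 \subseteq \Spec_{\epsilon/2}(V)$, obtained by discarding those $u \in U$ whose individual bias against $V$ falls below $\epsilon/2$.

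Next I would run an iterative doubling procedure. Build a chain $A_0 \supseteq A_1 \supseteq \cdots$ of large sub-lists with $A_i \subseteq \Spec_{(\epsilon/2)^{2^i}}(V)$, using the standard fact that if $A \subseteq \Spec_\delta(V)$ then a large portion of $A-A$ lies in $\Spec_{\delta^2}(V)$ (obtained by expanding $|\E_v[\omega^{\ip{a-a'}{v}}]|^2$ and applying Cauchy--Schwarz). At each step I check the doubling: if $|A_i - A_i| \leq K|A_i|$ we halt; otherwise set $A_{i+1} \subseteq A_i - A_i$ to be a large subset retaining the spectrum property, losing at most a $\poly(m,r)$ factor in size. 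The procedure must terminate in at most $\ell = r/\log_m K$ steps: factoring $P_{U,V} = XY$ with $X\colon U\to\Z_m^r$ and $Y\colon V\to\Z_m^r$ via $\rank(P_{U,V}) = r$, inner products with $V$ effectively live in $\Z_m^r$, so any spectrum set has size at most $m^r$ modulo the annihilator of $V$. Consequently, doubling exceeding $K$ for more than $\ell$ iterations would force $|A_\ell| > K^\ell |A_0| > m^r$, a contradiction.

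Once we halt at some $i^* \leq \ell$ with doubling at most $K$, Conjecture~\ref{pfr} produces a subgroup $H \subseteq \Z_m^n$ with $|H| \leq |A_{i^*}|$ such that $A_{i^*}$ is covered by $\lambda' \leq K^{c_m}$ translates of $H$. Pigeonhole selects one translate $a+H$ containing a sub-list $\tilde A \subseteq A_{i^*}$ of size at least $|A_{i^*}|/\lambda'$. Then $\spana(\tilde A) \subseteq \langle a\rangle + H$, so $|\spana(\tilde A)| \leq m|H| \leq m|A_{i^*}|$. Applying Lemma~\ref{lem-szring} to $\tilde A \subseteq \Spec_{(\epsilon/2)^{2^\ell}}(V)$ and $V$ produces $U' \subseteq \tilde A \subseteq U$ and $V' \subseteq V$ with $D_\omega(U',V') = 1$, $|U'| \geq |\tilde A|/m$, and $|V'| \geq ((\epsilon/2)^{2^\ell})^2 \cdot (|\tilde A|/(m|H|)) \cdot |V|$. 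Chasing constants through the initial $\epsilon/2$ loss, the $\ell$ per-step $\poly(m,r)$ losses, the $K^{c_m}$ pigeonhole factor, and Lemma~\ref{lem-szring}'s $1/m$ factor yields the stated $\poly_m((\epsilon/2)^{2^\ell}/(rK))(2mr)^{-\ell}$ bound on $|U'|$ and the $m^{-\ell}$ bound on $|V'|$.

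The main obstacle is designing the iteration so that the per-step size loss is a controlled $\poly(m,r)$ while the squared spectrum threshold is genuinely preserved; over a ring rather than a field, this forces one to pass through the cyclic decomposition of Theorem~\ref{fund}, much as in the proof of Lemma~\ref{lem-szring}. A secondary subtlety is that PFR outputs a subgroup, but we land inside a coset $a+H$; the cost of this passage is the extra multiplicative factor of $m$ in $|\spana(\tilde A)|$, which slots cleanly into the final estimate.
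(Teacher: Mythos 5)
Your forward iteration is the right idea, but the proposal has a fundamental gap: after halting at level $i^*$, the set $\tilde A$ you feed into Lemma~\ref{lem-szring} lives inside $A_{i^*}$, which is a subset of the iterated difference set $(A_{i^*-1}-A_{i^*-1})$, not of $U$ (or of $A$). So the conclusion ``$U' \subseteq \tilde A \subseteq U$'' is false, and the rectangle you produce does not sit inside the original lists. The paper's proof handles exactly this by a second, \emph{descending} phase: having found a monochromatic rectangle $(A_\ell', B_\ell')$ at the deepest level, it lifts it back to level $0$ one step at a time via a graph argument --- take the graph on $A_{i-1}$ with edges $\{a,a'\}$ whenever $a-a' \in A_i'$, use the edge count from the construction of $A_i$ to extract a large connected component, fix an anchor, partition $B_i'$ by its inner product with the anchor, and propagate the constant inner product through the component. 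Each such step costs roughly $1/(2mr)$ on the $A$ side and $1/m$ on the $B$ side, which is precisely where the $(2mr)^{-\ell}$ and $m^{-\ell}$ factors in the statement come from; in your accounting these factors are attributed to ``per-step $\poly(m,r)$ losses'' in the forward pass, but the forward pass does not actually incur a geometric $(2mr)^{-\ell}$ penalty, so your bookkeeping does not reproduce the claimed bound.

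Two secondary issues. First, your halting criterion $|A_i-A_i|\le K|A_i|$ is not what the iteration actually delivers. What you get is a \emph{statistical} condition, $\Pr_{a,a'\in A_i}[a-a'\in A_{i+1}]\ge \eps_{i+1}/r$, together with $|A_{i+1}|\le K|A_i|$; converting this to a genuine small-doubling subset of $A_i$ requires the Balog--Szemer\'edi--Gowers theorem (Theorem~\ref{BSG}), which you skip before invoking Conjecture~\ref{pfr}. Second, to control $|A_{i+1}|$ from below in terms of $|A_{i-1}|$ (needed both for the termination count and for the edge count in the descent), the paper buckets differences by their number of representations $rep_{A_{i-1}}(a-a')$ and keeps the heaviest bucket; simply taking ``a large subset retaining the spectrum property'' does not give the inequality $|A_i|\ge \frac{\eps_i}{m^{j_i+1}r}|A_{i-1}|^2$ (Claim~\ref{clm-claim24}) that both the termination argument and the connected-component lower bound rely on.
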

\begin{proof}
Let $U=\left(u_1, \cdots u_t\right)$ and $V=\left(v_1, \cdots v_t\right)$. Since $P_{U,V}$ has rank $r$ there exists a $t\times r$ matrix $U_M$ and $r \times t$ matrix $V_M$ so that $U_MV_M=P_{U,V}$. Thus if we let  $A$ denote the rows of $U_M$ and $B$ denote the columns of $V_M$, then $A,B \subseteq \Z_m^r$. The proof does not care about the order of elements and hence we now consider $A,B$ which are sets. Note that $|A|=|B|=t$  and if $A=\left(a_1, \cdots a_t\right)$ and $B=\left(b_1, \cdots b_t\right)$ then $\langle a_i,b_j \rangle=\langle u_i, v_j \rangle$ for $1\leq i,j \leq t$. Thus, $D_{\omega}\left(U, V\right) \geq \epsilon$ implies $D_{\omega}\left(A, B\right) \geq \epsilon$.
Following \cite{SLZ} consider a sequence of constants $\epsilon_1=\epsilon/2$,  $\epsilon_2=\epsilon_1^2/2$, $\epsilon_3=\epsilon_2^2/2$, $\cdots$
and a sequence of sets $A_1=A \cap \Spec_{\epsilon_1}\left(B\right)$ and $A_{i} \subseteq \left(A_{i-1}-A_{i-1}\right) \cap \Spec_{\epsilon_i}\left(B\right)$. The way the subsets are chosen for $A_i$'s will be made precise shortly.
Now by the pigeonhole principle, there exists a minimal index $\ell\leq r/\log_m K$ such that $|A_{\ell+1}| \leq K|A_{\ell}|$. \bnote{cardinality sign outside $A_{l+1}$ fixed} To give a precise definition of the $A_i$'s , we have the following. Let $A_1=A \cap \Spec_{\epsilon/2}\left(B\right)$. For $i\geq 2$,  assuming $\epsilon_{i-1}$ and $A_{i-1}$, let $j_i$ be the the integer index which maximizes the size of \[\{\left(a,a'\right) \in A_{i-1}\times A_{i-1}|a-a' \in \Spec_{\epsilon_i}\left(B\right) \ and \  m^{j_i}\leq rep_{A_{i-1}}\left(a-a'\right) \leq m^{j_i+1}\},\]
and let \[A_i=\{a-a' | a,a' \in A_{i-1}, a-a' \in \Spec_{\epsilon_i}\left(B\right) \ and \ m^{j_i}\leq rep_{A_{i-1}}\left(a-a'\right) \leq m^{j_i+1}\}.\]

\begin{claim}\label{clm-claim24}For $i=1$ we have $|A_1| \geq \left(\epsilon̨/2\right)|A|$. For $i>1$ we have $\Pr_{a,a' \in A_{i-1}}[a-a' \in A_i] \geq \epsilon_i/r$ and additionally
$|A_i| \geq \frac{\epsilon_i}{m^{j_i+1}r}|A_{i-1}|^2$.
\end{claim}
\begin{proof}
The case of $i=1$ follows from Markov inequality. For larger $i$, we show that \[\Pr_{a,a' \in A_{i-1}}[a-a' \in \Spec_{\epsilon_i}\left(B\right)] \geq \epsilon_i.\]
This follows from the fact that \[\epsilon_{i-1}^2 \leq  \left|\E_{b \in B, a \in A_{i-1}}\left[\omega^{\langle a,b \rangle}\right]\right|^2 \leq \E_{b \in B}\left|\E_{a \in A_{i-1}}\left[\omega^{\langle a,b \rangle}\right]\right|^2 = \E_{a,a'\in A_{i-1}}\E_{b \in B}\left[\omega^{\langle a-a',b \rangle}\right].\] Now applying Markov inequality we get that $\Pr_{a,a' \in A_{i-1}}[a-a' \in \Spec_{\epsilon_i}(B)] \geq \epsilon_i=\epsilon_{i-1}^2/2$. Now selecting $j_i$ as in the construction gives that $\Pr_{a,a' \in A_{i-1}}[a-a' \in A_i] \geq \epsilon_i/r$.
\bnote{2 changes above, one, changed the last inequality to equality in the chain of inequalities. two, changed the phrase 'atleast .. fraction sum to an element...' to the corresponding difference statement}

To prove the second part of the lemma, observe that by the above, we have shown that $$\left|\{\left(a,a'\right) \in A_{i-1}\times A_{i-1}|a-a' \in A_i\}\right| \geq \frac{\epsilon_i}{r}|A_{i-1}|^2.$$ Also, by construction \bnote{spelling fixed} of $A_i$, since every $x \in A_i$ can be represented as $x=a-a'$ with $a,a' \in A_{i-1}$ in at most $m^{j_i+1}$ ways, we have that $|A_i| \geq \frac{\epsilon_i}{m^{j_i+1}r}\left|A_{i-1}\right|^2$.
This completes the proof.
\end{proof}

Below we will use  the following additive-combinatorics lemma.
\begin{thm}[\cite{BalSze,Gow}]\label{BSG}There exists an absolute constant $c>0$ such that the following holds. Let $A$ be any arbitrary subset of an abelian group $G$. Let $S \subseteq G$ be such that $|S|\leq C|A|$. If $\Pr_{a,a' \in A}[a-a' \in S] \geq 1/C$, then there exists a subset $A' \subseteq A$ such that $|A'|\geq \frac{|A|}{C^c}$ and $|A'-A'| \leq C^c|A|$.
\end{thm}
\bnote{fixed BS and G citations, BS was pointing to a different paper here and G had the wrong year}

Now we come to the main claim.
\begin{claim} For $i=\ell,\ell-1, \cdots 1$ there exist subsets $A_i' \subseteq A_i$, $B_i' \subseteq B$ such that $D_{\omega}\left(A_i',B_i'\right)=1$ and \[|A_i'|\geq \alpha_i|A_i|\]
and
\[|B_i'|\geq \beta_i|B|\]
where $\alpha_i=poly_m\left(\frac{\epsilon_{\ell+1}}{rK}\right)\left(2mr\right)^{-\left(\ell-i\right)}\left(\prod_{j=i}^{\ell}\epsilon_{j+1}\right), \beta_i=poly_m\left(\frac{\epsilon_{\ell+1}}{rK}\right)m^{-\left(\ell-i\right)}$
\end{claim}
\bnote{to be fixed: the product in alpha should run from i to l-1 because technically in the base case we just have a poly $\eps_{l+1}$ contribution. The current summation gives another $\eps_{l+1}$ factor which is unnecessary. We can fix it and thus the calculation in the next page for i=1 or leave it unchanged, as the extra $\eps_{l+1}$ can be absorbed in the poly($\eps_{l+1}$) term}
\noindent{\bf Base Case.}
The base case of $i=\ell$ is proved by an application of the Balog-Szemeredi-Gowers theorem followed by Conjecture \ref{pfr} followed by Lemma \ref{lem-szring}. To see this, we know that $|A_{\ell+1}| \leq K|A_{\ell}|$ and $\Pr_{a,a' \in A_{\ell}}[a-a' \in A_{\ell+1}] \geq \epsilon_{\ell+1}/r$. Hence by Theorem \ref{BSG} (with $C=\frac{rK}{\epsilon_{\ell+1}}$), there exists a set $A_{\ell}''\subseteq A_{\ell}$ such that $|A_{\ell}''| \geq poly\left(\frac{\epsilon_{\ell+1}}{rK}\right)\left|A_{\ell}\right|$ and $|A_{\ell}'' - A_{\ell}''|\leq poly\left(\frac{rK}{\epsilon_{\ell+1}}\right)\left|A_{\ell}''\right|$. Now by Conjecture \ref{pfr} applied to $A_{\ell}''$, there exists a set $A_{\ell}'''\subseteq A_{\ell}''$ such that $\left|A_{\ell}'''\right| \geq poly_m\left(\frac{\epsilon_{\ell+1}}{rK}\right)\left|A_{\ell}''\right|$ and $\left|\spana\left(A_{\ell}'''\right)\right|\leq m|A_{\ell}''| = poly_m\left(\frac{rK}{\epsilon_{\ell+1}}\right)\left|A_{\ell}'''\right|$. (Note the extra factor of $m$ in front of $|A_{\ell}''|$ as we get a coset of size $|A_{\ell}''|$ and its span incurs an additional factor of $m$)
Also, as $A_{\ell}''' \in \Spec_{\epsilon_{\ell}}\left(B\right)$, applying Lemma \ref{lem-szring} to $A_{\ell}'''$ and $B$, we get $A_{\ell}' \subseteq A_{\ell}'''$ and $B_{\ell}' \subseteq B$ such that $D_{\omega}\left(A_{\ell}',B_{\ell}'\right)=1$, $|A_{\ell}'|\geq poly_m\left(\frac{\epsilon_{{\ell}+1}}{rK}\right)\left|A_{\ell}\right|$ and $|B_{\ell}'|\geq poly_m\left(\frac{\epsilon_{\ell+1}}{rK}\right)\left|B\right|$.
This completes the base case. Let us come to the inductive case.
\qed

\noindent{\bf Inductive Case.}
Suppose the statement is true for $i$ and let us argue for $i-1$. Let $G=\left(A_{i-1},E\right)$ be the graph whose vertices are the elements in $A_{i-1}$ and $\left(a,a'\right)$ is an edge if $a-a' \in A_i'$. Now,
\begin{eqnarray*}
|E|&\geq&m^{j_i}|A_i'|\\
&\geq & m^{j_i}\alpha_i|A_i| \ \ \text{(inductive hypothesis)}\\
&\geq & m^{j_i}\alpha_i\frac{\epsilon_i}{m^{j_i+1}r}|A_{i-1}|^2 \ \ \text{(Claim \ref{clm-claim24})}\\
&=&2\alpha_{i-1}|A_{i-1}|^2
\end{eqnarray*}
Now the graph has at least $2\alpha_{i-1}|A_{i-1}|^2$ edges and $|A_{i-1}|$ vertices and  therefore has a connected component of size at least $2\alpha_{i-1}|A_{i-1}|$ vertices. Let us call these vertices $A_{i-1}''$. Let $\tilde{a}$ be any element of $A_{i-1}''$. Partition $B_i'$ into $B_{i,j}'$ for $0 \leq j \leq m-1$ such that all elements of $B_{i,j}'$ have inner product $j$ with $\tilde{a}$. Let $B_{i-1}'=B_{i,j_1}$ be the largest of them. Note that $|B_{i-1}'| \geq |B_i'|/m$.
By assumption $D_{\omega}\left(A_i',B_i'\right)=1$. Hence, $D_{\omega}\left(A_i',B_{i-1}'\right)=1$. Therefore, for some $j_2$,  $\langle a,b \rangle = j_2$ for all $a \in A_i'$ and $b \in B_{i-1}'$. Now, in the connected component obtained above, whenever $a,a' \in A_{i-1}''$ are neighbours, $\langle a-a',b \rangle=j_2$ for $b \in B_{i-1}'$. Thus, starting with $\tilde{a}$ as the anchor and propagating throughout the connected component, we can classify the vertices in $A_{i-1}''$ based on the inner product it has with all elements in $B_{i-1}'$, which is either $j_1$ or $j_2-j_1$. Pick the larger set and call it $A_{i-1}'$. Hence, $D_{\omega}\left(A_{i-1}',B_{i-1}'\right)=1$. Thus, $|A_{i-1}'|\geq |A_{i-1}''|/2 \geq  \alpha_{i-1}|A_{i-1}|$ and $B_{i-1}'\geq |B_i'|/m \geq \frac{\beta_i}{m}|B|= \beta_{i-1}|B|$. This completes the inductive case.
\qed

Put $i=1$ in the above claim. Also observe that as $\epsilon_{j+1}=\epsilon^{2^j}/2^{2^j-1}\geq \left(\epsilon/2\right)^{2^j}$. Thus, $\epsilon_{\ell+1}\geq \left(\epsilon/2\right)^{2^{\ell}}$ and $\prod_{j=1}^{\ell}\epsilon_{j+1}\geq \left(\epsilon/2\right)^{2^{\ell+1}}$ there exist $A' \subseteq A_1 \subseteq A$, $B' \subseteq B$, such that $|A'| \geq  poly\left(\frac{\left(\epsilon/2\right)^{2^{\ell}}}{rK}\right)\left(2mr\right)^{-\ell}\left|A\right|$ and $ |B'| \geq poly\left(\frac{\left(\epsilon/2\right)^{2^{\ell}}}{rK}\right)m^{-\ell}\left|B\right|$. Observing that the lower bounds grow weaker with increasing $\ell$,and that $\ell\leq  \ell'=r/\log_m K$ we get $|A'| \geq  poly\left(\frac{\left(\epsilon/2\right)^{2^{\ell'}}}{rK}\right)\left(2mr\right)^{-\ell'}\left|A\right|$ and $|B'| \geq poly\left(\frac{\left(\epsilon/2\right)^{2^{\ell'}}}{rK}\right)m^{-\ell'}\left|B\right|$ where $\ell'=r/\log_m K$. Therefore, if we take the list $U'\subseteq U$ (corresponding to $A'\subseteq A$)  and $V'\subseteq V$ (corresponding to $B'\subseteq B$) then as $\langle a_i,b_j \rangle=\langle u_i, v_j \rangle$ the statement of the lemma follows. This completes the proof of Lemma~\ref{lem-slzlemma}
\end{proof}

We can now prove the Sub-Matrix Lemma, Lemma~\ref{lem-slz}.
\paragraph{Proof of Lemma~\ref{lem-slz}:}
Set $K=s^{4r/\log r}, \ell=\frac{\log r}{4}, \epsilon=1/2m^{3/2}$ while applying Lemma \ref{lem-slzlemma} over $\Z_s$. We get $|A'|\geq \delta_s |A|$, $|B'|\geq \delta_s |B|$ where \begin{eqnarray*}
\delta_s&=&poly_s\left(\frac{1}{m^{r^{1/4}}}\right)2^{-c_1\left(s\right)r/\log r}\ \ \text{(for some constant $c_1\left(s\right)$ depending only on $s$)}\\
&\geq & poly_m\left(\frac{1}{m^{r^{1/4}}}\right)2^{-c_1\left(s\right)r/\log r}
\end{eqnarray*}
Now let $c_2\left(m\right)=\max_{s|m, s\geq 2}\{c_1\left(s\right)\}$. Thus, $\delta_s \geq poly_m\left(\frac{1}{m^{r^{1/4}}}\right)2^{-c_2\left(m\right)r/\log r}\geq 2^{-c\left(m\right)r/\log r}$ for some constant $c$ that depends only on $m$. \qed

\bibliographystyle{alpha}
\bibliography{LDC_PIR_biblio}


\appendix

\section{A Calculation}
\begin{claim}\label{clm-sum}Let $b>1, n \geq 2$ be arbitrary integers. Then \[\sum_{i=1}^{\lfloor \log_b n\rfloor}\frac{1}{b^{i-1}\log \left(n/b^{i-1}\right)}\leq f(b)/\log n\] where $f(b)=\frac{10b}{b-1}+\frac{10}{\log b}+\frac{16e}{\log^2b}$. When $b=4/3$, $f(b)<300$.
\end{claim}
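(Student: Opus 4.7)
The plan is to split the sum at the index where $b^{i-1}$ crosses $\sqrt{n}$, since the structure of each term changes qualitatively there: below the threshold, $\log(n/b^{i-1})$ is comparable to $\log n$ while $1/b^{i-1}$ provides geometric decay; above it, $\log(n/b^{i-1})$ shrinks but $1/b^{i-1}$ is already tiny.

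First I would let $k = \lfloor\log_b n\rfloor$ and $i^\star = \lfloor\log_b\sqrt{n}\rfloor + 1$. For the initial segment $i \leq i^\star$, the condition $b^{i-1}\leq\sqrt{n}$ gives $\log(n/b^{i-1})\geq \frac{1}{2}\log n$, so each term is bounded by $2/(b^{i-1}\log n)$, and summing the geometric series yields $\frac{2b/(b-1)}{\log n}$, which sits comfortably inside the $\frac{10b}{b-1}/\log n$ part of $f(b)/\log n$.

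Next, for the tail $i > i^\star$, I would reverse-index via $j = k - i + 1$. From $n \in [b^k, b^{k+1})$ one obtains $\log(n/b^{i-1}) \geq j\log b$ and $b^{-(i-1)} \leq b^{j+1}/n$, so the tail is dominated by $\frac{b}{n\log b}\sum_{j=1}^{J}\frac{b^j}{j}$ with $J\leq(\log n)/(2\log b)$. To bound the partial series I would write it as $b^J\sum_{\ell=0}^{J-1}\frac{b^{-\ell}}{J-\ell}$ and split the inner sum at $\ell = J/2$: the lower half uses $J-\ell\geq J/2$ (giving the geometric bound $\frac{2b}{J(b-1)}$) and the upper half uses $b^{-\ell}\leq b^{-J/2}$ together with the harmonic-sum bound (giving $(\ln J)/b^{J/2}$). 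Substituting back together with $b^J\leq b\sqrt{n}$, $b^{J/2}\leq\sqrt{b}\,n^{1/4}$, and $J\geq(\log n)/(2\log b) - O(1)$, the tail is controlled by an expression of the form $A(b)/\sqrt{n} + B(b)(\ln\log n)/n^{3/4}$.

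Finally, I would convert these decay-in-$n$ bounds into $1/\log n$ form via the elementary calculus inequalities $\sup_{n\geq 2}(\log n)/\sqrt{n} = 2/(e\ln 2)$ (attained at $n=e^2$) and the uniform boundedness of $(\log n)(\ln\log n)/\sqrt{n}$. The first inequality is the source of the factor $e$ in the coefficient $16e/\log^2 b$; the second factor of $1/\log b$ comes from the identity $1/(J\log b)\approx 2/\log n$ upon substituting the lower bound on $J$. The main obstacle is entirely bookkeeping: all individual inequalities are elementary, but organising the various $O(1)$ corrections and slack so that the three resulting contributions fit underneath $\frac{10b}{b-1}$, $\frac{10}{\log b}$, and $\frac{16e}{\log^2 b}$ respectively requires careful arithmetic with the split point and the calculus bounds above.
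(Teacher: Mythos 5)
Your decomposition at $b^{i-1}\approx\sqrt n$ is a genuinely different route from the paper's, which splits the sum after only the first $\lfloor\log_b\log n\rfloor$ terms. Beyond that, the tail arguments diverge completely: the paper never reverse-indexes or analyzes partial sums of $\sum b^j/j$. Instead it exploits the monotonicity of $x\mapsto x\log(n/x)$ for $x\le n/e$ to bound every tail term pointwise by its value at the smallest $x$ in the tail (namely $b^{i-1}\approx\log n$), yielding $10/\log^2 n$ per term; the handful of terms with $b^{i-1}\in(n/e,\,n/b]$, which only exist when $b<e$, are bounded crudely by $e/(n\log b)$. Multiplying these per-term bounds by the trivial count $\log_b n$ immediately gives the three pieces of $f(b)$. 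This is considerably shorter than your route, which needs the $\ell=J/2$ split inside the partial sum, a geometric bound and a harmonic bound, and then a conversion of $n^{-1/2}$ and $n^{-3/4}$ decay back into $1/\log n$.

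One concrete thing in your write-up is wrong, and it is worth flagging because it bears on whether your ``careful arithmetic'' would land on the stated $f(b)$: you attribute the factor of $e$ in $16e/\log^2 b$ to $\sup_{n\ge2}(\log n)/\sqrt n=2/(e\ln2)$. But there the $e$ is in the \emph{denominator}; in the paper the $e$ in $16e/\log^2 b$ is in the \emph{numerator}, and it comes from the critical point of $x\log(n/x)$ at $x=n/e$, i.e.\ from the bound $1/(b^{i-1}\log(n/b^{i-1}))\le e/(n\log b)$ on the last few tail terms. Your approach will not reproduce that $e$ for that reason; whatever constant you end up with will have a different shape. That is not fatal (any bound $\le f(b)/\log n$ works), but it means your plan of matching contributions one-for-one to the three terms of $f(b)$ is not how it will shake out. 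You also need to be careful using $\sup_{n\ge2}(\log n)/\sqrt n$ globally: for $b$ large, $A(b)=2b^2/\bigl((b-1)\log b\bigr)$ grows like $b/\log b$ while $f(b)$ is bounded, so the naive conversion overshoots; you are saved only because a nonempty tail forces $n\ge b^2$ and hence restricts the relevant supremum of $(\log n)/\sqrt n$ to the decreasing regime when $b>e$. That constraint needs to be made explicit. (Also minor: $n=e^2$ is not an integer, so the supremum over integer $n\ge2$ is attained at $n=7$ or $8$, not at $e^2$.) With these repairs the approach should go through, but it is genuinely more work than the paper's argument.
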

\begin{proof}
We divide the summation into two parts. The first part consists of the first  $\lfloor\log_b \log n \rfloor$ terms and the second part consists of the remaining terms.

In the first part, $\frac{1}{b^{i-1}\log n/b^{i-1}}\leq \frac{1}{b^{i-1}0.1\log n}$ whenever $n \geq 2$ and hence the first part summation is bounded from above by $\frac{10b}{(b-1)\log n}$.

In the second part of the summation, we use the monotonicity of $x \log \left(n/x\right)$. The function increases with $x$ as long as $x\leq n/e$. Therefore, for terms with $b^{i-1}\leq n/e$, the maximum value of each summand is given by substituting $i=\log_b\log n$ which gives an upper bound of $\frac{1}{0.1\log^2 n}$. The remaining terms corresponding to $n/b \geq b^{i-1} > n/e$ (note that these extra terms arise only if $b<e$) can be analysed as follows. Observe that each summand in that range can be upperbounded by $\frac{e}{n\log b}$. Therefore, we have at most $\log_b n$ terms each at most $\frac{10}{\log^2 n}+\frac{e}{n\log b}$. Thus, the second part of the summation is bounded from above by $\log_b n\left(\frac{10}{\log^2 n}+\frac{e}{n\log b}\right)$.
\begin{eqnarray*}
\log_b n\left(\frac{10}{\log^2 n}+\frac{e}{n\log b}\right)&=&\frac{10}{\log b}\frac{1}{\log n}+\frac{e}{\log^2b}\frac{\log n}{n}\\
&\leq &\frac{10}{\log b}\frac{1}{\log n}+\frac{e}{\log^2b}\frac{16}{\log n}\ \ \text{(as $16n\geq \log^2 n$)}\\
&= &\left(\frac{10}{\log b}+\frac{16e}{\log^2b}\right)\frac{1}{\log n}
\end{eqnarray*}
This completes the proof.
\end{proof}

\section{Proofs of Two Probability Lemmas}\label{sec-distlem}

\subsection{Proof of Lemma~\ref{lem-stat}}

Let $f:\mathbb{Z}_m \rightarrow \mathbb{C}$ be any function. Recall that, for $0 \leq j \leq m-1$, the Fourier coefficients of $f$ are given by \[\hat{f}\left(j\right)=\frac{1}{m}\sum_{x \in \Z_m} f\left(x\right)\exp\left(-2\pi i jx/m\right).\]
 It is well known that the set of functions $\{\exp\left(2\pi i jx/m\right)\}_{0 \leq j \leq m-1}$ is an orthonormal basis for all functions of the above form, and that $f$ can be expressed as \[f\left(x\right)=\sum_{j=0}^{m-1}\hat{f}\left(j\right)\exp\left(2\pi i jx/m\right).\]
Let us consider $f:\Z_m \rightarrow [0,1]$. Thus, Parseval's identity states that \[\sum_{j=0}^{m-1}\left|\hat{f}\left(j\right)\right|^2 =\frac{1}{m}\sum_{x \in \Z_m}f\left(x\right)^2  \leq 1.\]
 Observe that as $\U_m\left(x\right)=1/m$ is the constant function, $\hat{\U_m}\left(j\right)=0$ for $j \neq 0$. Also, for any distribution $\mu$, $\hat{\mu}\left(0\right)=1/m$. Now
\begin{eqnarray*}
2\epsilon&\leq & \sum_{x \in \Z_m}\left|\mu\left(x\right)-\U_m\left(x\right)\right|\\
&\leq & \sqrt{m}\sqrt{\sum_{x \in \Z_m}\left|\mu\left(x\right)-\U_m\left(x\right)\right|^2} \qquad \text{(Cauchy Schwartz Inequality)}\\
&= & m\sqrt{\sum_{i=0}^{m-1}\left|\left(\hat{\mu}\left(i\right)-\hat{\U_m}\left(i\right)\right)\right|^2}  \\
&= & m\sqrt{\sum_{i=1}^{m-1}\left|\hat{\mu}\left(i\right)\right|^2}  \qquad \text{($\hat{\U_m}\left(j\right)=0$ for $j \neq 0$, and $\hat{\mu}\left(0\right)=\hat{\U_m}\left(0\right)=1/m$)}\\
&\leq & m^{3/2}\max_{i \neq 0}\left\{\left|\hat{\mu}\left(i\right)\right|\right\}.
\end{eqnarray*}
Thus, for some $j \neq 0$, we have \[\left|\hat{\mu}\left(j\right)\right| \geq \frac{2\epsilon}{m^{3/2}}.\]
Observe that
\begin{eqnarray*}
\hat{\mu}\left(j\right)&=&\frac{1}{m}\sum_{x \in \Z_m} \mu\left(x\right)\exp\left(-2\pi i jx/m\right) \\
&=&\frac{1}{m}\E_{x \sim \mu}\left[\exp\left(-2\pi i jx/m\right)\right]\\
&=&\frac{1}{m}\E_{x \sim \mu}\left[\left(\omega^{m-j}\right)^{x}\right].
\end{eqnarray*}
Let $j'=m-j$. Thus, $\left|\hat{\mu}\left(j\right)\right| \geq \frac{2\epsilon}{m^{3/2}}$ implies that \[\left|\E_{x \sim \mu}\left[\left(\omega^{j'}\right)^{x}\right]\right| \geq \frac{2\epsilon}{\sqrt{m}}.\]
This concludes the proof. \qed

\subsection{Proof of Lemma \ref{lem-cp}}

\begin{eqnarray*}
\epsilon^2  &\leq & \left|\left(\sum_{w_1,w_2\in \Z_m^n} \mu_1\left(w_1\right)\mu_2\left(w_2\right)\left[\omega^{\langle w_1,w_2 \rangle}\right]\right)\right|^2\\
&\leq & \left(\sum_{w_1\in \Z_m^n} \mu_1\left(w_1\right)\left|\sum_{w_2 \in \Z_m^n}\mu_2\left(w_2\right)\left[\omega^{\langle w_1,w_2 \rangle}\right]\right|\right)^2\\
&\leq&\left(\sum_{w_1\in \Z_m^n}\mu_1\left(w_1\right)^2\right)\left(\sum_{w_1\in \Z_m^n} \left|\left(\sum_{w_2\in \Z_m^n} \mu_2\left(w_2\right)\left[\omega^{\langle w_1,w_2 \rangle}\right]\right)\right|^2\right)\\
&=&\left(\cp\left(\mu_1\right)\right)\left(\sum_{w_1 \in \Z_m^n} \sum_{w_2',w_2'' \in \Z_m^n} \mu_2\left(w_2'\right)\mu_2\left(w_2''\right)\left[\omega^{\langle w_1,w_2'-w_2'' \rangle}\right]\right)\\
&=&\left(\cp\left(\mu_1\right)\right)\left(\sum_{w \in \Z_m^n} \mu_2\left(w\right)^2m^n\right)\\
&=&m^n \cdot \cp\left(\mu_1\right)\cp\left(\mu_2\right)
\end{eqnarray*}
\qed

\end{document}